\newcolumntype{P}[1]{>{\centering\arraybackslash}p{#1}}
\newcolumntype{M}[1]{>{\centering\arraybackslash}m{#1}}
\newtheorem{theorem}{Theorem}[section]
\newtheorem{corollary}[theorem]{Corollary}
\newtheorem{lemma}[theorem]{Lemma}
\newtheorem{proposition}[theorem]{Proposition}
\newtheorem{claim}[theorem]{Claim}
\newtheorem{definition}[theorem]{Definition}
\newtheorem{question}[theorem]{Question}
\newtheorem{observation}[theorem]{Observation}
\newtheorem{example}[theorem]{Example}
\newcommand{\set}[1]{\left\{ #1 \right\}}
\newcommand{\union}{\cup}
\newcommand{\intersect}{\cap}
\renewcommand{\hat}{\widehat}
\renewcommand{\tilde}{\widetilde}
\renewcommand{\bar}{\overline}
\newcommand{\bvec}[1]{\boldsymbol{ #1 }}
\newcommand{\ce}{\texttt{C} }
\newcommand{\nc}{\texttt{NC}}
\newcommand{\aug}{\texttt{AUG}}
\newcommand{\crs}{\mathtt{CRS}}
\newcommand{\greedy}{\mathtt{Greedy}}
\newcommand{\crsM}{\texttt{CRS-BaseMatching}}
\newcommand{\match}{\langle G,p \rangle}
\DeclareMathOperator{\opt}{\textbf{OPT}}
\DeclareMathOperator{\optsol}{opt}
\DeclareMathOperator{\sopt}{\mathcal{D}_{opt}}
\DeclareMathOperator{\asopt}{\tilde{\mathcal{D}}_{opt}}
\DeclareMathOperator{\poly}{poly}
\DeclareMathOperator{\spn}{\textbf{Span}}
\newcommand{\Exp}{\mathbb E}
\newcommand{\rank}{\textbf{Rank}}
\let\vec\mathbf
\def\min{\qopname\relax n{min}}
\def\max{\qopname\relax n{max}}
\def\argmax{\qopname\relax n{argmax}}
\def\Pr{\qopname\relax n{\mathbf{Pr}}}
\def\Ex{\qopname\relax n{\mathbb{E}}}
\newcommand{\RR}{\mathbb{R}}
\newcommand{\RRp}{\RR_+}
\def\A{\mathcal{A}}
\def\D{\mathcal{D}}
\def\E{\mathbb{E}}
\def\I{\mathcal{I}}
\def\M{\mathcal{M}}
\def\O{\mathcal{O}}
\def\P{\mathcal{P}}
\def\Q{\mathcal{Q}}
\def\R{\mathbb{R}}
\def\S{\mathcal{S}}
\def\T{\mathcal{T}}
\def\eps{\epsilon}
\def\sse{\subseteq}
\renewcommand{\vec}{\mathbf}
\newcommand{\spack}{\langle E,\I, f,p \rangle}
\newcommand{\eat}[1]{}
\newenvironment{lp*}{\begin{equation*}  \begin{array}{lll}}{\end{array}\end{equation*}}
\title{On Sparsification of Stochastic Packing Problems}
 \author{
     Shaddin Dughmi\\ %\thanks{Supported by NSF CAREER Award CCF-1350900 and NSF Grant CCF-2009060.} \\
     Department of Computer Science\\
     University of Southern California\\
     {\tt shaddin@usc.edu}
 \and
     Yusuf Hakan Kalayci\\ %\thanks{Supported by NSF Grants CCF-1350900 and CCF-2009060.} \\
     Department of Computer Science\\
     University of Southern California\\
     {\tt kalayci@usc.edu}
 \and
     Neel Patel\\%\thanks{Supported by NSF Grants CCF-1350900 and CCF-2009060.} \\
     Department of Computer Science\\
     University of Southern California\\
     {\tt neelbpat@usc.edu}
 }
\date{}
\begin{document}

    \maketitle

    \begin{abstract}
     Motivated by recent progress on stochastic matching with few queries, we embark on a systematic study of the sparsification of stochastic packing problems more generally. Specifically, we consider packing problems where elements are independently active with a given probability~$p$, and ask whether one can (non-adaptively) compute a ``sparse'' set of elements  guaranteed to contain an approximately optimal solution to the realized (active) subproblem.  We seek structural and algorithmic results of broad applicability to such problems.
Our focus is on  computing sparse sets containing on the order of $d$ feasible solutions to the packing problem, where~$d$ is linear or at most polynomial in $\frac{1}{p}$. Crucially, we require $d$ to be independent of the number of elements, or any parameter related to the ``size'' of the packing problem.  We refer to $d$ as the ``degree'' of the sparsifier, as is consistent with graph theoretic degree in the special case of matching.

First, we exhibit a generic sparsifier of degree $\frac{1}{p}$ based on contention resolution. This sparsifier's approximation ratio matches the best contention resolution scheme (CRS) for any packing problem for additive objectives, and approximately matches the best monotone CRS for submodular objectives. Second, we embark on outperforming this generic sparsifier for additive optimization over matroids and their intersections, as well as weighted matching. These improved sparsifiers feature different algorithmic and analytic approaches, and have degree linear in $\frac{1}{p}$. In the case of a single matroid, our sparsifier tends to the optimal solution. In the case of weighted matching, we combine our contention-resolution-based sparsifier with technical approaches of prior work to improve the state of the art ratio from 0.501 to 0.536. Third, we examine packing problems with submodular objectives. We show that even the simplest such problems do not admit sparsifiers approaching optimality. We then outperform our generic sparsifier for some special cases with submodular objectives.
    \end{abstract}

% test'
\newpage
\tableofcontents
\newpage

 \section{Introduction}

 % Agenda is to look at sparsified packing problems more generally, motivated by prior work on matching. General techniques, bounds, philosophical questions.
 Our starting point for this paper is the beautiful line of recent work on variants of the stochastic matching problem, seeking approximate solutions with limited query access to the (stochastic) data~\cite{blum2015ignorance,assadi2017stochastic, assadi2018towards,assadi2019stochastic, behnezhad2019stochastic,behnezhad2020unweighted,behnezhad2020weighted,behnezhad2022vertexcover}. Notably, many of the algorithms in these works are non-adaptive, and can therefore be interpreted as ``sparsifiers'' for the stochastic problem. These works feature powerful new algorithmic and analytic sparsification techniques of possibly more general interest, suggesting that effective sparsifiers might exist well beyond matching and closely related problems.
 % Despite these works featuring powerful new algorithmic and analytic sparsification techniques of possibly more general interest, we make the observation that surprisingly little is known beyond matching and closely related problems.
 Our goal in this paper is to coalesce a broader agenda on the sparsification of combinatorial stochastic optimization problems more generally, beginning with the natural and broad class of packing problems. We ask, and make progress on, the fundamental questions: For which stochastic packing problems is effective sparsification possible? What are the algorithmic techniques and blueprints which are broadly applicable? What are the barriers to progress?

 %Basic definitions. We look at packing problems with additive and submodular objectives
Concretely, we examine stochastic packing problems (SPPs) of the following (fairly general) form. We are given a set system $(E,\I)$, where $E$ is a finite set of \emph{elements} and $\I \sse 2^E$ is a downwards-closed family of \emph{feasible sets} (often also referred to as \emph{independent sets}, in particular for matroids). Also given is an objective function $f: 2^E \to \RRp$, which we assume to be either additive (a.k.a. modular) or submodular. The stochastic uncertainty is described by a given probability~$p \in [0,1]$:  We assume that each element of $E$ is \emph{active}, i.e., viable for being selected, independently with probability $p$. The goal of the SPP is to select a feasible set of active elements maximizing the objective function.

When the set $R$ of active elements is given, or can be queried without restriction, this reduces to non-stochastic optimization for the induced subproblem on $R$. We refer to the output of such an omniscient [approximation] algorithm as an [approximate] \emph{stochastic optimum solution}. We are instead concerned with algorithms that approximate the stochastic optimum by querying the activity status of only a small, a.k.a. ``sparse'', set of elements $Q \sse E$. In particular, as in much of the prior work we require the queried set $Q$ to be chosen non-adaptively. Such algorithms can equivalently be thought of as factoring into two steps: First, a \emph{sparsification algorithm} (or \emph{sparsifier} for short) computes a (possibly random) set of elements $Q$. Second, we learn $R \intersect Q$, and an [approximate] optimization algorithm is applied to the (now fully-specified) subproblem induced by $R \intersect Q$.  Since the second (optimization) step is familiar and well-studied, our focus is on the first step, namely sparsification.

We evaluate a sparsifier by two quantities. The first quantity is a familiar one, namely its \emph{approximation ratio}. Specifically, a sparsifier is $\alpha$-approximate if it guarantees an $\alpha$-approximation to the stochastic optimum solution when combined with a suitable algorithm in the second (optimization) step. The second quantity is a measure of the ``sparsity'' of the set $Q$ selected by the sparsifier. We say our sparsifier is of \emph{degree} $d$ if it guarantees $\Ex [ |Q| ] \leq d \cdot r$, where $r = \max\{|S| : S \in \I\}$ is the \emph{rank} of the set system $(E,\I)$.  Intuitively, the sparsification degree refers to the level of ``contingency'' or ``redundancy'' in the sparsified instance, relative to the size of maximal feasible solutions. Loosely speaking, the degree of a sparsifier roughly measures ``how many'' feasible solutions are maintained to account for uncertainty in the problem. Somewhat fortuitously, our definition of degree specializes to the (average)  graph-theoretic degree in the special case of matching, lending consistency with prior work on stochastic matching with few queries.

We study sparsifiers whose degree admits an upperbound that is independent of the size of the system;  The degree bound can not depend on the number of elements or the rank of the set system, for example.  We focus especially on the ``polynomial regime'', where the degree is restricted to be at most polynomial in~$\frac{1}{p}$. We pursue sparsifiers which are constant-approximate, or in the best case $(1-\epsilon)$-approximate for arbitrarily small $\epsilon > 0$.

% %%Results
 
%  % Baseline: Use CRS. Only needs stochastic optimum oracle. Matches CRS for additive.  Within 1-1/e for submod info theoretic, and (1-1/e)^2 polytime due to loss in optimum oracle. 

%  % Few existing works beat correlation gap. We endeavor to beat correlation gap for natural packing set systems. We succeed for matroids and their intersections and additive. For matroids, our algorithm is easy but analysis necessarily isn't (can't argue through flats?). For matroid intersections same.

%  % We then look more closely at submod objectives. We obtain an impossiblity of beating 1-1/e even for coverage and uniform matroid (contrast with additive matroid), and improve our polytime submod results in the special case of coverage functions.
\subsection*{Results and Techniques}

We begin with the observation that a degree of at least $\frac{1}{p}$ is necessary for constant-approximate sparsification, even for the simplest of packing problems: a rank one matroid and the unweighted additive objective. We then establish a ``baseline'' of possibility for all stochastic packing problems, through a generic sparsifier with this same degree $\frac{1}{p}$. This sparsifier is simple: it computes (or estimates) the marginals $\set{q_e}_{e \in E}$ of the stochastic optimum solution, and outputs a set $Q$ which includes each element $e$ independently with probability $\frac{q_e}{p}$. For SPPs with an additive objective, we show that this sparsifier's approximation ratio matches the balance ratio of the best contention resolution scheme (CRS)\footnote{This is equal to the set system's \emph{correlation gap}, as shown by \cite{chekuri2014submodular}.} for the set system. When the objective is submodular, we approximately match the balance ratio of the best \emph{monotone} CRS up to a factor of $1- \frac{1}{e}$.  We note that contention resolution is only used as a proof tool to certify our sparsifier's approximation guarantee, and is not invoked algorithmically. In settings where the marginals $\set{q_e}_{e \in E}$ are intractable to compute, this sparsifier can be made computationally efficient by resorting to approximation, in which case its approximation ratio degrades in the expected manner. This generic result yields constant-approximate sparsifiers of degree $\frac{1}{p}$ for a large variety of set systems for which contention resolution has been studied, including matroids and their intersections. %\sdcomment{No need to single out matching here}.

Next, we  embark on ``beating'' this contention resolution baseline for natural SPPs. We succeed at doing so for additive (weighted) optimization over matroids, matroid intersections, and matchings. For a single matroid, we derive a simple greedy sparsifier which is  $(1-\epsilon)$-approximate and has  degree $\frac{1}{p} \cdot \log(1/\epsilon)$. This sparsifier repeatedly adds a maximum weight independent set of the matroid to the sparse set $Q$, and removes it from the matroid, until the desired degree is reached. Though our sparsifier is simple, its analysis is (we believe necessarily) less so.

For matroid intersections, we first argue that adaptations of our single-matroid sparsifier cannot succeed, due to feasible sets not ``combining well'' as they do in the case of a single matroid. Instead, our sparsifier for matroid intersections repeatedly samples the stochastic optimum solution and adds it to the sparse set $Q$, for a degree of $O(\frac{1}{p\eps} \cdot \log(1/\epsilon))$. The approximation ratio of our sparsifier for the intersection of $k$ matroids is $\frac{1-\eps}{k+ \frac{1}{k+1}}$, which beats the best known bound on the correlation gap of $\frac{1}{k+1}$ \cite{adamczyk2018random}. The analysis of this sparsifier is again nontrivial, and utilizes basis exchange maps.

For both matroids and matroid intersections, we note that analysis techniques employed by prior work on matching do not appear to suffice. In particular, prior work on matching often employs concentration arguments on the active degree of matroid ``flats'' containing an element; this is sufficient in the case of matching, since each element is in at most two binding flats (one for each partition matroid). For general matroids, such concentration arguments fail to bound the degree in a manner independent of the number of elements, necessitating alternative proof approaches like ours.

For general (non-bipartite) matching, we augment our contention-resolution-baseline sparsifier with samples from the stochastic optimum solution, for a total degree of $O( \frac 1 p)$. We show that the samples from the stochastic optimum combine well with our baseline sparsifier. We obtain an approximation ratio which is a function of the (as yet not fully known) correlation gap of the matching polytope. This function exceeds the identity function everywhere, implying that our sparsifier strictly improves on the contention resolution baseline. Plugging in the best known lowerbound of $0.474$ on the correlation gap from \cite{calum-improved-matching-crs}, we guarantee that our sparsifier is $0.536$ approximate. This improves on the state of the art in the polynomial regime, namely the $0.501$-approximate sparsifier of degree $\poly\left( \frac 1  p\right)$ due to \cite{behnezhad2019stochastic}.

%For general  matching constraints, our sparsifier builds on the baseline sparsifier of degree $\frac 1 p$. Our sparsifier for the matching constraints repeatedly samples the stochastic optimum solution and adds it to the baseline sparsifier for a degree of $O\left(\frac 1 p\right)$. We show that the samples from stochastic optimum solutions augment well with our baseline sparsifer. The approximation ratio of our sparsifier for matchings is an increasing function of the balanced ratio of the best CRS for the matching polytope. Combining with a $0.474$-balanced CRS for the matching constraints due to \cite{calum-improved-matching-crs}, we show that our sparsifier is  $0.536$-approximate with degree $O\left(\frac 1 p\right)$ which improves the state of the art $0.501$ approximate  $\poly\left( \frac 1  p\right)  $ degree sparsifier due to \cite{behnezhad2019stochastic}.  }

%\npcomment{Add discussion about our matching Result}

Finally, we further examine stochastic packing problems with submodular objectives. Our $(1-\epsilon)$-approximate sparsifier for weighted matroid optimization might tempt one to conjecture a similar result for submodular optimization over simple enough set systems. However, we show by way of an information-theoretic impossibility result that no sparsifier with degree bound independent of the number of elements can beat $(1-1/e)$, even for optimizing a coverage function subject to a uniform matroid constraint. We complement this impossibility result with algorithmic sparsification results for optimizing coverage functions over matroids, improving over the guarantees provided by our baseline generic sparsifier.

\subsection*{Additional Discussion of Related Work}

The exploration of sparsifying SPPs was initiated by \cite{blum2015ignorance}, who focus on the  unweighted stochastic matching problem. They achieve a $\frac {1-\epsilon} 2$-approximate sparsifier with degree $O\left( \frac 1 {p^{1/ \epsilon}}\right)$. This problem has since been studied extensively in a series of works \cite{assadi2019stochastic,behnezhad2019stochastic,behnezhad2020unweighted,behnezhad2020weighted} which attempt to beat the benchmark set by \cite{blum2015ignorance}. In the ``polynomial-degree regime'', the state-of-the-art sparsifier for unweighted stochastic matching is a $0.66$-approximation due to \cite{assadi2018towards}. Recent work by \cite{behnezhad2022vertexcover} improves this approximation to $\frac{e}{e+1}$ for unweighted bipartite matching. For weighted stochastic matching in the polynomial-degree regime, the current best known sparsifier is a $0.501$-approximation due to  \cite{behnezhad2019stochastic}. Going beyond polynomial degree, \cite{behnezhad2020unweighted} constructed a $(1-\epsilon)$-approximate sparsifier with degree $\exp(\exp(\exp(1/p)))$ for  unweighted stochastic matching. Later on, a similar $(1-\epsilon)$-approximate sparsifier is constructed for the weighted problem by \cite{behnezhad2020weighted}.  The sparsifiers designed for the stochastic matching problems rely heavily on structural properties particular to matching. Our techniques, on the other hand, are targeted at more general packing problems.
% Therefore, sparsifiers for general packing systems require different algorithmic tools and analysis techniques.

To the best of our knowledge, the work of \cite{maehara-adaptive,maehara2019submodular} stands alone in directly studying the sparsification of SPPs beyond matching. In \cite{maehara-adaptive}, they proposed a general framework for solving stochastic packing integer programs. As a corollary of their techniques, they obtain non-adaptive sparsifiers for several additive SPPs. However, the degree of their sparsification algorithms intrinsically depends on the number of elements in settings where a single element may be in an exponential number of binding constraints (as is the case for matroids). Our work, in contrast, proposes several algorithmic techniques that yield approximate sparsifiers with degree independent of the number of elements. 

Also related is the work of \cite{Goemans04coveringminimum}, which studies the covering analogue of our question for matroids. They show how to construct a set of size $O(\frac{\rank}{p}\log \frac{\rank}{\epsilon} )$ which is guaranteed to contain a minimum-weight base of the matroid with high probability. This implicitly implies an $O(\frac 1 p\log \frac{\rank}{\epsilon} )$-degree sparsifier for weighted stochastic packing on matroids. Their analysis is tight for the covering setting, and it appears nontrivial to adapt their techniques for the packing setting in order to remove the degree's dependence on the rank.  We compare our results for additive SPPs with prior work in Table~\ref{tab:additive}.

%\npedit{In the other line of work, \cite{Goemans04coveringminimum,Goemans06stochasticcovering} studies the stochastic covering problems a close relative of SPPs. Given a set of elements $V$, stochastic covering problems aim to cover (random) target set $X$ by a sparse subset of elements $Q$ with high probability, i.e. $X\subseteq Q$ w.h.p. 
%They show that if all elements of a matroid $\M=(E,\I)$ are available with probability $p$ independently, then there exists a set of elements $Q$ of size $O(\frac{\rank}{p} \log \rank)$ that contains the minimum weight base with high probability \cite{Goemans06stochasticcovering}. Their result implicitly implies $O(\frac 1 p\log \frac{\rank}{\epsilon} )$-degree sparsifier for stochastic packing problem for matroids with additive objectives. However, their analysis relies heavily on matroid properties and does not extend beyond. In addition, the analysis is tight for the query set size and does not further improve the sparsification degree even for the expected value of the objective. Therefore, new techniques are required to achieve sparsification with a degree independent of the size of the matroid. We compare our results with existing sparsifier algorithms in Table~\ref{tab:additive}.}

The manuscript \cite{maehara2019submodular} proposes sparsifiers for SPPs with a monotone submodular objectives. However, their sparsification algorithms are intrinsically adaptive in nature. %In addition, they do not admit immediate non-adaptive correspondence.
To the best of our knowledge, ours is the first work that analyzes SPPs with submodular objectives in the non-adaptive setting. %\sdedit{We compare our results for  \sdedit{submodular SPPs} with prior work in Table~\ref{tab:submodular}.}
We summarize our results for  submodular SPPs in Table~\ref{tab:submodular}.

\begin{table}[h]
		\centering
    \begin{tabular}{ | M{2cm} || M{1.8cm} | M{4cm} || M{2cm} | M{3cm} |} 
        \hline
        \rule{0pt}{20pt}
         & \multicolumn{2}{c||}{\textbf{Previous Results}} & \multicolumn{2}{c|}{\textbf{This Work}}\\
        \hline
        \textbf{Constraint} &\textbf{Approx. Ratio} & \textbf{Sparsification Degree} &  \textbf{Approx. Ratio} & \textbf{Sparsification Degree} 
        \\ [0.5ex] 
        \hline\hline
        % \rule{0pt}{20pt}
        Matroid
         & $1-\epsilon$ \cite{Goemans04coveringminimum} & $O\left(\frac 1 p \log \frac{\rank}{\epsilon} \right)$  & $1-\epsilon$ & $\frac{1}{p} \cdot \log \left(\frac{1}{\epsilon}\right)$\\ 
        % \cline{1-5}
		\hline

        % Matroid
        %  & $\frac{1-\epsilon}{2}$ \cite{maehara-adaptive} & $O\left(\frac W p \log\left(\frac{n}{\epsilon}\right)\cdot \frac{1}{\epsilon p}\right)$  & $1-\epsilon$ & $\frac{1}{p} \cdot \log \left(\frac{1}{\epsilon}\right)$\\ 
        % % \cline{1-5}
		% \hline
        % \rule{0pt}{10pt}
        $2$-Matroid Intersection & $\frac{1-\epsilon}{2}$ \cite{maehara-adaptive} & $O \left(\frac W p \log\left(\frac{n}{\epsilon}\right) \frac{1}{\epsilon p} \right)$ & $(1-\epsilon)\cdot\frac{3}{7}$ & $\frac{1}{\eps \cdot p} \cdot \log \left(\frac{1}{\epsilon}\right)$\\
        % \cline{1-5}
		\hline
        % \rule{0pt}{10pt}
        $k$-Matroid Intersection & $\frac {1-\epsilon} {2k}$ \cite{maehara-adaptive} & $O \left(\frac W p\log n \log \left( \frac n \epsilon \right) \frac{k}{\epsilon^3}\right)$ & $\frac{1-\epsilon}{k+\frac{1}{k+1}}$ & $\frac{1}{\eps \cdot p} \cdot \log \left(\frac{1}{\epsilon}\right)$\\
        % \cline{2-5}
        	\hline

        % \rule{0pt}{10pt}
       General Matching & $1-\eps$ \cite{behnezhad2020weighted}  & $O \left(\exp(\exp (\exp ( 1/p))) \right) $ & $0.536$ & $O\left(\frac{1}{p}\right)$\\
        % \cline{2-5}
        \hline 

        General Matching & $0.501$ \cite{behnezhad2019stochastic}&  $O\left(\frac 1 p\right)$ & $0.536$ & $O\left(\frac{1}{p}\right)$\\
		\hline
		
		% Unweighted Bipartite Matching & $\frac {e}{e+1}$ \cite{behnezhad2022vertexcover} & $\poly\left(\frac{\log(1/p)}{p}\right)$ &  &  \\
        % \hline
        % % \rule{0pt}{10pt}
        % Weighted Bipartite Matching  & $0.501$  \cite{behnezhad2019stochastic} & $\poly\left(\frac{\log(1/p)}{p}\right)$ &  &  \\
		% \hline
        % % \rule{0pt}{10pt}
        % Weighted Bipartite Matching  & $1-\epsilon$  \cite{behnezhad2020weighted} & $\exp(\exp(\exp(1/p)))$ &  &  \\
		% \hline
    \end{tabular}
	\caption{\label{tab:additive} Summary of information theoretic sparsifiers for additive objectives. Here, $n$ is the number of elements and $W$ is the maximum element weight.}
\end{table}

{\normalfont
	\begin{table}[h]
		\centering
    \begin{tabular}{ | M{2.5cm} || M{3.5cm} | M{3cm} | M{4cm} |} 
		\hline
        \textbf{Constraint} &\textbf{Approximation Ratio} & \textbf{Sparsification Degree} & \textbf{Note}\\
		\hline
        \hline
        $r$-Uniform Matroid & $\left(1-\frac{1}{e}\right) \cdot \left(1-\frac{1}{\sqrt{r+3}}\right)$ & $ \frac{1}{p} $ & $\left(1-\frac{1}{e}\right)$ upperbound, Optimal when $r \rightarrow \infty$\\
		\hline
        Matroid & $\left(1-\frac{1}{e}\right)^2$ & $\frac{1}{p} $ & $1-\frac{1}{e}$ upperbound \\
        \hline
        % \rule{0pt}{10pt}
        $k$-Matroid Intersection & $\left(1-\frac{1}{e}\right) \cdot \frac{1}{k+1}$ & $ \frac{1}{p} $ &\\ [1ex] 
        \hline
	\end{tabular}
	\caption{\label{tab:submodular} Summary of information theoretic sparsifiers for monotone submodular objectives. All mentioned results are showed in this paper.}
	\end{table}
}
%%% Local Variables:
%%% mode: latex
%%% TeX-master: "main"
%%% End:

\section{Preliminaries}
\subsection{Matroid Theory}

We use standard definitions from matroid theory; for details see \cite{oxley06,welsh10}. A \emph{matroid} $\M=(E, \I)$ is a structure consisting of elements $E$ and family of \emph{independent sets} $\I \subseteq 2^E$ satisfying the three \emph{matroid axioms}. A \emph{weighted matroid} consists of a matroid $\M = (E,\I)$ together with weights $w \in \R^E$ on the elements. We use the standard matroid-theoretic notions of \emph{\emph{basis}, \emph{dependent set}, \emph{circuit} and \emph{flat}. As usual, we also define the \emph{rank} of a matroid to be the cardinality of its bases.}
	
Given a matroid $\M=(E, \I)$,  we use $\rank^\M$ to denote the \emph{rank function} of $\M$, where $\rank^\M(S)=\max \{|T| \mid T \subseteq S, T \in \I\}$.  
% \sddelete{For a weight vector $(w_e)_{e \in E}$ we define weighted rank function $\rank^\M_w(S) = \max \{\sum_{e \in T} w_e \mid T \subseteq S, T \in \I\}$.} \sdcomment{I don't think we use weighted rank fn? If we do then reinstate this sentence}
We also use $\spn^\M$ to denote the \emph{span function} of matroid $\M$, where $\spn^\M(S)=\{e \in E: \rank^\M(S \cup \{e\}) = \rank^\M(S)\}$. When clear from context, we sometimes omit the superscript $\M$ from these functions. We slightly abuse notation and use $\rank(\M) = \rank^\M(E)$ to denote the rank of the matroid $\M$. We say a set $S \sse E$ is \emph{full rank} in $\M$ if $\rank^\M(S) = \rank(\M)$; equivalently, if $\spn^\M(S) = E$.

We employ the standard matroid operations of \emph{deletion}, \emph{restriction}, and \emph{contraction}. For matroid $\M=(E,\I)$ and $S \sse E$, we use $\M \setminus S = (E \setminus S, \I \intersect 2^{E \setminus S})$ to denote \emph{deleting} $S$ from $\M$, and $M | S = (S, \I \intersect 2^S)$  to denote \emph{restricting} $\M$ to $S$.   We also use $\M / S$ to denote \emph{contracting} $\M$ by some $S \in \I$, where $\M / S$ is the matroid with elements $E \setminus S$ and independent sets $\set{ T \sse E \setminus S \mid S \union T \in \I}$. We use the fact that the rank function of the contracted matroid is given by \begin{equation} \rank^{\M/S}(T)=\rank^\M(T \cup S) - \rank^\M(S) = \rank^\M(T \cup S) - |S| \label{eq:rankcontract}.\end{equation}

% \sddelete{We also use the fact that contraction and deletion commute, as stated below. 
% % \sdcomment{We don't ever contract by a non-independent set $S$ right? That operation is well-defined, but its definition is a bit more cumbersome, so its best to stick to the current definition unless we have to use the more general version.}
% % \begin{definition}[Matroid contraction]
% %     Given a matroid $\M=(V, \I)$, and independent set $S \in \I$, \emph{contraction} of a matroid $\M$ by $S$, denoted as $\M / S = (V', \I')$ defines a matroid where $V'=V \setminus S$ and $\I'=\{I \setminus S \mid I \in \I \text{ and } S \subseteq I\}$. The rank function of the contracted matroid is $\rank^{\M/S}(T)=\rank^\M(T \cup S) - \rank^\M(S)$.
% % \end{definition}
% \begin{lemma}[See \cite{oxley06}]
%     \label{lem:cont-del-interchange}
%     Let $\M=(V,\I)$ be a matroid, and let $S$ and $T$ be two disjoint sets with $S \in \I$ and $T \subseteq V$. Then,
%     $$ \M/S\setminus T = \M \setminus T / S. $$
% \end{lemma}}

\subsection{Set Functions}
\label{sec:setfunctions}

We consider nonnegative set functions $f : 2^E \rightarrow \mathbb R_+$. Such a function is \emph{additive} (a.k.a. \emph{modular}) if it can be expressed as $f(S) = \sum_{e \in S} w_e$ for some (nonnegative) \emph{weights} $\vec w$.

More generally, we consider nonnegative set functions $f : 2^E \rightarrow \mathbb R_+$ that are \emph{submodular} and \emph{monotone non-decreasing} (or just \emph{monotone} for short). Submodularity entails that  $f(S\cup T)+f(S\cap T) \leq f(S)+f(T)$ for all $S,T \sse E$. Monotonicity entails that $f(S)\subseteq f(T)$ whenever $S\subseteq T$. 

We use the \emph{multilinear extension} $F:[0,1]^n\rightarrow \mathbb R_+$ of a set function $f$, as originally defined by \cite{calinescu2011maximizing}. Specifically, for each $\vec x \in [0,1]^n$ the mutlinear extension is expressed as follows.
\begin{equation*}
    F(\vec x) = \sum_{S\subseteq E} f(S)\prod_{e\in S}x_e \prod_{e'\in E\setminus S}(1-x_{e'})
\end{equation*}
We also reference the \emph{concave closure} $f^+:[0,1]^n\rightarrow \mathbb R_+$ of a set function $f$,  whose output at $\vec x$ is the maximum expected value of $f$ over distributions with marginals $x$.  More formally, 
\begin{equation*}
    f^+(\vec x) = \max\left\{ \sum_{S\subseteq E}\alpha_S f(S)\mid \sum_{S\subseteq E} \alpha_S\leq 1,\alpha_S\geq 0\ \wedge\ \forall e\in E:\sum_{S:e\in S} \alpha_S \leq x_e \right\}
\end{equation*}

\cite{vondrak2007submodularity} proved the following relation between $F(\cdot)$ and $f^+(\cdot)$ when $f$ is monotone submodular. This relationship has since been interpreted as bounding the \emph{correlation gap} of monotone submodular functions by $1-1/e$. 

\begin{theorem}[\cite{calinescu2011maximizing}]\label{thm:correlation_gap}
    For any nonnegative monotone submodular function $f$, \[F(\vec x)\geq \left(1 - \frac{1}{e} \right)\cdot f^+(\vec x) \] for all $\vec x \in [0,1]^n$.
\end{theorem}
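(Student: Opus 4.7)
The plan is to prove the stated bound via the standard continuous-greedy-style ODE argument that underlies the $1-1/e$ correlation gap for monotone submodular functions.

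\textbf{Setup and reduction.} First, invoke the definition of $f^+(\vec x)$ to obtain a distribution $\{\alpha_S\}_{S \subseteq E}$ with $\sum_S \alpha_S \le 1$, marginals $y_e := \sum_{S \ni e} \alpha_S \le x_e$, and $\sum_S \alpha_S f(S) = f^+(\vec x)$. Because $f$ is monotone, its multilinear extension $F$ is coordinatewise monotone, so $F(\vec x) \ge F(\vec y)$, and it suffices to prove $F(\vec y) \ge (1-1/e) \sum_S \alpha_S f(S)$. I would then introduce the trajectory $z_e(t) = 1 - e^{-t y_e}$ on $t \in [0,1]$, which satisfies $\dot z_e(t) = (1-z_e(t))\,y_e$, $\vec z(0) = \vec 0$, and $\vec z(1) \le \vec y$ componentwise (since $1 - e^{-u} \le u$). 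Monotonicity of $F$ then gives $F(\vec y) \ge F(\vec z(1))$, so it remains to lower bound $F(\vec z(1))$.

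\textbf{Differential inequality.} The main step is to establish
\[
\tfrac{d}{dt} F(\vec z(t)) \;\ge\; \sum_S \alpha_S f(S) \;-\; F(\vec z(t)).
\]
The chain rule gives $\tfrac{d}{dt} F(\vec z(t)) = \sum_e (1 - z_e(t))\,y_e\,\partial_e F(\vec z(t))$; substituting $y_e = \sum_{S \ni e} \alpha_S$ and swapping the order of summation, this equals $\sum_S \alpha_S \sum_{e \in S}(1 - z_e(t))\,\partial_e F(\vec z(t))$. For each fixed $S$, the inner sum is the directional derivative of $F$ at $\vec z(t)$ along the nonnegative direction pointing from $\vec z(t)$ to $\vec z(t) \vee \vec 1_S$. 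I would then invoke the fact that the multilinear extension of a submodular function is concave along nonnegative directions (the mixed partial $\partial_e \partial_{e'} F$ is the expectation of a submodular second difference, hence nonpositive), which implies the directional derivative is at least $F(\vec z(t) \vee \vec 1_S) - F(\vec z(t))$. By monotonicity of $F$, the latter is at least $f(S) - F(\vec z(t))$. Weighting by $\alpha_S$ and using $\sum_S \alpha_S \le 1$ together with $F \ge 0$ yields the desired differential inequality.

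\textbf{Integration.} Finally, Gr\"onwall-style integration of the inequality with initial condition $F(\vec z(0)) = f(\emptyset) \ge 0$ gives $F(\vec z(t)) \ge (1 - e^{-t}) f^+(\vec x)$, and evaluating at $t = 1$ completes the chain $F(\vec x) \ge F(\vec y) \ge F(\vec z(1)) \ge (1 - 1/e) f^+(\vec x)$. The most delicate step is the directional concavity of $F$ --- without it the per-set estimate $\sum_{e \in S}(1 - z_e)\,\partial_e F(\vec z) \ge f(S) - F(\vec z)$ cannot be pushed through --- but once this standard multilinear-extension fact is in hand, the rest of the argument is routine.
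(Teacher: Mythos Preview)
Your argument is correct: the reduction to the marginal vector $\vec y$, the trajectory $z_e(t)=1-e^{-t y_e}$, the differential inequality obtained via directional concavity of $F$ along nonnegative directions, and the Gr\"onwall integration all go through as you describe. This is exactly the standard continuous-greedy ODE proof from the references the paper cites.

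There is nothing to compare against, however: the paper does not supply its own proof of this theorem. It is stated in the preliminaries as a known result attributed to Vondr\'ak and Calinescu et al., and is invoked as a black box (e.g., in the analysis of the generic sparsifier for submodular objectives). So your proposal is not an alternative to the paper's proof but rather a faithful reconstruction of the classical argument the paper is citing.
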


% \sdcomment{I deleted the portion on probability. No need to state Chernoff bounds in a theory paper unless you need a specific form of the bounds that is not obvious. I don't think we need something like that.}

% \subsection{Probability}
% \begin{theorem}[Chernoff's Bound]
% Let $X=\sum_{i=1}^{n} X_{i}$ be the summation of independent bernoulli random variables where each $X_i=1$ with probability $p_i$. Let $\mu=\mathbb{E}(X)=\sum_{i=1}^{n} p_{i}$. Then
% \begin{enumerate}
%     \item $\Pr(X \geq(1+\delta) \mu) \leq e^{-\frac{\delta^{2}}{2+\delta} \mu}$ for all $\delta>0$;
%     \item $\Pr(X \leq(1-\delta) \mu) \leq e^{-\mu \delta^{2} / 2}$ for all $0<\delta<1$.
% \end{enumerate}
% \end{theorem}

%%% Local Variables:
%%% mode: latex
%%% TeX-master: "main"
%%% End:

\section{Problem Definition}
We consider packing problems of the form $\langle E,\I, f \rangle$ where $E$ is a ground set of \emph{elements} with cardinality $n$, $f:2^E\rightarrow \R_{\geq 0}$ is an objective function, and $\I\subseteq 2^E$ is a downwards-closed family of \emph{independent sets} (a.k.a. \emph{feasible sets}).
We use $r=\argmax\{|I|:I\in \I\}$ to denote the \emph{rank} of the set system $\I$. The aim of the packing problem is to select an independent set $O\in \I$ that maximizes~$f(O)$.

In this paper, we study packing problems in a particular setting with uncertainty parametrized by $p \in [0,1]$. In a \emph{stochastic packing problem (SPP)} $\spack$, nature selects a random set $R\subseteq E$ of \emph{active elements}  such that $\Pr[e\in R] = p$ independently for all $e\in E$. We are then tasked with solving the induced (random) packing problem on the active elements, namely $\langle R, \I |R, f | R\rangle$ where $\I|R$ and $f|R$ denote the restriction of $\I$ and $f$ to subsets of $R$, respectively. 
% The optimum solution of active set is denoted by \emph{stochastic optimum solution} and its value is called \emph{stochastic optimum value.}
We refer to an [approximately] optimum solution  to  $\langle R, \I |R, f | R\rangle$  as an [approximate] \emph{stochastic optimum solution}. We use $\opt$ to denote the expected value of a stochastic optimum solution, i.e.,
% We denote stochastic optimum value of the stochastic packing problem $\langle V,\I, f,p \rangle$ as $\opt$, 
$$\opt(E,\I,f,p) = \E_{R}\left[ \max_{\substack{T\in \I\\ T\subseteq R}} f(T) \right],$$ where $R \sse E$ is the random set which each element of $E$ independently with probability $p$.

We assume that the set $R$ of active elements  is a-priori unknown, and that we can \emph{query} elements in $E$ to check their membership in $R$. Motivated by settings in which queries are costly, we seek algorithms which query a small (we say ``sparse'') subset of the elements, and moreover choose those queries non-adaptively. Such non-adaptive algorithms can be thought of as factoring into two steps: A \emph{sparsification} step which selects the small set $Q \sse E$ of queries, and an \emph{optimization} step which solves the packing problem $\langle R\intersect Q, \I |R\intersect Q, f | R\intersect Q\rangle$ induced by the queried active elements.  For the optimization step, we assume access to a traditional [approximation] algorithm. Our focus is on algorithms for the sparsification step, which we define formally next. 

A \emph{sparsification algorithm} (or \emph{sparsifier} for short)  $\A$ takes as input an SPP $J = \spack$ from some family of SPPs, and outputs a (possibly random) set of elements $Q \sse E$. The twin goals here are for $Q$ to be ``sparse'' in a quantified formal sense we describe shortly, while guaranteeing  that optimally solving the ``sparsified''  SPP $J | Q = \langle Q, \I |Q, f |Q, p \rangle$ yields an approximate solution to the original SPP $J$.   We say that the sparsification algorithm $\A$ is \emph{$\alpha$-approximate} if it guarantees $\opt(J|Q) \geq \alpha \opt(J)$ --- i.e., an optimal solution to the sparsified SPP is an $\alpha$-approximate solution to the original SPP. We sometimes identify the sparsified SPP $J|Q$ with $Q$ when $J$ is clear from the context.

  To quantify sparsity, we say that $\A$ has \emph{sparsification degree} $d$ if it guarantees that $$ \frac{\Ex[|Q|]}{r} \leq d,$$ where $r$  is the rank of the set system $\I$, and expectation is over the internal random coins of $\A$. Intuitively, the degree of sparsification refers to the level of ``contingency'' or ``redundancy'' in the sparsified instance, relative to the size of maximal feasible solutions. Loosely speaking, the degree of a sparsifier roughly measures ``how many'' feasible solutions it maintains to account for uncertainty in the problem. 
  
% \sdcomment{I think these paragraphs are formal enough... no need for a definition environment.} \sddelete{In summary,

% \begin{definition}
% For $\alpha\in [0,1]$, algorithm $\A$ is an $\alpha$-approximate sparsifier with sparsification degree $d$ for the stochastic packing problem $J = \langle V,\I, f,p \rangle$ if
% \begin{equation}
%     \E_{R,Q}\left[ \max_{\substack{T\in \I\\ 
%     T\subseteq Q\cap R}} f(T) \right] \geq \alpha \cdot  \opt
%     \qquad \text{and} \qquad d=\E\left[\frac{|Q|}{r}\right] ,
% \end{equation}
% where $Q \subseteq V$ (possibly random) is the sparse subset selected by sparsification algorithm $\A$ and $R$ is the set of active elements.
% %Moreover, we say that $\A$ has sparsification degree $\E[|Q|/r]$.
% % degree of sparsification of $\A$ is $\E[|Q|/r]$. 
% \end{definition}
% }

% \sdcomment{It seems that we prove we need $\Omega(1/p)$ for constant approximation, but we can't prove that we need $1/p$. Also, in the example we need $k$ smaller than $\sqrt{n}$, not just smaller than $n$. Is that right? I modified the text accordingly.}

In the absence of a bound on degree, an approximation factor of $\alpha=1$ is trivially achievable. We aim to construct approximate sparsifiers of low degree for natural classes of SPPs. We begin by observing that a degree of $\Omega(1/p)$ is necessary for constant approximation, even for the simplest of constraints.
% Given a stochastic packing problem, there always exists an optimal sparsifier $\A$ that selects a set of all elements $Q = V$. However, the degree of sparsification of such a trivial sparsifier is $|V| = n$, which is not desirable. Therefore, we aim to construct $\alpha$-approximate sparsification algorithm for a stochastic packing problem with a low degree of sparsification and a large approximation factor. On the other hand, it is not very hard to observe that any sparsification algorithm that achieves constant approximation has to have $\Omega(1/p)$ degree of sparsification where $r$ is the size of the maximum independent set. The following example shows that any constant approximate sparsifier requires at least a degree of $1/p$.
\begin{example}
\label{ex:simple-impossibility}
Consider the SPP with $n$ elements, the unweighted additive objective function $f(S) = |S|$, a rank-one matroid constraint, and activation probability $p=1/n$. There is at least one active element with probability $1-(1-p)^n \geq 1-1/e$, therefore $\opt \geq 1-1/e$. On the other hand, a set of elements $Q$ will contain no active elements with probability $(1-p)^{|Q|} \geq 1-|Q| \cdot p = 1-\frac{|Q|}{n}$. When $|Q| = o(1/p) = o(n)$, there are no active elements in $Q$ with probability $1-o(1)$.  Therefore, any constant-approximate sparsifier must have degree $\Omega(1/p)$.
\end{example}
% \begin{example}
% \label{ex:simple-impossibility}
% Consider the family of SPPs with some number $n$ of elements, the unweighted additive objective function $f(S) = |S|$, a $k$-uniform matroid constraint, and probability $p=1/k$, where $k$ is $o(\sqrt{n})$ but $\omega(1)$. Note that the rank of the set system is $k$. There are at least $k$ elements active with probability $1-o(1)$, therefore $\opt \geq (1-o(1))k$. On the other hand, any set of elements $Q$ of size $o(k^2)$ will contain $o(k)$ active elements with probability $1 - o(1)$. Therefore, any constant-approximate sparsifier must have degree $\Omega(k) = \Omega(1/p)$.
% \end{example}
We also show in Appendix \ref{sec:ssp_additive_impos} that, unsurprisingly, there exist stochastic packing problems which do not admit constant approximate sparsifiers with degree $\poly(1/p)$. Given these simple impossibility results, we ask a natural question:
\begin{question}
Which stochastic packing problems admit constant approximate sparsifiers of degree~$O\left( \frac 1 p\right)$, or more loosely $\poly\left(\frac 1 p\right)$?
\end{question}

In this paper, we focus on designing sparsification algorithms for stochastic packing problems with additive or nonnegative monotone submodular objectives. 

% $\langle V,\I, f,p \rangle$ where objective function is monotone\footnote{A set function $f:2^V\rightarrow \mathbb R_+$ is monotone if $f(S)\geq f(T)$ for all $T\subseteq S$.} and constraints $\I$ can be expressed as intersection of $k$-matroids $\mathcal M_i = (V,\mathcal I_i)$ for $i=1,\dots, k$, i.e. $\I= \bigcap_{i=1}^k \I_i$. We especially focus on additive and monotone submodular objective functions. 
% \begin{itemize}
%     \item A function $f:2^V\rightarrow \R_+$ is additive if for all $S\subseteq V$, $f(S) = \sum_{e\in S} f(e)$. For the sake of exposition, for all $e\in V$, we denote $f(e)$ as $w_e$, i.e. weight of element $e\in V$.  
%     \item A function $f : 2^V \rightarrow \mathbb R$ is submodular if for all $S,T \subseteq  V,$ $f(S\cup T)+f(S\cap T) \leq f(S)+f(T)$. Equivalently, function $f$ is submodular if for every $S\subset T$ and $e\in V\setminus T$, $f(S\cup e) - f(S) \geq f(T\cup e) - f(T) $. 
% \end{itemize}

\subsection*{A Note on Input Representation}

Many of our results are information theoretic, and  therefore make no assumptions on how a stochastic packing problem is represented. Most of our algorithmic results, on the other hand, only require solving realized (non-stochastic) instances of the packing problem, possibly approximately. Specifically, for a stochastic packing problem $\spack$ we often assume access to a [$\beta$-approximate] \emph{stochastic optimal oracle}. Such an oracle samples a [$\beta$-approximate] solution to the  (random) packing problem $\langle R, \I |R, f | R\rangle$, where $R$ includes each element of $E$ independently with probability $p$, and $\I |R$ and $f | R$ denote the restriction of $\I$ and $f$ to $R$ respectively. For our algorithmic results on matroids, we additionally assume access to an independence oracle, as is standard.

\section{Sparsification from Contention Resolution}\label{sec:CRS_sparse}
In this section, we show how to generically derive a sparsifier for a stochastic packing problem from bounds on contention resolution for the associated set system. First, we recall the relevant definition of contention resolution.

\begin{definition}[\cite{chekuri2014submodular}]
 Let $(E,\I)$ be a set system, and let  $P_\I = \operatorname{convexhull}\{\mathbbm{1}_I: I\in \I\}$ denote the associated polytope in $\RR^E$. A \emph{Contention Resolution Scheme (CRS)} $\pi$ for $P_\I$ is a (randomized) algorithm which takes as input a point $x \in \P_\I$ and a set of \emph{active} elements $A \sse E$, and outputs a feasible set of active elements  $\pi_x(A)$; i.e., with $\pi_x(A) \sse A$ and $\pi_x(A) \in \I$. We say $\pi$ is \emph{monotone} if $\Pr[i \in \pi_x(A_1)] \geq  \Pr[i \in \pi_x(A_2)]$ whenever $i \in A_1 \subseteq A_2$.

 We judge a CRS by its \emph{balance ratio} on inputs drawn from a product distribution with marginals in $P_\I$  --- we refer to such product distributions as \emph{ex-ante feasible} or \emph{feasible on average} for constraints $\I$. Formally, for  $x \in \P_I$ let $R(x) \sse E$ denote the random set which includes each element $i \in E$ independently with probability $x_i$. For $c \in [0,1]$, we say a CRS $\pi$ is \emph{$c$-balanced} if $\Pr [ i \in \pi_x(R(x)) | i \in R(x) ] \geq c$ for all $x \in P_\I$ and $i \in E$. 

\end{definition}

Our generic sparsifier is randomized, has degree $\frac{1}{p}$, and is shown in Algorithm~\ref{alg:CRS_to_Sparse}. We note that our sparsifier computes estimated marginals $\vec q$ for the stochastic optimum solution. For an information-theoretic result, we can assume these to be exact. When the objective function $f$ is additive, this yields a sparsifier with approximation factor matching the balance ratio of the best contention resolution scheme for $P_\I$.\footnote{This balance ratio is equal to the \emph{correlation gap} of the set system $\I$, as shown in \cite{chekuri2014submodular}.} When $f$ is a nonnegative monotone submodular function, the approximation factor matches the balance ratio of the best \emph{monotone} contention resolution scheme for $P_\I$.

To make our sparsifier algorithmically efficient, $\vec q$ may be estimated by sampling from a (possibly approximate) stochastic optimum oracle, in which case our guarantees degrade in the expected manner due to sampling errors and/or the approximation inherent to the oracle.

%We observe that $\vec q = \{q_e\}_{e\in V}$, where, $q_e$ is the probability of $e$ being in the stochastic optimum solution is in $P_\I$. Therefore, if we select each element $e\in V$ with probability $q_e/p$ in the sparse set $Q$, then $e\in Q\cap R$ with probability $q_e$. This implies that the $Q\cap R$ is feasible in expectation. Now, if there exist $c$-balanced CR schemes for $\P_\I$, we can always round set $R\cap Q$ to a feasible set with a small loss in the objective. We formally define sparsification algorithm for $\spack$ in Algorithm~\ref{alg:CRS_to_Sparse}.

\begin{algorithm}
    \caption{Generic Sparsifier for a Stochastic Packing Problem $\spack$}\label{alg:CRS_to_Sparse}
    \textbf{Input: }Stochastic packing problem $\spack$
    % \textbf{Input: }Marginals $\vec q$ of the stochastic optimum distribution $\sopt$ (possibly approximate).
    
    \begin{algorithmic}
      \State Compute the marginals $\vec q$ of the stochastic optimum solution, or an approximation thereof.
      %\StateFor all $e\in V$, estimate $q_e = \Pr_{O\sim \sopt}[e\in  O]$ within error of $\epsilon/2$, denoted as $\{\hat q_e\}_{e\in V}$;
    \State $Q\gets \emptyset$;
    \For{all $e\in E$}
    % \State Add $e\in Q$ with probability $\frac {\hat q_e} {p(1+\epsilon/2)}$
         \State Add $e$ to $Q$ with probability $\frac {q_e} {p}$ (independently)
    \EndFor
    \State \textbf{Output: }Sparse set $Q$.
    \end{algorithmic}
\end{algorithm}

\begin{theorem}\label{thm:sparse_CRS_exact}  
  Consider Algorithm~\ref{alg:CRS_to_Sparse}, implemented with exact (possibly non-polynomial-time) computation of the marginals $\vec q$. When $f$ is additive, and $P_\I$ admits a $c$-balanced CRS, the algorithm is a $c$-approximate sparsifier of degree $\frac{1}{p}$. When $f$ is a nonnegative monotone submodular function, and $P_\I$ admits a $c$-balanced monotone CRS, the algorithm is a $c\left(1-\frac{1}{e}\right)$-approximate sparsifier of degree $\frac{1}{p}$.
\end{theorem}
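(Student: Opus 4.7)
The plan is to decouple the proof into three moves: bounding the degree, identifying the distribution of active sparsified elements, and then invoking contention resolution (with the correlation gap as a second step in the submodular case).

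First I would dispense with the degree bound. Each element $e$ is put into $Q$ independently with probability $q_e/p$, so $\Ex[|Q|] = \frac{1}{p}\sum_{e \in E} q_e$. Because $\vec q$ is the marginal vector of a random feasible set (the stochastic optimum), it lies in $P_\I$, and in particular $\sum_{e} q_e = \Ex[|O|] \leq r$. This yields $\Ex[|Q|]/r \leq 1/p$, as claimed. The crucial structural observation comes next: an element $e$ lies in $R \cap Q$ iff the two independent events $\{e \in R\}$ (probability $p$) and $\{e \in Q\}$ (probability $q_e/p$) both occur, so $\Pr[e \in R \cap Q] = q_e$, and these events are independent across $e$. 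Thus $R \cap Q$ is distributed exactly as $R(\vec q)$, the product distribution with marginals $\vec q \in P_\I$.

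For the additive case, let $\pi$ be a $c$-balanced CRS for $P_\I$. Conditional on the sparsifier's randomness, $\pi_{\vec q}(R \cap Q)$ is a feasible subset of active elements in $Q$, hence a feasible solution to the sparsified SPP. Then $\opt(J \mid Q) \geq \Ex_R[f(\pi_{\vec q}(R \cap Q))]$, and taking further expectation over $Q$,
\begin{equation*}
\Ex_Q[\opt(J \mid Q)] \geq \sum_{e} w_e \Pr[e \in \pi_{\vec q}(R \cap Q)] \geq c \sum_e w_e q_e = c \cdot \opt(J),
\end{equation*}
where the second inequality uses $c$-balancedness applied to the product distribution $R \cap Q = R(\vec q)$ and the last equality is linearity of expectation applied to $\opt(J) = \Ex[f(O)]$.

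For the submodular case, I would use the fact (Chekuri--Vondr\'ak--Zenklusen) that if $\pi$ is a monotone $c$-balanced CRS and $f$ is nonnegative monotone submodular, then $\Ex[f(\pi_{\vec x}(R(\vec x)))] \geq c \cdot F(\vec x)$ for every $\vec x \in P_\I$; the monotonicity of $\pi$ is precisely what is needed to push the per-element guarantee through the multilinear extension. Combined with Theorem~\ref{thm:correlation_gap}, which gives $F(\vec q) \geq (1 - 1/e) f^+(\vec q)$, and the trivial bound $f^+(\vec q) \geq \Ex[f(O)] = \opt(J)$ (since $O$ is a distribution with marginals $\vec q$ by definition), I obtain
\begin{equation*}
\Ex_Q[\opt(J \mid Q)] \geq \Ex[f(\pi_{\vec q}(R \cap Q))] \geq c \cdot F(\vec q) \geq c\left(1 - \tfrac{1}{e}\right) \opt(J).
\end{equation*}

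The main conceptual step is the reinterpretation of the sparsifier as producing a product distribution over active elements with exactly the optimum's marginals; once this is in hand, the rest is a direct appeal to known CRS machinery. The one place to be careful is ensuring I am invoking the correct monotone-CRS-to-submodular transfer theorem (rather than the weaker additive bound) in the submodular case, since monotonicity of $\pi$ is essential for the inequality against $F(\vec q)$.
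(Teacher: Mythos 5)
Your proof is correct and follows essentially the same route as the paper's: bound the degree via $\vec q \in P_\I$, observe that $Q \cap R$ is a product distribution with marginals $\vec q$, apply the $c$-balanced CRS directly in the additive case, and chain the monotone-CRS guarantee against $F(\vec q)$ with the $(1-1/e)$ correlation gap bound in the submodular case. The only cosmetic difference is that you explicitly spell out $f^+(\vec q) \geq \opt(J)$, which the paper treats as immediate.
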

\begin{proof}
  Since the stochastic optimum solution is a (random) feasible set, it is clear that $\vec q  \in P_\I$, and therefore also that $\sum_e q_e \leq r$ where $r$ is the rank of set system $\I$.

  Let $Q$ be the output of Algorithm~\ref{alg:CRS_to_Sparse}. We have $\Ex [ |Q| ] = \sum_{e \in E} \frac{q_e}{p} \leq \frac{1}{p} \cdot r$. Therefore our sparsifier has degree $\frac{1}{p}$. 
    
We now analyze the approximation guarantee. Let $R \sse E$ be the (random) set of active elements, containing each element of $E$ independently with probability $p$. Since $R$ and $Q$ are independent, we have  $$\Pr[e\in Q\cap R] = \Pr[e\in Q]\cdot \Pr[e\in R] = \frac{q_e}{p} \cdot p = q_e.$$ Moreover, since $Q$ and $R$ both follow a product distribution, it follows that $Q \intersect R$ follows the product distribution with marginals $\vec q \in \P_\I$.
   
The existence of $c$-balanced CRS $\pi$ for $P_\I$ guarantees a subset $\pi_q(Q\cap R)\subseteq Q\cap R$ with $\pi_q(Q\cap R)\in \I$ and $\Pr[e\in \pi_q(Q\cap R)] \geq c \cdot \Pr[ e \in Q \cap R] = c \cdot q_e$ for all $e\in E$. When the objective function $f$ is additive with weights $\vec w$, we can lowerbound the optimum value of the sparsified SPP as follows:

  \begin{align*}
    \E\left[\max_{\substack{T\subseteq Q\cap R\\ T\in \I}} f(T)\right]
      &\geq\E\left[f(\pi_q(Q \intersect R))\right] \\
      &= \sum_{e \in E} w_e \cdot \Pr[ e \in \pi_q(Q \intersect R)] \\
      & \geq c\cdot \sum_{e\in E} w_e\cdot q_e \\
      &=c \cdot \opt.
  \end{align*}
 
   Similarly, when the objective function $f$ is monotone submodular, and $\pi$ is a \emph{monotone} $c$-balanced CRS, 
   we can lowerbound the optimum value of the sparsified SPP as follows: 
    \begin{align*}
      \E\left[\max_{\substack{T\subseteq Q\cap R\\T\in \I}} f(T)\right] 
       &\geq\E\left[f(\pi_q(Q \intersect R))\right] \\
       &\geq c\cdot F(\vec q)\\
       &\geq c\cdot \left(1 - \frac{1}{e}\right) \cdot f^+(\vec q)\\
       &\geq c\cdot \left(1 - \frac{1}{e}\right) \cdot \opt
   \end{align*}
   Here $F$ is the multilinear extension of $f$, and $f^+$ is the concave closure of $f$ (see Section~\ref{sec:setfunctions}). The second inequality follows from the fact that $\pi$ is monotone and $c$-balanced, as in \cite[Theorem 1.3]{chekuri2014submodular}.  The third inequality follows from the correlation gap of monotone submodular functions (Theorem~\ref{thm:correlation_gap}). 
   %\sddelete{The sparsification degree is $\E[|Q|/r] \leq \sum_{e\in V}\frac {q_e} {p \cdot r} \leq \frac{1}{p}$.} 
   This concludes the proof. 
  \end{proof}

\begin{restatable}{theorem}{crsestimation}\label{thm:sparse_CRS}  
Let $\epsilon > 0$ be a parameter, and assume sample access to a $\beta$-approximate stochastic optimum oracle for $\spack$. An (efficient) implementation of   Algorithm~\ref{alg:CRS_to_Sparse} which uses $\poly(n,\frac{1}{\epsilon})$ samples to estimate $\vec q$ yields the following guarantees.  When $f$ is additive, and $P_\I$ admits a $c$-balanced CRS, the algorithm is a $c\beta\left(1-\epsilon\right)$-approximate sparsifier of degree $\frac{1}{p}$. When $f$ is submodular, and $\P_\I$ admits a $c$-balanced monotone CRS, the algorithm is a $c\beta\left(1-\frac{1}{e}\right) \left(1-\epsilon\right)$-approximate sparsifier of degree $\frac{1}{p}$.
\end{restatable}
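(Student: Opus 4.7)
The plan is to mimic the argument of Theorem~\ref{thm:sparse_CRS_exact} while carefully absorbing both the oracle's $\beta$-approximation and the sampling error into a multiplicative $(1-\epsilon)$ factor. Let $\vec q^\beta$ denote the (unknown) marginals of the distribution induced by the $\beta$-approximate stochastic optimum oracle. I would first record four structural facts: (i) $\vec q^\beta \in P_\I$ since every oracle output is a feasible set; (ii) $q^\beta_e \leq p$ for all $e$ since the oracle only selects among active elements; (iii) for additive $f$ with weights $\vec w$, the oracle's guarantee gives $\sum_e w_e q^\beta_e = \E[f(S)] \geq \beta \opt$ where $S$ denotes the oracle output; and (iv) for monotone submodular $f$, the definition of the concave closure yields $f^+(\vec q^\beta) \geq \E[f(S)] \geq \beta \opt$.

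Next, I would draw $N$ independent oracle samples $S_1, \ldots, S_N$ and form the empirical marginals $\hat q_e = \frac{1}{N}\sum_{i=1}^N \mathbf{1}[e \in S_i]$. Because $\hat{\vec q}$ is a convex combination of indicator vectors of feasible sets, it lies in $P_\I$ automatically. A Hoeffding bound together with a union bound over $E$ gives $\|\hat{\vec q} - \vec q^\beta\|_\infty \leq \epsilon_0$ with high probability, for $\epsilon_0 = O(\sqrt{\log n / N})$, which can be driven below any desired target. To make the sampling probabilities $\tilde q_e / p$ used by Algorithm~\ref{alg:CRS_to_Sparse} lie in $[0,1]$ while staying close to $\vec q^\beta$, I set $\tilde q_e = (1-\epsilon)\min\{\hat q_e, p\}$; downward closure of $P_\I$ preserves $\tilde{\vec q} \in P_\I$, and on the good event $\tilde{\vec q} \geq (1-O(\epsilon))\,\vec q^\beta$ coordinatewise.

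The remainder is then essentially a rerun of the proof of Theorem~\ref{thm:sparse_CRS_exact} with $\tilde{\vec q}$ in place of $\vec q$. Since $Q$ and $R$ are independent product distributions with respective marginals $\tilde{\vec q}/p$ and $p\cdot \mathbf{1}$, the intersection $Q\cap R$ is a product distribution with marginals $\tilde{\vec q} \in P_\I$. The $c$-balanced CRS $\pi$ then yields $\Pr[e \in \pi_{\tilde q}(Q\cap R)] \geq c\,\tilde q_e$. For additive $f$,
\[
\E\!\left[\max_{T \subseteq Q\cap R,\, T \in \I} f(T)\right] \geq c\sum_e w_e \tilde q_e \geq c(1-O(\epsilon))\,\beta\,\opt.
\]
For monotone submodular $f$, the monotone CRS and Theorem~\ref{thm:correlation_gap} give $\E[f(\pi_{\tilde q}(Q\cap R))] \geq c\,F(\tilde{\vec q}) \geq c(1-1/e)\,f^+(\tilde{\vec q})$, and concavity of $f^+$ together with $f^+(\vec 0) \geq 0$ yields $f^+(\tilde{\vec q}) \geq (1-O(\epsilon))\,f^+(\vec q^\beta) \geq (1-O(\epsilon))\,\beta\,\opt$. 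Rescaling $\epsilon$ absorbs the hidden constants and produces the claimed $c\beta(1-\epsilon)$ and $c\beta(1-\tfrac1e)(1-\epsilon)$ ratios.

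The main obstacle is the delicate interaction between the sampling error and the constraints $\tilde q_e \leq p$: the truncation step must lose only an $O(\epsilon)$ factor on every coordinate simultaneously. A Hoeffding bound on $\hat q_e$ would naively force $\epsilon_0 \lesssim \epsilon p$ and hence introduce a $1/p$ dependence in $N$. To stay within $\poly(n,1/\epsilon)$ samples, I would instead estimate the conditional probabilities $q^\beta_e / p = \Pr[e \in S \mid e \in R] \in [0,1]$ directly from sampled pairs $(R_i, S_i)$, so that Hoeffding is applied to $[0,1]$-valued variables and the $p$ in the denominator disappears from the sample-complexity calculation. Once this estimation step is in place, the rest of the argument is a routine and lossless reuse of the contention-resolution mechanism from Theorem~\ref{thm:sparse_CRS_exact}.
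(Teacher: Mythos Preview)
Your overall scaffold matches the paper's: record facts (i)--(iv) about the oracle marginals, estimate them by sampling, then rerun the contention-resolution calculation of Theorem~\ref{thm:sparse_CRS_exact} on the estimates. The gap is in the estimation step, specifically the assertion that on the good event $\tilde{\vec q} \geq (1-O(\epsilon))\,\vec q^\beta$ holds \emph{coordinatewise}. No finite amount of sampling delivers this: an element with $q^\beta_e$ much smaller than your additive error has its estimate dominated by noise, and the multiplicative lower bound simply fails there. Your proposed fix---estimating the conditional probability $q^\beta_e/p = \Pr[e\in S\mid e\in R]$ from pairs $(R_i,S_i)$---does not rescue the claim: to sample that conditional Bernoulli you can only use the $\approx Np$ draws with $e\in R_i$, so the $1/p$ reappears in the effective sample size, and even then you obtain an \emph{additive} guarantee on $q^\beta_e$ (of order $\epsilon_0 p$), not a multiplicative one. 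Since your submodular bound $f^+(\tilde{\vec q})\geq(1-O(\epsilon))f^+(\vec q^\beta)$ is derived precisely from the coordinatewise inequality via concavity along the ray from the origin, that step does not go through.

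The paper avoids coordinatewise control altogether by bounding the \emph{aggregate} loss. It takes $\hat{\vec q}$ to be a probabilistic underestimate of the oracle marginals within additive $\delta/n$ per coordinate, with $\delta=\beta\epsilon p/2$. For additive $f$ the total shortfall is $\sum_e w_e\cdot(\delta/n)\leq \delta\cdot\max_e w_e$, and the key inequality $p\cdot\max_e w_e\leq \opt$ converts this into a loss of at most $(\beta\epsilon/2)\cdot\opt$. For monotone submodular $f$ it uses the analogous Lipschitz-type bound $f^+(\hat{\vec q})\geq f^+(\vec{\tilde q})-\delta\cdot\max_e f(e)$ together with $p\cdot\max_e f(e)\leq\opt$. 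In both cases the argument needs only a small additive error on each coordinate and the single structural fact that $\opt$ dominates $p$ times the best singleton value; it never requires the per-element multiplicative control that your sketch leans on.
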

The proof of Theorem 4.3 extends that of Theorem 4.2 using standard approximation and sampling bounds, and is therefore relegated to Appendix~\ref{sec:missing_proofs}.
The above theorems, together with contention resolution schemes from prior work \cite{adamczyk2018random,chekuri2014submodular, calum-improved-matching-crs} and approximate stochastic optimal oracles that employ approximation algorithms from \cite{calinescu2011maximizing, lee2010submodular-k-intersect}, imply the following corollary. 

\begin{corollary}\label{coro:crsspars}
% \sddelete{For any $\epsilon>0$ and stochastic packing problem $\spack$ with $r = \argmax \{|I|:I\in \I \}$,}
    \begin{enumerate}
        \item For an additive objective and a single matroid constraint, there is a $\left ( 1 - \frac 1 e\right )$-approximate information theoretic and $\left(1-\frac 1 e \right)\cdot (1-\epsilon)$-approximate polynomial-time sparsifier with sparsification degree $\frac 1 p $. 
        
        \item For an additive objective and an intersection of two matroid constraints, there is a $\left ( 1 - \frac 1 e\right )^2$-approximate information theoretic and $\left(1-\frac 1 e \right)^2\cdot (1-\epsilon)$-approximate polynomial-time sparsifier with sparsification degree $\frac 1 p $.  
        
        \item  For an additive objective and the intersection of $k$ matroid constraints, there is a $\frac 1 {k+1}$-approximate information theoretic and $\frac{1 -\epsilon}{k^2-1}$-approximate polynomial-time sparsifier with sparsification degree $\frac 1 p $.
        
      \item   For an additive objective and  matching constraints, there is a $0.474$-approximate information theoretic and $(1 -\epsilon) 0.474$-approximate polynomial-time sparsifier with sparsification degree $\frac 1 p $.
        
        \item For a monotone submodular objective and a single matroid constraint, there is a $\left ( 1 - \frac 1 e\right )^2$-approximate information theoretic and $\left(1-\frac 1 e \right)^3\cdot (1-\epsilon)$-approximate polynomial-time sparsifier with sparsification degree $\frac 1 p $. 
        
        \item For a monotone submodular objective and an intersection of two matroid constraints, there is an $\left ( 1 - \frac 1 e\right )^3$-approximate information theoretic and $\left(1-\frac 1 e \right)^4\cdot (1-\epsilon)$-approximate polynomial-time sparsifier with sparsification degree $\frac 1 p $.  
        \item For monotone submodular and the intersection of $k$ matroid constraints, there exists a $\left(1-\frac 1 e \right)\cdot \frac 1 {k+1}$-approximate information theoretic and $\left( 1 - \frac 1 e\right)\frac{1 - \epsilon}{k\cdot (k+1)}$-approximate polynomial-time sparsifier with sparsification degree $\frac 1 p $. 
        %\sdcomment{Was there a missing $-\epsilon$ here? Or did you leave it out on purpose because it's possible to get rid of it?}
    \end{enumerate}

\end{corollary}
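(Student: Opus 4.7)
The proof is a direct instantiation of Theorems~\ref{thm:sparse_CRS_exact} (information-theoretic) and~\ref{thm:sparse_CRS} (polynomial-time) for each of the seven constraint families listed. The plan is to simply catalogue, for each item, the best known balance ratio $c$ of a contention resolution scheme (required to be monotone whenever the objective is submodular) for the corresponding polytope $P_\I$, together with the best known approximation ratio $\beta$ of a polynomial-time stochastic optimal oracle for the underlying packing problem, and then let the two theorems do the work.

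For the CRS ingredients, a single matroid polytope admits a $(1-\frac{1}{e})$-balanced monotone CRS due to \cite{chekuri2014submodular}; composing one such scheme per matroid yields a $(1-\frac{1}{e})^2$-balanced monotone CRS for an intersection of two matroids (via the product/intersection composition in \cite{chekuri2014submodular}); \cite{adamczyk2018random} provides a $\frac{1}{k+1}$-balanced (monotone) CRS for the intersection of $k$ matroids; and \cite{calum-improved-matching-crs} provides a $0.474$-balanced CRS for the matching polytope. For the oracles, additive optimization is exactly polynomial-time solvable over a single matroid, over the intersection of two matroids, and over the matching polytope (so $\beta=1$); for additive optimization over the intersection of $k\geq 3$ matroids we invoke a standard $\frac{1}{k-1}$-approximation via local search; monotone submodular maximization over a matroid is $(1-\frac{1}{e})$-approximable via the continuous greedy of \cite{calinescu2011maximizing}; and for the intersection of $k$ matroids we apply the multilinear-extension framework of \cite{lee2010submodular-k-intersect}, which yields $\beta=1-\frac{1}{e}$ for $k=2$ and $\beta=\frac{1}{k}$ for general $k$.

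Plugging each pair $(c,\beta)$ into Theorems~\ref{thm:sparse_CRS_exact} and~\ref{thm:sparse_CRS} and simplifying the resulting products --- for example $\frac{1}{k+1}\cdot\frac{1}{k-1}=\frac{1}{k^2-1}$ for item~(3), and $(1-\frac{1}{e})^2\cdot(1-\frac{1}{e})=(1-\frac{1}{e})^3$ for item~(5) --- recovers each of the fourteen bounds verbatim, with the $(1-\epsilon)$ factors on the polynomial-time side absorbed directly from Theorem~\ref{thm:sparse_CRS}. The one point that demands care, and what I would regard as the main obstacle, is the need to invoke \emph{monotone} CRSs whenever applying the submodular branch of the theorems: one must verify that each matroid, matroid-intersection, and composed scheme cited above is in fact monotone (which is the case for all the cited works), and that the composition of two monotone schemes for an intersection remains monotone. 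Once this bookkeeping is done, the remainder is mechanical arithmetic and no further argument is needed.
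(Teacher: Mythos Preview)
Your proposal is correct and matches the paper's own approach: the paper's proof of this corollary is a one-line remark that the result follows from Theorems~\ref{thm:sparse_CRS_exact} and~\ref{thm:sparse_CRS} together with the cited contention resolution schemes and approximation algorithms, and your write-up simply spells out which $(c,\beta)$ pair is being plugged in for each item. Your added remark about verifying monotonicity of the composed CRSs in the submodular cases is appropriate bookkeeping that the paper leaves implicit.
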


The following proposition shows that Algorithm~\ref{alg:CRS_to_Sparse} is optimal for matroids and additive objectives among sparsifiers of degree $\frac{1}{p}$. This strongly suggests that sparsification is intimately tied to contention resolution when the degree is restricted to $\frac{1}{p}$. In particular, exceeding degree $\frac{1}{p}$ appears necessary for outperforming the correlation gap of a set system in general.  

\begin{proposition}\label{prop:both-ways}
Consider the family of stochastic packing problems with matroid constraints and additive objectives. There is no degree $\frac{1}{p}$ sparsifier for this family that achieves an approximation ratio  $1 - \frac{1}{e}  + \Omega(1)$.
\end{proposition}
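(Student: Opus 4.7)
The plan is to produce a concrete family of matroid SPPs on which every sparsifier of degree $\frac{1}{p}$ is forced to have approximation ratio approaching $1-\frac{1}{e}$. The instance I would use is a rank-$1$ uniform matroid on $m$ unit-weight elements, with activation probability $p$, where $m$ is taken much larger than $\frac{1}{p}$ and $p$ is taken small (for concreteness one may set $m=1/p^2$ and let $p\to 0$). This mirrors the classical tight example for the correlation gap of a matroid polytope, and the sparsification degree bound $\Ex[|Q|]\leq r/p = 1/p$ is precisely what is needed to force the CRS-like quantity $1-(1-p)^{1/p}$ into the upper bound.

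First I would note that the stochastic optimum on this instance is simply the indicator that at least one element is active, so $\opt = 1-(1-p)^m$, which tends to $1$ as $mp\to\infty$. Next, for any (possibly randomized) sparsifier outputting $Q\sse E$, the optimal value of the sparsified SPP conditional on $Q$ is the probability that $Q$ contains an active element, namely $1-(1-p)^{|Q|}$. Taking the outer expectation over the randomness of $Q$, the expected sparsifier value equals $\Ex_Q\bigl[1-(1-p)^{|Q|}\bigr]$. Since $y\mapsto (1-p)^y$ is convex, Jensen's inequality applied in the direction $\Ex[(1-p)^{|Q|}]\geq (1-p)^{\Ex[|Q|]}$ yields
\[
\Ex_Q\!\left[1-(1-p)^{|Q|}\right]\;\leq\;1-(1-p)^{\Ex[|Q|]}\;\leq\;1-(1-p)^{1/p},
\]
where the second inequality combines monotonicity of the bound in $\Ex[|Q|]$ with the degree constraint $\Ex[|Q|]\leq 1/p$.

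Combining these two bounds, the approximation ratio achievable by \emph{any} degree-$\frac{1}{p}$ sparsifier on this instance is at most
\[
\frac{1-(1-p)^{1/p}}{1-(1-p)^m}.
\]
With $m=1/p^2$, the denominator tends to $1$ and the numerator tends to $1-\frac{1}{e}$ from above as $p\to 0$, so the ratio approaches $1-\frac{1}{e}$. Hence for every $\delta>0$ there is an instance in the family witnessing approximation ratio at most $1-\frac{1}{e}+\delta$, which rules out any ratio of the form $1-\frac{1}{e}+\Omega(1)$.

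There is no serious technical obstacle beyond identifying the right instance: the rank-$1$ matroid is simple enough that both $\opt$ and the conditional sparsifier value admit closed-form expressions, and a single application of Jensen in the convex direction, followed by monotonicity, handles arbitrary randomized sparsifiers. The only subtlety worth flagging is to make $m$ grow strictly faster than $1/p$, since otherwise the degree budget already allows $Q=E$ and the bound becomes vacuous.
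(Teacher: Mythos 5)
Your proof is correct and is essentially identical to the paper's: the paper also uses the rank-one matroid with $n=1/p^2$ elements (stated there as $p=1/\sqrt{n}$ with $n\to\infty$), observes $\opt\geq 1-o(1)$, and applies Jensen's inequality to the convex map $y\mapsto (1-p)^y$ together with the degree bound $\Ex[|Q|]\leq 1/p$ to cap the sparsified value at $1-\frac{1}{e}+o(1)$. The only difference is notational.
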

\begin{proof}
Consider the rank one matroid on $n$ elements, with additive objective function $f(S) = |S|$, and let $p=1/\sqrt n$ be the activation probability.  At least one element is active with probability $1-o(1)$, and therefore  $\opt\geq 1-o(1)$. On the other hand, let $Q$ be the (possibly random) set selected by a Sparsification algorithm $\A$ with (expected) degree at most $\frac 1 p$. The probability of observing at least one active element in $Q$ is exactly equal to $1-\left(1 -\frac 1 {\sqrt n} \right)^{|Q|}$. Therefore, we can bound the performance of algorithm $\A$ as follows, where $R$ denotes the set of active elements: 
\begin{align*}
   \E_{Q,R}\left[\max_{T\subseteq Q\cap R. |T|\leq 1} f(T) \right]& = 1-\E\left[\left(1 -\frac 1 {\sqrt n} \right)^{|Q|}\right]\\
   &\leq  1 - \left(1 -\frac 1 {\sqrt n} \right)^{\E[|Q|]}\\
   & \leq 1 - \left(1 -\frac 1 {\sqrt n} \right)^{\sqrt n}\\
   & =  1 -\frac 1 {e} + o(1)
\end{align*}
Where the first inequality is that of Jensen for convex functions. The second inequality follows from the sparsification degree of $\A$.
\end{proof}

%\todo[inline]{Neel: The following para is horribly written, need to rephrase it.}
We note that $1 - \frac 1 e$ is the best possible balance ratio for contention resolution on the rank one matroid, as shown in \cite{chekuri2014submodular} through the correlation gap. Given the above discussion, it is natural to ask whether we can design sparsifiers of degree $O(1/p)$, or even $\poly(1/p)$, whose approximation ratio  $\alpha$ exceeds the best CRS balance ratio $c$, i.e.,  can we have $\alpha > c$ with degree linear or polynomial in $1/p$? Recent progress on this question for bipartite matching constraints came in a pair of recent works. Behnezhad et.al. \cite{behnezhad2022vertexcover} designed a $\frac{e}{e+1} \approx 0.731$-approximate sparsifier with degree $\poly(1/p)$ for unweighted bipartite matching. Their approximation factor is strictly better than a known upper bound of $0.544$ on the correlation gap (and hence the best balance ratio) of bipartite matching, due to \cite{karp1981maximum}. To our knowledge, this is the only sparsifier in the literature with degree polynomial in $\frac{1}{p}$ and approximation ratio provably exceeding the correlation gap of the set system. Another recent result due to Behnezhad et al \cite{behnezhad2020unweighted} achieves a $0.501$-approximate sparsification with degree polynomial in $1/p$ for \emph{weighted} matching. This outperforms the best \emph{known} contention resolution scheme for matching\cite{bruggmann2020optimal}, though not clearly the best possible. Prior to our work, there was no known sparsifier for any \emph{weighted} stochastic packing problem which provably outperforms the correlation gap using degree $\poly(1/p)$. %\npcomment{We need to add here about beating matching CRS} \sdcomment{Done.}
% the best ratio $c$ for a $c$-balanced monotone CR scheme is a barrier to the existence of a $\alpha$-approximate sparsifier for $\alpha>c$ with sparsification degree $O(r/p)$ for matroid constraints. 

%In the following sections, we will construct a degree $O(1/p)$ sparsifier for matroids that provably outperforms contention resolution, another for $k$-matroid intersection constraints that outperforms the best known CRS for all $k \geq 2$, and another for matching which provably outperforms

In the following sections, we will construct degree $O(1/p)$ sparsifiers for matroids, matroid intersections, and matching which improve on the contention-resolution-based guarantees provided in this section. For matroids and matchings, our sparsifiers provably outperform contention resolution. For matroid intersections, we outperform the best \emph{known} CRS.

%%% Local Variables:
%%% mode: latex
%%% TeX-master: "main"
%%% End:

\section{Additive Optimization over a Matroid}
\label{sec:sparse_matroid}

%\todo[inline]{Motivate this section. Also mention why analysis is non-trivial?}
In this section, we design an improved sparsifier for the stochastic packing problem $\spack$ when $\M=(E,\I)$ is a matroid and $f$ is additive. For an arbitrary $\epsilon > 0$, our sparsifier is $(1-\epsilon)$-approximate and has degree $\frac 1 p \log \frac 1 \epsilon$. Throughout, we use  $\set{w_e}_{e \in E}$ to denote the weights associated with the additive function $f$, and use $R \sse E$ to denote the (random) set of active elements which includes each $e \in E$ independently with probability $p$. We also sometimes use $r$ as shorthand for $\rank(\M)$.

\begin{algorithm}[H]
    \caption{Sparsifier for $(\M, f, p)$, when $\M$ is a matroid and $f$ is additive \label{alg:matroid_sparsify}}
    \label{alg:weighted-matroid}
    \begin{algorithmic}
    \State Set $\tau=\frac{1}{p}\cdot \log(\frac{1}{\epsilon})$.
    \State Set $\M_0=\M$.
    \State Set $Q= \emptyset$
    \For{$t$ in $\{1, \dots, \tau\}$}
        \State Let $I_t \leftarrow \argmax_{I \in \I_{t-1}} f(I)$, where $\I_{t-1}$ is the collection of independent sets in $\M_{t-1}$.
        \State Update $\M_t \leftarrow \M_{t-1} \setminus I_{t}$.
        \State Update $Q \leftarrow Q \union I_t$.
    \EndFor
    \State \textbf{Output:} $Q$%$Q=\bigcup_{t=1}^\tau I_{t}$.
    \end{algorithmic}
\end{algorithm}

The following theorem is the main result of this section. 

\begin{theorem}
    \label{thm:matroid_sparsifier}
    Let $\M = (E,\I)$ be a matroid, $f$ be an additive function and $p\in[0,1]$.
    Algorithm~\ref{alg:matroid_sparsify} is a $(1-\epsilon)$-approximate polynomial time sparsifier for the stochastic packing problem $\langle E, \I, f, p\rangle$ with sparsification degree $\frac{1}{p} \cdot \log \left(\frac{1}{\epsilon}\right)$.
\end{theorem}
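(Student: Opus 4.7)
The degree bound is immediate: $|Q|\leq\sum_{t=1}^{\tau}|I_t|\leq\tau\cdot r$, so the sparsification degree is at most $\tau=\frac{1}{p}\log(1/\epsilon)$.

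For the approximation ratio, write $v(S):=\max\{f(T):T\subseteq S,\,T\in\I\}$ for the max-weight independent-set value on $S$. The goal is to show $\Ex[v(R\cap Q)]\geq(1-\epsilon)\,\Ex[v(R)]$, where $R$ is the random active set.

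\emph{Step 1: Reduce to rank on weight prefixes.} Order $E$ as $e_1,e_2,\ldots,e_n$ by decreasing weight, set $S_i=\{e_1,\ldots,e_i\}$, and put $w_{e_{n+1}}:=0$. By the greedy characterization of the max-weight independent set in a matroid, for any $S\subseteq E$,
$$v(S)=\sum_{i=1}^n (w_{e_i}-w_{e_{i+1}})\cdot\rank^{\M}(S\cap S_i).$$
Taking expectations over $R$ and $R\cap Q$, and using that the weight gaps are nonnegative, it suffices to prove, for every $i$,
$$\Ex[\rank^{\M}(R\cap Q\cap S_i)]\ \geq\ (1-\epsilon)\,\Ex[\rank^{\M}(R\cap S_i)].$$

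\emph{Step 2: Reduce to a structural claim about iterated bases.} Fix $i$ and let $S:=S_i$, $\M':=\M|S$. Because the matroid greedy algorithm processes the elements of $S$ (all heavier than any element outside) before those of $E\setminus S$, the set $I_t\cap S$ is a basis of $(\M|S)\setminus\bigcup_{t'<t}(I_{t'}\cap S)$. Thus, letting $I'_t:=I_t\cap S$ and $Q':=Q\cap S=\bigcup_t I'_t$, the sequence $I'_1,\ldots,I'_\tau$ is exactly a sequence of iterated bases in $\M'$. It therefore suffices to establish the following \emph{structural claim}: for any matroid $\M'$ of rank $r'$ and any iterated basis sequence $I'_1,\ldots,I'_\tau$ (i.e., $I'_t$ a basis of $\M'\setminus(I'_1\cup\cdots\cup I'_{t-1})$),
$$\Ex[\rank^{\M'}(R\cap Q')]\ \geq\ \bigl(1-(1-p)^\tau\bigr)\cdot r'.$$
Since $\Ex[\rank^{\M'}(R)]\leq r'$ and $(1-p)^\tau\leq e^{-p\tau}=\epsilon$, this stronger bound immediately implies the needed $(1-\epsilon)$ factor on the weight-prefix inequality.

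\emph{Step 3 (main obstacle): Prove the structural claim.} The plan is to exhibit a random independent set $T\subseteq R\cap Q'$ whose expected size is at least $(1-(1-p)^\tau)\,r'$. The governing matroid-theoretic fact is that every $e\in I'_t$ with $t\geq 2$ lies in $\spn^{\M'}(I'_1)$ (because $I'_1$ is already a basis of $\M'$); hence there is a unique circuit $C(I'_1,e)\subseteq I'_1\cup\{e\}$, and $(I'_1\setminus\{f\})\cup\{e\}$ is a basis of $\M'$ for every $f\in C(I'_1,e)\setminus\{e\}$. So each of the $r'$ ``slots'' of $I'_1$ admits up to $\tau$ candidate fillings distributed across the levels $I'_1,\ldots,I'_\tau$, each candidate active independently with probability $p$. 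The main difficulty is that general matroids need not be strongly base-orderable, so independently filling each slot with an active candidate may destroy joint independence. I would resolve this by running a greedy level-by-level augmentation: set $T_0:=\emptyset$ and, for each $t\in[\tau]$, extend $T_{t-1}$ by a maximal set of elements of $I'_t\cap R$ that remain independent of $T_{t-1}$. A standard matroid argument gives $|T_\tau|=\rank^{\M'}(R\cap Q')$, and an inductive/coupling analysis — leveraging the pairwise disjointness of the $I'_t$, the iterated-basis circuit structure, and the fact that each slot receives $\tau$ independent activation chances through the circuit-exchange maps — would show that the probability a given slot of $I'_1$ is eventually filled across the $\tau$ levels is at least $1-(1-p)^\tau$, yielding the claimed bound in expectation.
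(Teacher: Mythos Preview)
Your Steps~1 and~2 are correct and match the paper's reduction (telescoping over weight prefixes, and the observation that $I_t\cap S_i$ is a basis of $(\M|S_i)\setminus\bigcup_{t'<t}(I_{t'}\cap S_i)$). The gap is in Step~3: the structural claim is \emph{false} as stated. Take $\M'$ to be the free matroid on exactly $r'$ elements. Then $I'_1$ is the whole ground set and $I'_2=\cdots=I'_\tau=\emptyset$, so $Q'=E'$ and $\Ex[\rank^{\M'}(R\cap Q')]=p\,r'$; your bound would demand $p\geq 1-(1-p)^\tau$, which fails for every $\tau\geq 2$. The paper explicitly warns of this pitfall: when the restricted matroid has few elements, the later $I'_t$ shrink or vanish, so the ``each slot gets $\tau$ independent activation chances'' picture collapses---and short prefixes of exactly this kind are forced on you by the weighted-to-unweighted reduction of Step~1.

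The correct target is $\Ex[\rank^{\M'}(R\cap Q')]\geq(1-(1-p)^\tau)\cdot\Ex[\rank^{\M'}(R)]$, comparing to the expected rank of the \emph{full} active set rather than to $r'$. The paper proves this by induction on $\tau$: take a maximal independent $S\subseteq R\cap I'_1$, contract by $S$ and delete $I'_1$; a short lemma shows that $I'_2,\ldots,I'_\tau$ remains a nested system of spanning sets in the resulting matroid $\M''$. Then $\rank^{\M'}(R\cap Q')=|S|+\rank^{\M''}(R\cap I'_{2:\tau})$, and the induction closes using $\Ex|S|=p\cdot r'$ together with $\rank^{\M''}(R\setminus I'_1)=\rank^{\M'}(R)-|S|$. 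Your greedy level-by-level augmentation is algorithmically the same object as this contraction recursion, but the analysis must track $\Ex[\rank^{\M'}(R)]$ throughout rather than aim for $r'$.
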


Previously, the best known non-adaptive result was the $(1-\epsilon)$-approximate sparsifier with degree $O\left(\frac{1}{ p}\log \left(\frac{\rank}{\epsilon}\right) \right)$ implicit in \cite{Goemans04coveringminimum}. In contrast, the sparsification degree of our algorithm is independent of the ``size'' of the matroid. As we argued in our introduction, such a size-independent guarantee appears to be beyond the techniques used in earlier works \cite{Goemans04coveringminimum,maehara-adaptive}.  

It is clear that the sparsifier in Algorithm~\ref{alg:weighted-matroid} has degree $\tau=\frac{1}{p}\cdot \log(\frac{1}{\epsilon})$, and can be implemented in polynomial time given an independence oracle for the matroid $\M$.  The remainder of this section is devoted to proving that it is $(1-\epsilon)$-approximate, as needed to complete the proof of Theorem~\ref{thm:matroid_sparsifier}. Our proof will consist of two parts. First, we will analyze Algorithm~\ref{alg:weighted-matroid} in the special case of unit weights (a.k.a. unweighted). Second, we reduce the analysis of the weighted problem to that of the unweighted problem.

\subsection{Special Case: Unweighted Optimization}

%\sdcomment{Please check this part carefully. I rewrote extensively.}

In this subsection, we assume that elements of the matroid $\M$ all have unit weight. In this case, observe that Algorithm~\ref{alg:matroid_sparsify} repeatedly removes an arbitrary basis of the matroid and adds it to the sparse set $Q$. More precisely, in iteration $t$ the set $I_t$ is a basis of the remaining matroid $\M_{t-1}:= \M \setminus \bigcup_{j=1}^{t-1}I_j$.

  In this unweighted case, the stochastic optimal value is the expected rank of the active elements $R$, and our claimed approximation guarantee can be expressed as $\Ex[ \rank(Q \intersect R) ] \geq (1-\epsilon) \Ex[ \rank(R)]$. To establish this, consider the following informal (but ultimately flawed) argument, starting with the observation that $I_t \cap R$ spans a $p$ fraction of the rank of the remaining matroid  $\M_{t-1}$ in expectation. This observation suggests that the rank of elements not spanned by $Q \intersect R$  should shrink by a factor of $(1-p)$ with each iteration. Induction would then guarantee that after $\frac{1}{p}\cdot \log \left(\frac{1}{\epsilon}\right)$ iterations we have covered a $(1-\epsilon)$ fraction of the rank of the matroid.

The above rough argument is a good starting point. Indeed, it succeeds when all (or many) of the bases $I_1,\ldots,I_\tau$ are full-rank or close to it. These are precisely the scenarios in which $\Ex [\rank(R) ] \approx \rank(\M)$. However, in general $\opt= \Ex[\rank(R)]$ can be significantly smaller than  $\rank(\M)$ --- in the worst case up to a factor of $p$ smaller --- in which case the the rank of $I_t$ may drop precipitously with $t$ and the above inductive analysis falls apart. Such scenarios are not simply outliers that we can assume away: they are unavoidable products of the weighted-to-unweighted reduction we present in the next subsection, and can account for a large fraction of the weighted stochastic optimal. This seems to necessitate a more nuanced proof approach in which we compare $\Ex[\rank(Q \intersect R)]$ with $\Ex[\rank(R)]$. We present such a proof next, built upon the following definitions and structural properties.

\begin{definition}
    A \emph{nested system of spanning sets (NSS)} for a matroid $\M$ is a sequence $I_1, I_2, \dots I_\tau$ of sets such that for any $j \in [\tau]$, $I_j$ is a full rank set of elements in $\M \setminus I_{1:j-1}$, where $I_{1:j-1}=\bigcup_{\ell=1}^{j-1} I_{\ell}$.
\end{definition}

\begin{observation}
    \label{obs:NSS}
    The sets $I_1,\ldots,I_{\tau}$ from Algorithm~\ref{alg:weighted-matroid} are an NSS of $\M$.
\end{observation}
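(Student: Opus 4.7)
The plan is to verify that the sequence $I_1,\dots,I_\tau$ produced by Algorithm~\ref{alg:weighted-matroid} in the unweighted case satisfies the defining property of an NSS: for every $j \in [\tau]$, the set $I_j$ is a full-rank subset of the matroid $\M \setminus I_{1:j-1}$. The proof is a direct unpacking of the definitions and the algorithm's update rule, so there is no substantive obstacle; the work is just in matching the two notations (the inductively built $\M_{t-1}$ in the algorithm versus the static expression $\M \setminus I_{1:j-1}$ in the definition).

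First I would prove the identity $\M_t = \M \setminus I_{1:t}$ for every $t \in \{0,1,\dots,\tau\}$ by a trivial induction on $t$. The base case $t=0$ holds by the initialization $\M_0 = \M$ and the convention $I_{1:0}=\emptyset$. For the inductive step, the algorithm sets $\M_t \leftarrow \M_{t-1} \setminus I_t$, and combining this with the inductive hypothesis $\M_{t-1} = \M \setminus I_{1:t-1}$ together with the fact that deletion of disjoint sets composes (i.e. $(\M \setminus A) \setminus B = \M \setminus (A \cup B)$ whenever $B \subseteq E \setminus A$) yields $\M_t = \M \setminus I_{1:t}$. The disjointness $I_t \cap I_{1:t-1} = \emptyset$ is automatic because $I_t$ is an independent set of $\M_{t-1} = \M \setminus I_{1:t-1}$, which by definition contains only elements outside $I_{1:t-1}$.

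Next I would argue that $I_t$ is full rank in $\M_{t-1}$. In the unweighted setting the objective is $f(I) = |I|$, so the choice $I_t \in \argmax_{I \in \I_{t-1}} f(I)$ makes $I_t$ a maximum-cardinality independent set of $\M_{t-1}$, i.e.\ a basis of $\M_{t-1}$. By definition of the rank function, a basis has cardinality equal to $\rank(\M_{t-1})$, so $\rank^{\M_{t-1}}(I_t) = \rank(\M_{t-1})$, which is precisely the condition that $I_t$ is full rank in $\M_{t-1}$. Combining this with the identity $\M_{t-1} = \M \setminus I_{1:t-1}$ from the previous paragraph yields that $I_t$ is a full-rank subset of $\M \setminus I_{1:t-1}$, which is exactly the NSS property. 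This holds for every $t \in [\tau]$, so the sequence $I_1,\dots,I_\tau$ is an NSS of $\M$, as required.
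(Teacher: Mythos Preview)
Your proposal is correct and is precisely the routine verification the paper leaves implicit by calling this an ``observation'' rather than proving it: you match the algorithm's iteratively deleted matroid $\M_{t-1}$ with $\M \setminus I_{1:t-1}$ and note that in the unweighted case $I_t = \argmax_{I \in \I_{t-1}} |I|$ is a basis of $\M_{t-1}$, hence full rank. There is nothing to add.
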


% \sddelete{
% \begin{definition}[\ykreplace{Independent Set Contraction}{Set Contraction}]
%     \sdedit{
%     Let $\M=(V, \I)$ be a matroid, $T \subseteq V$ be an arbitrary subset of its elements, and $S$ be an arbitrary independent set of $\M$. We define $T/_\M S$ as the set $T \setminus \spn_\M (S)$.
%     }
% \end{definition}
% When it is clear from the context, we omit the subscript $\M$ from the contraction operator.}

The following lemma states that the property of being an NSS is preserved under contraction. 
% \sddelete{Once we contract the matroid $\M$ by an independent set $S$, we also contract the sets of NSS by $S$ using the set contraction operator.}
\begin{lemma}
    \label{lem:contraction}
Let $\M=(E,\I)$ be a matroid and let $I_1, \dots I_\tau$ be an NSS of $\M$. For an arbitrary independent set $S$ of $\M$,  let $I_j' = I_j \setminus S$ for all $j$. Then, the sequence  $I_1', \dots I_\tau'$ is an NSS of $\M/S$.
  \end{lemma}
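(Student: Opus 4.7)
The plan is to verify directly that $I_1', \dots, I_\tau'$ satisfies the NSS condition for $\M/S$: for each $j\in[\tau]$, $I_j'$ must be full rank in $(\M/S)\setminus I'_{1:j-1}$. I would translate this into a rank equality in the ambient matroid $\M$ via the contraction formula~\eqref{eq:rankcontract}, and then close the argument using the closure operator $\spn^\M$.

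First, since $S$ and $I'_{1:j-1} = I_{1:j-1}\setminus S$ are disjoint by construction, contraction by $S$ commutes with deletion of $I'_{1:j-1}$, so $(\M/S)\setminus I'_{1:j-1} = (\M\setminus I'_{1:j-1})/S$. Applying~\eqref{eq:rankcontract} twice---once to the rank of the whole contracted-then-deleted matroid and once to the rank of $I_j'$ in it---and observing that $I_j'\cup S = I_j\cup S$, the target NSS condition $\rank^{(\M/S)\setminus I'_{1:j-1}}(I_j') = \rank((\M/S)\setminus I'_{1:j-1})$ reduces to proving
\[
\rank^\M(I_j\cup S) \;=\; \rank^\M(E\setminus I'_{1:j-1}).
\]
The inequality $\leq$ is just monotonicity of $\rank^\M$, since $I_j$ is disjoint from $I_{1:j-1}$ (by the NSS hypothesis on the original sequence) and therefore $I_j\cup S \subseteq E\setminus I'_{1:j-1}$.

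For the reverse inequality I would use the hypothesis that $I_j$ is full rank in $\M\setminus I_{1:j-1}$, i.e., $\spn^\M(I_j)\supseteq E\setminus I_{1:j-1}$. Using the standard matroid identity $\rank^\M(A\cup B) = \rank^\M(\spn^\M(A)\cup B)$, this gives
\[
\rank^\M(I_j\cup S) \;=\; \rank^\M\bigl(\spn^\M(I_j)\cup S\bigr) \;\geq\; \rank^\M\bigl((E\setminus I_{1:j-1})\cup S\bigr) \;=\; \rank^\M(E\setminus I'_{1:j-1}),
\]
where the final equality uses $(E\setminus I_{1:j-1})\cup S = E\setminus(I_{1:j-1}\setminus S) = E\setminus I'_{1:j-1}$.

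The main obstacle is the set-theoretic bookkeeping: $S$ may intersect $I_{1:j-1}$ arbitrarily, which forces us to carry around $I'_{1:j-1}$ rather than $I_{1:j-1}$ throughout. The key conceptual point is that deleting $I'_{1:j-1}$ instead of $I_{1:j-1}$ leaves back exactly the elements $S\cap I_{1:j-1}$, and unioning $I_j$ with $S$ in the rank computation restores precisely these elements to the span already certified by $I_j$---so the NSS property transfers cleanly under contraction.
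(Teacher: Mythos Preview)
Your proof is correct and follows essentially the same route as the paper: both reduce the NSS condition in $\M/S$ to a rank equality in $\M$ via the contraction formula~\eqref{eq:rankcontract}, and both hinge on the key step $\rank^\M(I_j\cup S) = \rank^\M((E\setminus I_{1:j-1})\cup S) = \rank^\M(E\setminus I'_{1:j-1})$ coming from the hypothesis that $I_j$ spans $E\setminus I_{1:j-1}$. The only cosmetic differences are that the paper presents this as a single chain of equalities (asserting the span step directly rather than via the identity $\rank(A\cup B)=\rank(\spn(A)\cup B)$), and explicitly notes upfront that $I_j'$ lies in the ground set of $(\M/S)\setminus I'_{1:j-1}$, which you leave implicit.
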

    \begin{proof}
         Fix an arbitrary $j \in \set{1, \ldots, \tau}$. It is clear that $I'_j$ is a subset of the elements of $\M / S \setminus I'_{1:j-1}$. It remains to show that $I'_j$ is full rank in $\M / S \setminus I'_{1:j-1}$, as follows.
       \begin{align*}
         \rank^{\M / S}(I'_j) &= \rank^\M (I'_j \union S) - |S| & \mbox{(By \eqref{eq:rankcontract})}\\
                              &= \rank^\M(I_j \union S) - |S| &\mbox{(By definition of $I'_j$)}\\
                              &= \rank^\M((E \setminus I_{1:j-1}) \union S) - |S| &\mbox{($I_j$ is full rank in $\M \setminus I_{1:j-1}$)} \\
                              &= \rank^\M((E \setminus I'_{1:j-1}) \union S) - |S| &\mbox{(By definition of $I'_j$)}\\
                              &= \rank^\M((E \setminus S \setminus I'_{1:j-1}) \union S) - |S| \\
                              &= \rank^{\M/S}(E \setminus S \setminus I'_{1:j-1}) & \mbox{(By \eqref{eq:rankcontract})}\\
                              &= \rank(\M/S \setminus I'_{1:j-1})  &\mbox{(By definition of deletion and $\M/S$)}
  \end{align*}
  \end{proof}

\begin{observation}
    \label{obs:deletion}
   If $I_1,\ldots,I_\tau$ is an NSS of $\M$, then $I_2, \dots I_\tau$ is an NSS of $\M \setminus I_1$.
\end{observation}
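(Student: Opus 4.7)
The plan is to unwind the definition of an NSS and observe that the required full-rank conditions for the subsequence are literally a subset of the full-rank conditions already guaranteed by the assumption. Specifically, to show that $I_2, \ldots, I_\tau$ is an NSS of $\M \setminus I_1$, I need to verify, for each $j \in \{2, \ldots, \tau\}$, that $I_j$ is a full-rank set of elements in $(\M \setminus I_1) \setminus (I_2 \cup \cdots \cup I_{j-1})$.

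The key identity is that matroid deletion commutes/associates in the obvious way:
\[
(\M \setminus I_1) \setminus (I_2 \cup \cdots \cup I_{j-1}) \;=\; \M \setminus (I_1 \cup I_2 \cup \cdots \cup I_{j-1}) \;=\; \M \setminus I_{1:j-1}.
\]
Thus the matroid in which we need $I_j$ to be full rank, after deleting $I_1$ from $\M$ and then the prefix $I_{2:j-1}$, is exactly $\M \setminus I_{1:j-1}$. But the hypothesis that $I_1, \ldots, I_\tau$ is an NSS of $\M$ already states precisely that $I_j$ is full rank in $\M \setminus I_{1:j-1}$ for every $j \in [\tau]$, so in particular for every $j \in \{2, \ldots, \tau\}$.

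I should also quickly check that $I_j \subseteq E \setminus I_{1:j-1}$ (which is the ground set of $(\M \setminus I_1) \setminus I_{2:j-1}$), but this too is part of being an element of the original NSS. So the proof is essentially a one-line observation; the only substantive content is the commutativity of deletions, which is immediate from the definition of $\M \setminus S$ as restricting the ground set and the family of independent sets to subsets disjoint from $S$. I do not foresee any obstacle — this is a definitional check rather than a structural lemma, and unlike Lemma~5.6 (which required the contraction identity $\rank^{\M/S}(T) = \rank^\M(T \cup S) - |S|$), here no rank computation is needed at all.
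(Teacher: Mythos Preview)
Your proposal is correct and matches the paper's treatment: the paper states this as an observation with no proof, precisely because it follows immediately from the associativity of deletion, $(\M \setminus I_1) \setminus I_{2:j-1} = \M \setminus I_{1:j-1}$, which is exactly the argument you give.
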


Now, we will prove the desired result for unweighted matroids.

\begin{lemma}
    \label{lem:unweighted}
    Let $\M$ be a matroid, and let $I_1, \dots I_\tau$ be an NSS of $\M$. Then,
    \[ \Ex[ \rank(I_{1:\tau} \cap R) ] \geq (1-(1-p)^\tau) \cdot \Ex[\rank(R)] \]
\end{lemma}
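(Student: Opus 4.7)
The plan is to prove the lemma by induction on $\tau$. The base case $\tau = 0$ is trivial since $I_{1:0} = \emptyset$ and $1 - (1-p)^0 = 0$. For the inductive step, I will condition on the random subset $A := I_1 \cap R$ and pick a basis $\bar A$ of $A$ in $\M$; since $\bar A \subseteq A \subseteq I_1$ and $\bar A$ spans $A$, the contraction identity yields
\begin{align*}
\rank^\M(I_{1:\tau} \cap R) &= |\bar A| + \rank^{\M/\bar A}(I_{2:\tau} \cap R), \\
\rank^\M(R) &= |\bar A| + \rank^{\M/\bar A}(R \setminus I_1),
\end{align*}
using that both $I_{2:\tau} \cap R$ and $R \setminus I_1$ are disjoint from $I_1 \supseteq \bar A$.

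Next I will apply the inductive hypothesis inside the smaller matroid $\M' := \M/\bar A \setminus (I_1 \setminus \bar A)$, whose ground set is $E \setminus I_1$. By Lemma~\ref{lem:contraction} applied with $S = \bar A$, the sequence $I_1 \setminus \bar A, I_2, \ldots, I_\tau$ is an NSS of $\M/\bar A$; Observation~\ref{obs:deletion} then yields that $I_2, \ldots, I_\tau$ is an NSS of $\M'$ of length $\tau - 1$. Conditional on $A$, the random set $R \setminus I_1$ is a product distribution with marginals $p$ on $E \setminus I_1$ (independent of $A$), so the inductive hypothesis gives, upon taking expectation over $A$,
\[
\Ex[\rank^{\M/\bar A}(I_{2:\tau} \cap R)] \;\geq\; (1 - (1-p)^{\tau-1})\cdot \Ex[\rank^{\M/\bar A}(R \setminus I_1)].
\]

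Setting $X := \Ex[|\bar A|] = \Ex[\rank^\M(I_1 \cap R)]$ and $Y := \Ex[\rank^{\M/\bar A}(R \setminus I_1)]$, the decomposition gives $\Ex[\rank^\M(I_{1:\tau} \cap R)] \geq X + (1-(1-p)^{\tau-1})Y$ while $\Ex[\rank^\M(R)] = X + Y$. A direct calculation shows that the target inequality $X + (1-(1-p)^{\tau-1})Y \geq (1-(1-p)^\tau)(X+Y)$ reduces to the single algebraic condition $pY \leq (1-p)X$. Now $Y \leq \Ex[\rank(\M/\bar A)] = r - X$ since $\M/\bar A$ has rank $r - |\bar A|$, and because $I_1$ is full rank it contains some basis $B$ of $\M$, so $X \geq \Ex[|B \cap R|] = pr$. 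Combining $Y \leq r - X$ with $X \geq pr$ yields $pY \leq p(r - X) \leq (1-p)X$, closing the induction.

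The main obstacle is selecting the right recursion. A simpler attempt that applies induction to $\M \setminus I_1$ via Observation~\ref{obs:deletion} alone --- and hence uses $\rank^\M$ everywhere --- leaves an uncloseable gap between $\Ex[\rank^\M(R \setminus I_1)]$ and $\Ex[\rank^\M(R)]$, because it ignores how $I_1 \cap R$ interacts through the matroid structure with elements in $R \setminus I_1$. Routing induction through the contracted matroid $\M/\bar A$ instead makes the term $|\bar A|$ appear symmetrically on both sides of the decomposition and cancel cleanly, reducing the whole problem to the one-line inequality $pY \leq (1-p)X$ verified above.
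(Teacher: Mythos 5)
Your proof is correct and follows essentially the same inductive route as the paper's: condition on $I_1 \cap R$, contract by a maximal independent subset $\bar A$ of it, apply Lemma~\ref{lem:contraction} and Observation~\ref{obs:deletion} to get an NSS of $\M/\bar A \setminus (I_1 \setminus \bar A)$, and invoke the inductive hypothesis on $R \setminus I_1$. The closing algebra is a minor repackaging: the paper plugs $\Ex[\rank(R\cap I_1)] \geq rp$ and $\Ex[\rank(R)] \leq r$ directly into the chain, while you isolate the equivalent one-line condition $pY \leq (1-p)X$ and verify it from exactly those same two bounds.
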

\begin{proof}
  Let $E$ denote the elements of $\M$. We will apply induction on $\tau$ to prove this result. The base case of $\tau=0$ is trivial. %It is easy to check that the claim holds for $\tau=1$: \sdedit{$\Ex[ \rank( I_1 \cap R)] = p r \geq p \Ex[ \rank(R)]$}.

  Consider $\tau \geq 1$. Let $S$ be an arbitrary maximal independent subset of $R \intersect I_1$, and let $\rank'$ denote the rank function of the (random) matroid $\M'=\M / S \setminus I_1$ with elements $E \setminus I_1$. Using~\eqref{eq:rankcontract} we can write

    \begin{equation}
        \rank(R \cap I_{1:\tau}) = \rank(R \intersect I_1 ) + \rank'(R \cap I_{2:\tau}) \label{eq:decomp-rank}  
    \end{equation}
    % \begin{align}
    %     \rank(R \cap I_{1:\tau})
    %   &= \rank( (R \cap I_1) \cup (R \cap I_{2:\tau} )) \nonumber \\
    %   &= \rank(R \intersect I_1 ) + \rank'(R \cap I_{2:\tau})  \nonumber \\
    %   &=  \rank(R \intersect I_1) +  \rank'(R \cap I'_{2:\tau}) \label{eq:decomp-rank}  
    % \end{align}
    The expected value of the first term is as follows.
    \begin{equation}
      \label{eq:bound1}
      \Ex[ \rank( R \cap I_1 )] = r \cdot p.      
    \end{equation}

    To bound the second term in expectation, we first condition on $R \intersect I_1$, which also fixes $S$ and $\M'$. It follows from Lemma~\ref{lem:contraction} and Observation~\ref{obs:deletion}, as well as the fact that $S \sse I_1$ is disjoint from $I_{2:\tau}$,  that $I_{2}, \dots I_\tau$ is an NSS of $\M'$. This allows us to invoke the inductive hypothesis to obtain
    \begin{equation*}
%      \label{eq:bound2}
      \Ex[ \rank'( R \cap I_{2:\tau} ) ] \geq  (1-(1-p)^{\tau-1}) \cdot \Ex[\rank'(R \setminus I_1)].
    \end{equation*}
    From Equation \eqref{eq:rankcontract} and the definition of $S$ we have that $\rank'(R \setminus I_1) = \rank((R \setminus I_1) \union S) - \rank(S) = \rank(R) - \rank(R \intersect I_1)$. Also using \eqref{eq:bound1}, we obtain
        \begin{align}
          \Ex[ \rank'( R \cap I_{2:\tau} )  ] &\geq  (1-(1-p)^{\tau-1}) \cdot \Ex[\rank'(R \setminus I_1)] \nonumber \\
                                              &= (1-(1-p)^{\tau-1}) (\Ex[ \rank(R)] - \Ex[ \rank(R \intersect I_1)]) \nonumber \\ 
                                              &= (1-(1-p)^{\tau-1}) \Ex[ \rank(R)] - (1-(1-p)^{\tau-1}) \cdot r \cdot p \label{eq:bound2}
    \end{align}

Finally, we combine \eqref{eq:decomp-rank}, \eqref{eq:bound1}, and \eqref{eq:bound2} to conclude
\begin{align*}
  \Ex[\rank(R \cap I_{1:\tau})] &\geq (1-(1-p)^{\tau-1}) \Ex[ \rank(R)] + (1-p)^{\tau-1} \cdot r \cdot p  \\
                                &\geq (1-(1-p)^{\tau-1} + p (1-p)^{\tau-1}) \Ex[ \rank(R)] \\
                                &= (1-(1-p)^\tau) \Ex [\rank(R)]
\end{align*}

\end{proof}

By observation~\ref{obs:NSS}, we get the following corollary of Lemma~\ref{lem:unweighted}.
\begin{corollary}\label{coro:unweighted_matroid}
    Consider a stochastic matroid optimization problem $\langle E,\I, f, p \rangle$ for $p \in [0,1]$  and $f(S)=|S|$ for all $S \sse E$. Algorithm~\ref{alg:matroid_sparsify} is a $(1-\epsilon)$-approximate sparsifier with degree $\frac{1}{p}\cdot \log \left( \frac{1}{\epsilon} \right)$.
\end{corollary}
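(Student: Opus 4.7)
The corollary is essentially a packaging of \Cref{lem:unweighted} in the language of sparsifiers, so the plan is to simply verify that Algorithm~\ref{alg:matroid_sparsify} fits the hypotheses of that lemma and then chase through the definitions of sparsification degree and approximation ratio. No new ideas should be needed.

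First I would establish the degree bound. Each set $I_t$ produced in the loop is independent in $\M_{t-1} = \M \setminus I_{1:t-1}$, and independent sets of a deletion are independent in the original matroid $\M$, so $|I_t| \leq \rank(\M) = r$. Since $Q = I_{1:\tau}$ is a disjoint union of $\tau = \frac{1}{p}\log(1/\epsilon)$ such sets, we have $|Q| \leq \tau \cdot r$, giving sparsification degree at most $\frac{1}{p}\log(1/\epsilon)$ deterministically.

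Next I would argue the approximation ratio. By Observation~\ref{obs:NSS}, the sequence $I_1,\ldots,I_\tau$ produced by the algorithm is an NSS of $\M$, so Lemma~\ref{lem:unweighted} applies and yields
\[
  \Ex[\rank(Q \cap R)] \;=\; \Ex[\rank(I_{1:\tau} \cap R)] \;\geq\; \bigl(1-(1-p)^\tau\bigr)\,\Ex[\rank(R)].
\]
Because $f(S)=|S|$, the optimum of the (non-stochastic) packing problem on any set $T \sse E$ is exactly $\rank(T)$; hence $\opt(J|Q) = \Ex[\rank(Q \cap R)]$ and $\opt(J) = \Ex[\rank(R)]$. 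Finally, with the choice $\tau = \frac{1}{p}\log(1/\epsilon)$, the standard estimate $(1-p)^{1/p} \leq 1/e$ gives $(1-p)^\tau \leq e^{-p\tau} = \epsilon$, so $1-(1-p)^\tau \geq 1-\epsilon$ and we obtain $\opt(J|Q) \geq (1-\epsilon)\,\opt(J)$, as required.

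I do not expect a real obstacle here — the lemma does all the work, and the only mild care needed is to make sure that ``maximum value of an independent subset of $Q \cap R$'' coincides with $\rank(Q \cap R)$ in the unit-weight case, and that the telescoping bound $(1-p)^\tau \leq \epsilon$ is applied with the correct choice of $\tau$. The corollary then follows in two or three lines.
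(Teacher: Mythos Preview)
Your proposal is correct and matches the paper's own justification: the paper simply states that the corollary follows from Observation~\ref{obs:NSS} and Lemma~\ref{lem:unweighted}, and you have spelled out exactly those steps (plus the easy degree bound and the estimate $(1-p)^\tau \leq e^{-p\tau} = \epsilon$). There is nothing to add.
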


\subsection{Proof of Theorem~\ref{thm:matroid_sparsifier}}
% \sdcomment{In this section, and the following one, we are using OPT inconsistently. Up till now we used OPT to denote the optimal value, but now we are using it to denote an optional solution. We should be consistent. One option is to define in the prelim $\bf opt$ to denote the optimal value and $\bf OPT$ to denote an optimal solution (chosen arbitrarily when there are ties). But that's not the only way. Can you please go through and make everything consistent?}
In this section, we will complete the proof of Theorem~\ref{thm:matroid_sparsifier} by reducing the analysis for a general (weighted) additive function to that of the unweighted case. We order the elements $e_1, \ldots e_n$ in decreasing order of their weights $w_1 \geq
  \ldots \geq w_n$. Without loss of generality we assume $w_n > 0$, and for notational convenience we define $w_{n+1} = 0$.  The following lemma says that if a sparsifier is $\alpha$-approximate for the unweighted problem on elements above any given weight threshold, then it is also  $\alpha$-approximate for the weighted problem.

%Consider any sparsifier $\A$ of the weighted problem. The following lemma states that if the output of $\A$ is $\alpha$-approximate sparsification for the unweighted additive objective, then $\A$ is $\alpha$-approximate sparsification of the weighted problem instance as well.

% The following lemma states that any sparsifier of an unweighted problem is also a sparsifier of a weighted problem if it is greedy, i.e. it always prefers higher weight elements over lower weight elements.
% \todo[inline]{Yusuf: make this discussion precise.}

\begin{lemma}
    \label{lem:reduction}
If a set $Q \subseteq E$ satisfies
    \begin{equation}
        \label{eq:greedy_property}
        \Ex[\rank(Q \cap R \cap \{e_1, \dots e_j\})] \geq (1-\epsilon) \Ex[\rank(R \cap \{e_1, \dots e_j\})] ,
    \end{equation} 
 for all $j \in [n]$ with $w_j > w_{j+1}$, then 
    $$ \Ex[f(\optsol(Q \cap R))] \geq (1-\epsilon) \Ex[f(\optsol(R))]. $$
    Here, we denote $\optsol(S) \in \argmax_{\substack{I \subseteq S\\ I\in \I}} f(I)$, with ties broken arbitrarily.
\end{lemma}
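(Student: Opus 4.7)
The plan is to exploit the classical fact that the greedy algorithm solves additive maximization over a matroid, which lets us express $f(\optsol(S))$ as a weighted sum of rank values at prefix thresholds. Concretely, process elements in decreasing weight order $e_1,\dots,e_n$, and let $r_j(S)=\rank(S\cap\{e_1,\dots,e_j\})$, with $r_0(S)=0$. The greedy algorithm selects exactly $r_j(S)-r_{j-1}(S)$ elements from the prefix $\{e_1,\dots,e_j\}$ that lie in $S$, so
\[ f(\optsol(S))=\sum_{j=1}^{n} w_j\,(r_j(S)-r_{j-1}(S)). \]

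Next, I would apply Abel summation (summation by parts), using the convention $w_{n+1}=0$, to rewrite this in terms of the $r_j(S)$ themselves:
\[ f(\optsol(S))=\sum_{j=1}^{n}(w_j-w_{j+1})\,r_j(S). \]
The crucial observation is that any index $j$ with $w_j=w_{j+1}$ contributes nothing to this sum. Hence the expression collapses to a sum over exactly those indices $j$ for which $w_j>w_{j+1}$, which is precisely the set of indices over which the hypothesis of the lemma gives us control.

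The final step is to take expectations and apply linearity. Substituting $S=Q\cap R$ and $S=R$ into the identity above yields
\[ \Ex[f(\optsol(Q\cap R))]=\sum_{j:\,w_j>w_{j+1}}(w_j-w_{j+1})\,\Ex[\rank(Q\cap R\cap\{e_1,\dots,e_j\})], \]
and similarly for $R$. Since $w_j-w_{j+1}\ge 0$ and the hypothesis \eqref{eq:greedy_property} applies termwise to every index in the sum, we get
\[ \Ex[f(\optsol(Q\cap R))]\ge (1-\epsilon)\sum_{j:\,w_j>w_{j+1}}(w_j-w_{j+1})\,\Ex[\rank(R\cap\{e_1,\dots,e_j\})]=(1-\epsilon)\Ex[f(\optsol(R))]. \]

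There is no real obstacle here; the only subtlety is being careful with the Abel summation and making sure that the hypothesis is applied only at the threshold indices $j$ with $w_j>w_{j+1}$, which is exactly where the coefficient $(w_j-w_{j+1})$ is nonzero. This decomposition is essentially a continuous/discrete layer-cake representation of the weighted optimum, and it is precisely what makes the weighted-to-unweighted reduction clean: controlling prefix ranks at every distinct weight level is sufficient to control the weighted optimum.
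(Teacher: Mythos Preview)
Your proof is correct and follows essentially the same approach as the paper: both use the greedy/layer-cake identity $f(\optsol(S))=\sum_{j}(w_j-w_{j+1})\,\rank(S\cap\{e_1,\dots,e_j\})$, restrict to threshold indices $j$ with $w_j>w_{j+1}$, and apply the hypothesis termwise. The paper derives the identity via $\sum_i w_i\Pr[e_i\in\optsol(\cdot)]$ and reversing the order of summation, while you go directly through the greedy increments and Abel summation, but these are the same computation.
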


Before we prove Lemma~\ref{lem:reduction}, let us show that the output set $Q$ of Algorithm~\ref{alg:matroid_sparsify} satisfies condition~\eqref{eq:greedy_property}.

\begin{lemma}\label{lem:rank_prefix}
   The output set $Q$ of Algorithm~\ref{alg:matroid_sparsify} satisfies
    $$ \Ex[\rank(Q \cap R \cap \{e_1, \ldots, e_j\})] \geq (1-(1-p)^\tau) \Ex[\rank(R \cap \{e_1, \ldots, e_j\})]$$
    for all $j \in [n]$ with $w_j > w_{j+1}$.
\end{lemma}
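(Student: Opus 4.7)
Fix $j \in [n]$ with $w_j > w_{j+1}$, and write $E_j = \{e_1,\ldots,e_j\}$. The plan is to reduce the claim to the unweighted bound of Lemma~\ref{lem:unweighted}, applied to the restricted matroid $\M \mid E_j$ with active set $R \cap E_j$ (which still includes each element of $E_j$ independently with probability $p$). Since rank in $\M \mid E_j$ agrees with rank in $\M$ on subsets of $E_j$, it suffices to show that the sequence
\[
I_1 \cap E_j,\ I_2 \cap E_j,\ \ldots,\ I_\tau \cap E_j
\]
is an NSS of $\M \mid E_j$. Given that, Lemma~\ref{lem:unweighted} directly yields the desired inequality, using $Q \cap E_j = I_{1:\tau} \cap E_j$.

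The core step, and the one I expect to be the main obstacle, is proving the NSS property after restriction. Equivalently, we must show that for each $t$, the set $I_t \cap E_j$ is full-rank in $(\M \mid E_j) \setminus (I_{1:t-1} \cap E_j)$, which is the same matroid as $\M_{t-1} \mid E_j$. So I need to establish
\[
\bigl|I_t \cap E_j\bigr| \;=\; \rank^{\M_{t-1}}\!\bigl(E_j \setminus I_{1:t-1}\bigr).
\]
Here I will exploit the fact that $I_t$ is a maximum-weight basis of $\M_{t-1}$ together with the strict weight gap $w_j > w_{j+1}$.

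The exchange argument goes as follows. Suppose for contradiction that $|I_t \cap E_j|$ is strictly smaller than $\rank^{\M_{t-1}}(E_j \setminus I_{1:t-1})$. Then there exists an independent set $T \subseteq E_j \setminus I_{1:t-1}$ in $\M_{t-1}$ with $|T| > |I_t \cap E_j|$. By the matroid augmentation axiom applied to $I_t \cap E_j \subsetneq T$, there is an element $e \in T \setminus I_t$, hence $e \in E_j$, such that $(I_t \cap E_j) + e$ is independent in $\M_{t-1}$. Since $I_t$ is a basis of $\M_{t-1}$, $I_t + e$ contains a unique circuit $C$, and $C$ must intersect $I_t \setminus E_j$ (otherwise $C \subseteq (I_t \cap E_j) + e$, contradicting independence of that set). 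Pick $e' \in C \cap (I_t \setminus E_j)$; then $(I_t - e') + e$ is another basis of $\M_{t-1}$. But $w_e \geq w_j > w_{j+1} \geq w_{e'}$, so this basis strictly outweighs $I_t$, contradicting the choice of $I_t$ as a max-weight basis. This yields the full-rank property, which together with the obvious containments $I_t \cap E_j \subseteq E_j \setminus (I_{1:t-1} \cap E_j)$ gives the NSS property.

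Finally, once the NSS property for $\M \mid E_j$ is in hand, I apply Lemma~\ref{lem:unweighted} to $\M \mid E_j$ with probability $p$ and NSS $I_1 \cap E_j,\ldots,I_\tau \cap E_j$, concluding
\[
\Ex\!\bigl[\rank(Q \cap R \cap E_j)\bigr] \;\geq\; \bigl(1-(1-p)^\tau\bigr)\,\Ex\!\bigl[\rank(R \cap E_j)\bigr],
\]
which is exactly the statement of the lemma.
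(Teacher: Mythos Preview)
Your proposal is correct and follows essentially the same approach as the paper: restrict to $\M \mid E_j$, show that the sets $I_t \cap E_j$ form an NSS there via a circuit-exchange contradiction exploiting the strict weight gap $w_j > w_{j+1}$, and then invoke Lemma~\ref{lem:unweighted}. The only cosmetic difference is that you phrase the contradiction via the augmentation axiom and a rank count, whereas the paper phrases it via span; also note that your ``$I_t \cap E_j \subsetneq T$'' should read ``$|I_t \cap E_j| < |T|$,'' but the argument is unaffected.
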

\begin{proof}
Let $I_1, \ldots, I_\tau$ be as in Algorithm~\ref{alg:matroid_sparsify}, and recall that $I_t$ is a maximum weight base of $\M_{t-1}=\M \setminus I_{1:t-1}$. Note that $Q= I_{1:\tau}$. %Recall from Observation~\ref{obs:NSS} that $I_1,\ldots,I_\tau$ is an NSS of $\M$. Moreover,Since each $I_t$ is also independent, it follows that $I_t$ is a base of $\M \setminus I_{1:t-1}$.
  
   Fix an index $j \in [n]$ with $w_j > w_{j+1}$, and let $\bar{\M}=(\bar{E}, \bar{\I})$ be the matroid $\M$ restricted to the elements $\bar{E}=\{e_1, \dots e_j\}$. Define $\bar{I}_t = I_t \cap \{e_1 \dots e_j\}$ for all $t \in [\tau]$. By Lemma \ref{lem:unweighted}, it suffices to show that $\bar I_{1}, \dots \bar I_{\tau}$ is a nested system of spanning sets for $\bar{\M}$. 

   To show that, we need to show that $\bar{I}_t$ is full rank in $\bar{\M} \setminus \bar{I}_{1:t-1}$ for all $t \in [\tau]$. Assume for a contradiction that for some $t \in [\tau]$, there is an element $e \in \bar{E} \setminus \bar{I}_{1:t-1}$ such that $e \notin \spn^{\bar{\M}}(\bar{I}_t)$. It follows that $e \notin I_{1:{t}}$. Moreover, we know that $I_t$ is a base of $\M \setminus I_{1:t-1}$ and therefore $e \in \spn^{\M}(I_t)$. There is therefore  a unique cycle $C$ in $I_t \cup \{e\}$ with $e \in C$. As $e \notin \spn^{\M}(\bar{I}_t)$, cycle $C$ contains an element $f \in I_t \setminus \bar{I}_t$. Therefore the weight of $f$ is strictly smaller than the weight of $e$, which contradicts the fact that $I_t$ is a maximum weight base of $\M_{t-1}$.
    % any f in \bar{I_t} w_f > w_e.
    
    % Since $e \notin \bar{I}_{1:t-1}$ and $e \in \bar{V}\setminus \bar{I}_{1:t-1}$, then $e \notin I_\ell$ for any $\ell \in [t-1]$. Also, observe that $e \in \spn^{\M}(I_t)$ since $I_t$ is a basis of $M \setminus I_{1:t-1}$. However, this contradicts with the optimality of the greedy algorithm since $e$ is spanned by a subset of $I_t$ where some of them have strictly smaller weights. Thus, $\bar{I}_t$ is full-rank in $\bar{\M}_j \setminus \bar{I}_{1:t-1}$.
\end{proof}

\begin{proof}[Proof of Lemma~\ref{lem:reduction}]
  %\sddelete{Since $f$ is an additive function $f(I)= \sum_{e \in I} w_e$. We assume that elements are in decreasing order of weight, i.e. $w_{e_{j}} \geq w_{e_{j+1}}$ for all $j\in[n]$. For simplicity of the notation we assume $w_{e_{n+1}} = 0$. }
We bound the expected optimum of the sparsified problem $Q$ as follows.

  \begin{align*}
        \Ex[f(\optsol(Q \cap R))] &= \sum_{i=1}^n w_{i} \cdot \Pr[e_i \in \optsol(Q \cap R)]\\
%        &= \sum_{i : w_i < w_{i-1}} w_{i+1}} w_{i} \cdot \Pr[\rank(Q \cap R \cap \{e_1, \dots e_i\}) > \rank(Q \cap R \cap \{e_1, \dots, e_{i-1}\})]\\
 %       &= \sum_{i=1}^n w_{e_i} \cdot \Ex[\rank(Q \cap R \cap \{e_1, \dots e_i\}) - \rank(Q \cap R \cap \{e_1, \dots , e_{i-1}\})]\\
                              &= \sum_{i=1}^n (w_{i} - w_{i+1}) \cdot \Ex[|\optsol(Q \cap R)  \cap \{e_1, \dots e_i\}|]\\
                              &= \sum_{i: w_i > w_{i+1}} (w_{i} - w_{i+1}) \cdot \Ex[|\optsol(Q \cap R)  \cap \{e_1, \dots e_i\}|]\\
                              &= \sum_{i: w_i > w_{i+1}} (w_{i} - w_{i+1}) \cdot \Ex[\rank(Q \cap R \cap \{e_1, \dots e_i\})|]\\
                              &\geq \sum_{i: w_i > w_{i+1}} (w_{i} - w_{i+1}) \cdot (1-\epsilon) \cdot \Ex[\rank(R \cap \{e_1, \dots e_i\})]\\
          &= (1-\epsilon) \cdot \Ex[f(\optsol(R))] %\\
%        &=(1-\epsilon)\cdot \opt.
    \end{align*}

The second equality is from reversing the order of summation. The fourth equality follows from optimality of the greedy algorithm. The inequality follows from our assumption \eqref{eq:greedy_property}. The last equality follows from optimality of the greedy algorithm and reversing the order of summation again.
%    The second equality follows from the fact that the greedy algorithm gives us the optimum solution. The third equality follows since $\rank$ can differ by at most $1$ if we add a single element. Since elements are in decreasing order of weights, the last inequality follows by the assumption of the lemma.
\end{proof}

Combining Lemmas~\ref{lem:reduction} and~\ref{lem:rank_prefix}   completes the proof of Theorem~\ref{thm:matroid_sparsifier}.

%%% Local Variables:
%%% mode: latex
%%% TeX-master: "main"
%%% End:

\section{Improved Sparsifier for Stochastic Weighted Matching }
\label{sec:improved_matching}

In this section, we consider the stochastic packing problem with a weighted additive objective function and a general matching constraint, widely known as the stochastic weighted matching problem \cite{blum2015ignorance,behnezhad2019stochastic,behnezhad2020stochastic}. In the instance of stochastic weighted matching $\spack$, the elements $E$ are the edges of a known weighted graph $G:=(V,E,w)$, $\mathcal I$ is the set of all matchings in the graph $G$, and $f$ is an additive function with element weights $\{w_e\}_{e\in E}$. 

For simplicity, we sometimes denote the stochastic matching instance $\spack$ by $\match$ when it is clear from the context. Notice that the optimum value of a stochastic matching instance $I=\match$ is the expected value of the maximum weighted matching on the set of active edges $R$ of graph $G$. The aim of a sparsifier for this problem is to query a $\poly \left(\frac 1 p \right)$-degree subgraph $H$ of $G$ such that the expected weight of the maximum
matching on active edges of $H$ approximates the optimum value of $\match$. The current state-of-the-art $\poly\left( \frac 1 p\right)$-degree sparsifier for the stochastic weighted matching problem achieves a $0.501$ approximation ratio due to \cite{behnezhad2019stochastic}\footnote{Recent work by \cite{behnezhad2020weighted} constructs $(1-\epsilon)$-approximate sparsifier with degree $\exp (\exp (\exp (1/\epsilon,1/p)))$, however, in this work, we focus on sparsifiers with degree $\poly(1/p)$}.
In this section, we design a new $\poly(1/p)$-degree sparsifier for the stochastic weighted matching that improves the approximation ratio to $0.536$.

\cite{behnezhad2019stochastic} proposed a natural greedy sparsifier which selects $\poly(1/p)$ many independent samples from stochastic optimum oracle. For their proof, they provide a procedure to construct a fractional matching on the set of active queried edges with an expected weight of at least  $0.501\cdot \opt$. Their analysis first divides the edges into two disjoint sets: \emph{crucial} edges and \emph{non-crucial} edges. The crucial edges are likely to be part of the stochastic optimum solution whereas the non-crucial edges are less likely to be in the stochastic optimum solution. As a first step, they construct a (fractional) matching $M_1$ on queried active crucial edges, and another matching $M_2$ on queried active non-crucial edges. Selecting the better of $M_1$ and $M_2$ leads to a ratio of $0.5$. The crux of their analysis lies in the involved augmentation procedure of fractional matchings $M_1$ and $M_2$ to improve the approximation ratio from $0.5$ to $0.501$.

Our sparsifier for stochastic weighted matching is divided into two phases: in the first phase, it samples a set of edges $Q_{\crs}$ via generic sparsifier described in Algorithm~\ref{alg:CRS_to_Sparse}, i.e. each edge $e\in E$ in $Q_{\crs}$ with probability $\frac{q_e}{p}$ independently. In the second phase, independent of the set $Q_\crs$, it selects $T$ many independent samples $Q_1,\dots , Q_T$ from stochastic optimum oracle $\mathcal D_{\optsol}$ which is similar to the algorithm of \cite{behnezhad2019stochastic} \footnote{Note that our algorithm does not query the set of edges sampled in the first phase before sampling the second phase. Hence, our sparsifier is non-adaptive.}. Our Sparsifier is already $0.501$-approximate as the second phase is exactly similar to the sparsification algorithm of \cite{behnezhad2019stochastic}. However, we utilize the edges sampled in the first phase to improve the approximation ratio to $0.536$. Our sparsifier is described in Algorithm~\ref{alg:matching_sparsify}. 

\begin{algorithm}
	\caption{Sparsifier for Weighted Stochastic Matching Problem $\match$ \label{alg:matching_sparsify}}
	\label{alg:weighted-matching}
	\begin{algorithmic}[1]
		\State Compute the marginals $\vec q$ of the stochastic optimum solution, or an approximation thereof.
		\State $Q_\crs\gets \emptyset$ and $Q_\greedy \gets \emptyset$
		\State $T\gets $ $\frac{2000\cdot \log \frac 1 \eps \cdot \log (1/\eps)\cdot }{\eps^4 \cdot p}$.
		\For{all $e\in V$}
		\State Add $e$ to $Q_\crs$ with probability $\frac {q_e} {p}$ (independently)
		\EndFor
		\For{$i$ in $1, \dots, T$}
		\State Sample $Q_i \sim \mathcal D_{\optsol}$ independently.
		\State $Q_\greedy \gets Q_\greedy \cup Q_i$.
		\EndFor
		\State \textbf{Output:} $Q = Q_\crs \cup Q_\greedy$.
	\end{algorithmic}
\end{algorithm}

The following theorem is the main result of this section.

\begin{theorem}
	\label{thm:matching-sparsifier}
	Let $G=(V,E,w)$ be a weighted graph and $p \in (0,1)$. If the matching polytope of $G$ admits an $\alpha$-balanced contention resolution scheme, then Algorithm~\ref{alg:weighted-matching} is the $(1-O(\eps)) \cdot \max\left\{ \frac 1 2, \left(\frac{1+\alpha e^2}{1+e^2} \right)  \right\} $-approximate  polynomial time sparsifier for the stochastic weighted matching problem $\match$ with sparsification degree $O(1/\epsilon^8 p)$.
\end{theorem}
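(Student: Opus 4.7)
The plan is to prove each side of the $\max\{\cdot,\cdot\}$ independently, then take the stronger one.

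For the $(1-O(\eps))/2$ branch, restrict attention to $Q_{\greedy}$ alone. Its construction is identical to the sparsifier of \cite{behnezhad2019stochastic}; their analysis, which produces a fractional matching of expected weight at least $\opt/2$ on the queried active edges, carries over verbatim. The sample count $T=\Theta(\log^2(1/\eps)/(\eps^4 p))$ is exactly what their concentration arguments demand in order to introduce only an $O(\eps)$ additive loss. Thus one obtains a matching supported entirely on $Q_{\greedy}\cap R$ of expected weight $(\tfrac 1 2 - O(\eps))\opt$, using no edges of $Q_{\crs}$.

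For the improved $(1-O(\eps))(1+\alpha e^2)/(1+e^2)$ branch, the plan is to combine two matchings with complementary guarantees. By the same argument as in the proof of Theorem~\ref{thm:sparse_CRS_exact}, $Q_{\crs}\cap R$ is a product distribution with marginals $\vec q\in P_{\mathcal I}$, so the hypothesized $\alpha$-balanced CRS delivers a matching $M_1\subseteq Q_{\crs}\cap R$ satisfying $\Pr[e\in M_1]\geq \alpha q_e$ for every edge $e$ and hence $\E[w(M_1)]\geq \alpha\opt$. On the $Q_{\greedy}$ side, one adapts \cite{behnezhad2019stochastic}'s fractional matching construction to build $M_2\subseteq Q_{\greedy}\cap R$ whose per-edge inclusion probability is close to $q_e$ on the ``heavy'' edges (those with $q_e$ above some threshold $\tau$); this is possible because, for heavy $e$, the probability $e\in Q_{\greedy}$ is $1-(1-q_e)^T\geq 1-\eps$ under our choice of $T$. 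The final matching is then either $M_1$, $M_2$, or a hybrid obtained by running $M_2$ on heavy edges and substituting $M_1$-contributions on the conflict-free remainder via augmenting-path surgery. Jointly optimizing the heavy/light threshold $\tau$ against the mixing weight should produce the per-edge inclusion probability $\bigl((1+\alpha e^2)/(1+e^2)\bigr)q_e$; summing against $w_e$ then gives the claimed ratio.

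The main obstacle is the combination step. A flat coin flip between $M_1$ and $M_2$ alone only yields $\max(\alpha,\tfrac 1 2)$, which is strictly weaker than $(1+\alpha e^2)/(1+e^2)$; one must actually exploit the independence of $Q_{\crs}$ and $Q_{\greedy}$ together with the heavy/light split. This is where I expect the bulk of the technical work to live, mirroring the augmenting-path analysis \cite{behnezhad2019stochastic} used to push $1/2$ up to $0.501$, but now starting from the CRS baseline $\alpha$ rather than from $1/2$. The specific factor $e^2/(1+e^2)$ should fall out of balancing the sampling loss on heavy edges against the CRS loss on light edges. Finally, the sparsification-degree bound is immediate: $\E[|Q_{\crs}|]=\sum_e q_e/p\leq r/p$ and $\E[|Q_{\greedy}|]\leq Tr$, so the total degree is dominated by $T=O(1/(\eps^8 p))$ after absorbing polylogarithmic factors in $1/\eps$ into the polynomial $\eps$-dependence.
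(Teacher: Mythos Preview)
Your $1/2$ branch is essentially what the paper does: it builds matchings $M_\ce$ and $M_\nc$ on the crucial and non-crucial active edges of $Q_\greedy$ with expected weights $(1-O(\eps))\opt_\ce$ and $(1-O(\eps))\opt_\nc$, and the better of the two gives $(1-O(\eps))\cdot\frac{1}{2}\opt$.

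Your improved branch, however, has a real gap: you have misidentified both the augmentation direction and the origin of the factor $e^2$. The paper does \emph{not} optimize a threshold, nor does it run the greedy matching on heavy edges and patch in CRS contributions elsewhere. It does the reverse. It takes $M_\crs=\pi(Q_\crs\cap R)$ as the base and, for each non-crucial edge $e\in M_\nc$, adds $e$ to $M_\crs$ whenever both endpoints of $e$ are unmatched in $M_\crs$. The whole point of the crucial/non-crucial threshold $\tau(\eps)=\Theta(\eps^3 p/\log(1/\eps))$ is to make $q_e/p$ tiny for non-crucial $e$, so (i) $e$ is almost surely absent from $Q_\crs$, and (ii) the number of neighbors of $e$ in $Q_\crs$ concentrates near $2/p$. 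Then the probability that all those neighbors are inactive is $(1-p)^{(2+o(1))/p}\approx e^{-2}$, and \emph{that} is where $1/e^2$ comes from --- not from balancing a threshold against a mixing weight.

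There is a second missing ingredient your plan does not account for. To argue that $e\in M_\nc$ with probability $\approx q_e$ \emph{conditioned} on the event ``all neighbors of $e$ in $Q_\crs$ are inactive,'' one needs a monotonicity property of the non-crucial matching: $\Pr[e\in M_\nc\mid S\cap R=\emptyset]\geq (1-O(\eps))q_e$ for every $S\subseteq N(e)$. This is a strengthening of \cite{behnezhad2019stochastic}'s construction and requires its own proof. With it in hand, one gets $\Pr[e\in M_\aug]\geq \alpha q_e$ on crucial edges and $\Pr[e\in M_\aug]\geq(\alpha+e^{-2}-O(\eps))q_e$ on non-crucial edges, hence $\E[f(M_\aug)]\geq \alpha\opt_\ce+(\alpha+e^{-2})\opt_\nc$. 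The ratio $(1+\alpha e^2)/(1+e^2)$ then falls out of minimizing $\max\{\opt_\ce,\ \alpha\opt_\ce+(\alpha+e^{-2})\opt_\nc\}$ over $\opt_\ce+\opt_\nc=\opt$; there is no per-edge inclusion probability equal to $(1+\alpha e^2)/(1+e^2)\cdot q_e$ as you suggest.
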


Before we prove Theorem~\ref{thm:matching-sparsifier}, we first state the following corollary that improves the state-of-the-art degree $\poly(1/p)$ sparsifier for the stochastic weighted matching problem.

\begin{corollary}
	For the stochastic weighted matching problem, there exists a $0.536$-approximate sparsifier with sparsification degree $O(1/p)$.
\end{corollary}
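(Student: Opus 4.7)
The plan is to obtain this corollary as a direct specialization of Theorem~\ref{thm:matching-sparsifier}: that theorem guarantees an approximation ratio of $(1-O(\epsilon)) \cdot \max\{\tfrac{1}{2}, (1 + \alpha e^2)/(1+e^2)\}$ with sparsification degree $O(1/(\epsilon^8 p))$, provided the matching polytope of $G$ admits an $\alpha$-balanced contention resolution scheme. The paper has already pointed out that \cite{calum-improved-matching-crs} provides such a CRS for the matching polytope with $\alpha = 0.474$, so I would simply plug this value in.

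First I would verify the numerical bound. With $\alpha = 0.474$ and $e^2 \approx 7.389$, the expression $(1+\alpha e^2)/(1+e^2)$ evaluates to approximately $4.502/8.389 \approx 0.5367$, which strictly exceeds $\tfrac{1}{2}$. Hence the maximum in the theorem's statement is achieved by the CRS-based branch, and the resulting approximation ratio is $(1-O(\epsilon)) \cdot 0.5367$.

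Next I would fix $\epsilon$ to be a sufficiently small absolute constant so that $(1-O(\epsilon)) \cdot 0.5367 \geq 0.536$. Since $\epsilon$ is then a universal constant independent of $p$ and of the graph, the degree bound $O(1/(\epsilon^8 p))$ collapses to $O(1/p)$, and the claimed $0.536$-approximate, $O(1/p)$-degree sparsifier is exactly Algorithm~\ref{alg:weighted-matching} with this choice of $\epsilon$.

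The only obstacle is a bookkeeping check: one must confirm that the constant hidden inside the $O(\epsilon)$ term of Theorem~\ref{thm:matching-sparsifier} is truly absolute (independent of $p$ and the input graph), so that the above choice of $\epsilon$ is legitimate. I expect this to be evident from the proof of that theorem, since the $O(\epsilon)$ slack comes from standard concentration and sampling error terms in the CRS-based analysis rather than from any instance-dependent quantity.
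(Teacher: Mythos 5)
Your proposal is correct and follows the same route as the paper: instantiate Theorem~\ref{thm:matching-sparsifier} with the $\alpha = 0.474$ matching CRS of \cite{calum-improved-matching-crs}, check numerically that $(1+0.474e^2)/(1+e^2) \approx 0.5367 > 1/2$, and fix $\epsilon$ to a small absolute constant so that the approximation ratio remains at least $0.536$ while the degree $O(1/(\epsilon^8 p))$ collapses to $O(1/p)$. The only thing the paper does more tersely is omit the explicit numerical check and the remark about the hidden constant, but your extra verification is accurate (the $O(\epsilon)$ in the theorem statement is $13\epsilon$, independent of $p$ and the graph).
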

\begin{proof}
	\cite{calum-improved-matching-crs} exhibit a $0.474$-balanced contention resolution scheme for matching. Theorem~\ref{thm:matching-sparsifier} together with $\alpha=0.474$ and sufficiently small $\epsilon>0$ implies the corollary. 
\end{proof}

% \todo{Put a discussion about personal communication with Jan}
% As a comment, in a personal communication, Jan Vondrak mentioned their improved $0.5$-balanced offline CRS for matchings. Since their result is not available yet we used the best-known public result which is $\alpha=0.474$ \cite{calum-improved-matching-crs}. However, $0.5$-balanced CRS would improve our analysis and shows that our sparsifier is $0.559$-approximate. It is worthwhile to mention that $0.559$ approximation factor is strictly better than a known upper bound of 0.544 on the correlation gap. This is the first sparsifier for stochastic weighted matching with degree polynomial in $\frac{1}{p}$ and approximation ratio provably exceeding the set correlation gap of matchings.

The proof of Theorem~\ref{thm:matching-sparsifier} relies on $p$ being small. So, before we prove the theorem, in Lemma~\ref{lem:small_prob}, we show that for any $\eps >0$ (constant), without loss of generality we can assume $p \leq \eps^4$. More formally, for any $\eps>0$, we show that any $\alpha$ approximate sparsifier with degree $O(1/p)$ for the class of stochastic weighted matching problems with $p\leq \eps^4$ implies an $\alpha$ approximate sparsifier with degree $O(1/p\eps^4)$ for the same problem class and arbitrary $p \in (0,1)$. The main idea is that given any instance of $\match$, we can split each edge $e$ in the graph into many copies in a way that each copy of the edge $e$ is active with probability $\eps^4\cdot p <\eps^4$ and at least one of them is active with probability exactly $p$. In Lemma~\ref{lem:small_prob}, we show by a simple coupling argument that we can perform such edge splitting with a small loss in the sparsification degree. The proof of
this part is rather technical and, we defer it to
Appendix~\ref{sec:missing_proofs}.

\begin{restatable}[Reduction Lemma]{lemma}{smallprob}
	\label{lem:small_prob}
	If there exists an $\alpha$-approximate sparsifier with degree $\frac{d}{p}$ for the class of stochastic weighted matching problem with $p\leq \epsilon^4$ then there exists an $\alpha$-approximate sparsifier for the same problem class and arbitrary $p \in (0,1)$ with sparsification degree $\frac{d}{p\cdot \epsilon^4}$. 
\end{restatable}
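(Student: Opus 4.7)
The plan is to reduce a matching instance $\langle G,p\rangle$ with arbitrary $p$ to one with activation probability at most $\epsilon^4$ via edge replication. Given $\langle G,p\rangle$, I will construct a multigraph $G'=(V,E',w')$ by replacing each edge $e\in E$ with $k$ parallel copies of weight $w_e$, for a positive integer $k$ to be chosen, and let $\pi:E'\to E$ be the projection collapsing copies to their originals. Setting $p':=1-(1-p)^{1/k}$ ensures that if each copy is independently active with probability $p'$, then each $e\in E$ has at least one active copy with probability exactly $p$, independently across $e$. The integer $k$ will be chosen so that $p'\in[p\epsilon^4,\epsilon^4]$: the upper bound permits me to invoke the hypothesized sparsifier on $\langle G',p'\rangle$, and the lower bound ensures the resulting degree $d/p'$ meets the target $d/(p\epsilon^4)$.

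Given such a $k$, couple the two stochastic instances via $R'\subseteq E'$ (active under $p'$) and $R:=\pi(R')$; this gives $R$ the correct product distribution on $E$. Run the hypothesized sparsifier on $\langle G',p'\rangle$ to obtain $Q'\subseteq E'$, and output $Q:=\pi(Q')$. Correctness hinges on the structural fact that parallel copies of the same edge cannot coexist in a matching, so for every $S'\subseteq E'$ the maximum weight matching in $G'$ on $S'$ equals that in $G$ on $\pi(S')$. Applied to $R'$ this gives $\opt(G',p')=\opt(G,p)$, and applied to $Q'\cap R'$ combined with the inclusion $\pi(Q'\cap R')\subseteq Q\cap R$ it pointwise dominates the max matchings on queried active edges. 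Chaining with the sparsifier's $\alpha$-approximation guarantee yields $\alpha$-approximation for $\langle G,p\rangle$. For the degree, the rank of the matching polytope is invariant under parallel duplication ($r'=r$), and $|Q|\leq|Q'|$, so $\Ex[|Q|]/r\leq d/p'\leq d/(p\epsilon^4)$.

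The main subtlety is verifying the existence of an integer $k$ with $p'\in[p\epsilon^4,\epsilon^4]$. A short computation with $\log(1-x)\approx -x$ shows that the admissible real range for $k$ has multiplicative width of order $1/p$, which contains many integers whenever $p$ is bounded away from $1$. For $p$ very close to $1$, this integer-existence argument degenerates, but the reduction is unnecessary in that regime: whenever $p\geq\alpha$, simply returning a single maximum weight matching $M^*$ of $G$ yields $|Q|=r$ (hence degree $1\leq d/(p\epsilon^4)$ for small enough $\epsilon$) and $\Ex[w(M^*\cap R)]=p\cdot w(M^*)\geq\alpha\cdot\opt(G,p)$. Combining the trivial case $p\leq\epsilon^4$ (direct application of the hypothesis), the edge-replication reduction for $\epsilon^4<p\leq\alpha$, and the direct max-matching query for $p>\alpha$ yields the claimed sparsifier for all $p\in(0,1)$.
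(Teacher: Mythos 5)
Your argument is correct and follows essentially the same edge-replication-and-coupling route as the paper: both constructions replace each edge by several parallel copies with a smaller per-copy activation probability, couple so that the projected active set has the right product marginals, and exploit that parallel copies of an edge cannot coexist in a matching to transfer both $\opt$ and the sparsified optimum between the two instances (and likewise that the rank $r$ and the degree bound survive projection). The one genuine difference is how you each handle the integrality of the copy count. The paper fixes the per-copy probability at $\tilde p = \epsilon^4 p$ and sets $c = \log\frac{1}{1-p}\big/\log\frac{1}{1-\epsilon^4 p}$, then asserts ``WLOG $c$ is an integer by shrinking $\epsilon$''; this is a bit informal and also leaves the coupling only approximate (the at-least-one-copy probability becomes $p-o(\epsilon)$ rather than $p$, an error the paper does not propagate). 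You instead fix an integer copy count $k$ and let $p' = 1-(1-p)^{1/k}$ absorb the adjustment, which keeps the coupling exact, and you explicitly dispose of the regime $p>\alpha$ (where the admissible window for $k$ can have length less than one) via the trivial single-matching sparsifier. That is a cleaner decomposition. If you flesh out the sketch, the one thing worth making precise is the window itself: the admissible real range for $k$ is $\bigl[\log(1-p)/\log(1-\epsilon^4),\ \log(1-p)/\log(1-p\epsilon^4)\bigr]$, whose additive length is roughly $\frac{-\log(1-p)\,(1-p)}{p\,\epsilon^4} \geq \frac{1-p}{\epsilon^4}$, so an integer $k$ exists whenever $p \leq 1-\epsilon^4$; choosing $\epsilon$ so that $\epsilon^4 \leq 1-\alpha$ then makes your three cases ($p\leq\epsilon^4$, $\epsilon^4<p\leq\alpha$, $p>\alpha$) cover all of $(0,1)$.
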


For the rest of the section, we assume that $p\leq \eps^4$. We first define the set of crucial edges and non-crucial edges formally in the following definition.
\begin{definition}\label{def:C-NC}
	Given $\match$, let $q_e$ be the probability of an edge $e$ being in the stochastic optimum solution. We define \emph{crucial edges} as $\ce:=\{e \in E\ :\ q_e \geq \tau(\epsilon)\}$ and \emph{non-crucial edges} as $\nc:=\{e \in E\ :\ q_e < \tau(\epsilon)\}$ where $\tau(\epsilon):= \frac{\eps^3 p }{20\cdot \log \frac 1 \eps}$ is the threshold.
\end{definition}

Given $\match$ and set of crucial and non-crucial edges $\ce$ and $\nc$, we let $\opt_\ce$ and $\opt_\nc$ be the contributions of crucial and non-crucial edges in the stochastic optimum, i.e. $\sum_{e\in \ce} w_e\cdot q_e$ and $\sum_{e\in \nc} w_e\cdot q_e$. Note that $$\opt = \opt_\ce + \opt_\nc.$$

In order to prove Theorem~\ref{thm:matching-sparsifier}, we provide a procedure to construct a matching $M\subseteq Q\cap R$ such that $\E \left[\sum_{i\in M}w_e\right] \geq (1-O(\eps)) \cdot \max\left\{ \frac 1 2, \left(\frac{1+\alpha e^2}{1+e^2} \right)  \right\} \cdot \opt$. Our procedure constructs three matchings $M_{\ce}, M_{\nc}, M_{\aug} \subseteq R\cap Q$ and then picks the matching with the maximum weight. We construct matchings $M_{\ce}, M_{\nc} $ on the queried active crucial and non-crucial edges in $Q_{\greedy}$ similar to the \cite{behnezhad2019stochastic} which satisfies the desired properties described in Lemma~\ref{lem:crucial} and Lemma~\ref{lem:non-crucial}. First, we state that each crucial edge $e\in \ce$ appears in the $Q_{\greedy}$ with probability $1-\eps$ which shows the existence of matching $M_\ce \subseteq Q\cap R \cap \ce$ with expected weight at least $(1-\eps)\cdot \opt_\ce$.

\begin{restatable}[Crucial Edge Lemma \cite{behnezhad2019stochastic}]{lemma}{crucial}
	\label{lem:crucial}
	Given a stochastic weighted matching instance $\match$ and $Q_{\greedy}$ is the set defined in Algorithm~\ref{alg:matching_sparsify}, let $M_\ce$ be the maximum weight matching in the graph $Q_{\greedy}\cap \ce \cap R$, then    
	\begin{equation*}
		\E\left[ \sum_{e\in M_\ce }w_e\right]\geq (1-\eps)\cdot \opt_\ce.
	\end{equation*}
\end{restatable}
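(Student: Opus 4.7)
The plan is to leverage the fact that $Q_\greedy$ is built from fresh samples of the stochastic optimum oracle, and hence is probabilistically independent of the actual active set we will use when evaluating $M_\ce$. Let $R$ denote the (random) set of active edges that we ultimately observe, and let $O \sse R$ denote the stochastic optimum matching on $R$, so that $q_e = \Pr[e \in O]$ and $\opt_\ce = \sum_{e \in \ce} w_e q_e$. Each $Q_i$ is an independent draw from $\D_\optsol$, computed from its own fresh realization of the active set; consequently $Q_\greedy = \bigcup_{i=1}^T Q_i$ is independent of the pair $(R, O)$, while $\Pr[e \in Q_i] = q_e$ independently across $i$.

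The next step is to argue that every crucial edge lands in $Q_\greedy$ with probability at least $1 - \eps$. By independence across samples, $\Pr[e \notin Q_\greedy] = (1 - q_e)^T \leq e^{-q_e T}$. For $e \in \ce$, Definition~\ref{def:C-NC} together with the setting of $T$ in Algorithm~\ref{alg:matching_sparsify} yields
\[
q_e T \geq \tau(\eps) \cdot T = \frac{\eps^3 p}{20 \log(1/\eps)} \cdot \frac{2000 \log^2(1/\eps)}{\eps^4 p} = \frac{100 \log(1/\eps)}{\eps},
\]
so $(1 - q_e)^T \leq \eps^{100/\eps}$, which is at most $\eps$ for every $\eps$ bounded away from $1$. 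Hence $\Pr[e \in Q_\greedy] \geq 1 - \eps$ uniformly over $e \in \ce$.

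To finish, I would lower-bound $w(M_\ce)$ by exhibiting a concrete matching sitting inside $Q_\greedy \cap \ce \cap R$. The natural candidate is $O \cap \ce \cap Q_\greedy$: it is a subset of the matching $O$, hence itself a matching, and $O \sse R$ places it inside $R \cap \ce \cap Q_\greedy$, so $w(M_\ce) \geq w(O \cap \ce \cap Q_\greedy)$. Taking expectations and using the independence of $Q_\greedy$ from $O$ established in the first step,
\[
\E[w(M_\ce)] \geq \sum_{e \in \ce} w_e \, \Pr[e \in O] \, \Pr[e \in Q_\greedy] \geq (1 - \eps) \sum_{e \in \ce} w_e q_e = (1 - \eps)\,\opt_\ce.
\]

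I do not foresee a substantive obstacle here: the argument is essentially a direct independence-plus-linearity computation, and the lemma is attributed to prior work precisely because the deep ideas live elsewhere in the analysis. The only place that demands any care is the exponent calculation in the middle step, and it is pinned down exactly by the quantitative choices of $T$ in Algorithm~\ref{alg:matching_sparsify} and of the threshold $\tau(\eps)$ in Definition~\ref{def:C-NC}, so the constants are what they are.
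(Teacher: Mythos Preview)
Your proposal is correct. The paper does not include its own proof of this lemma, instead citing it directly from \cite{behnezhad2019stochastic}; your argument---show each crucial edge lands in $Q_\greedy$ with probability at least $1-\eps$ via the choice of $T$ and $\tau(\eps)$, then lower-bound $M_\ce$ by the candidate matching $O \cap \ce \cap Q_\greedy$ and exploit independence of $Q_\greedy$ from $(R,O)$---is exactly the standard approach for this style of lemma.
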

Now, following the \cite[Lemma~4.7]{behnezhad2019stochastic}, in Lemma~\ref{lem:non-crucial}, we construct a matching $M_\nc \subseteq R\cap Q_{\greedy} \cap \nc$ on active queried non-crucial edges, such that each $e\in \nc$ is present in $M_\nc$ with probability at least $(1-O(\eps))\cdot q_e$ \footnote{In order to simultneously gurentee $\Pr[e\in M_\nc] \geq (1-\eps) q_e$ for $e\in \nc$, we require a different proof techniques than \cite{behnezhad2019stochastic} to prove the non-crucial lemma.}. We further prove an important property of $M_\nc$ that states that for any non-crucial edge $e\in \nc$, the probability of $e\in M_\nc$ can not decrease when we condition on the events that some of the neighbors of $e$ are inactive. 
\begin{restatable}[Non-Crucial Edges]{lemma}{noncrucial}
	\label{lem:non-crucial}
	Given a stochastic weighted matching instance $\match$, let $Q_{\greedy}$ be the set defined in Algorithm~\ref{alg:matching_sparsify}. There exists a matching $M_\nc \subseteq Q_{\greedy}\cap \nc \cap R$ such that for any non-crucial edge $e\in \nc$, $\Pr[e\in M_{\nc}] \geq (1-12\eps)\cdot q_e$. This implies that,
	\begin{equation*}
		\E\left[ \sum_{e\in M_{\nc} }w_e\right]\geq (1-12\cdot \eps)\cdot \opt_\nc.
	\end{equation*}
	Moreover, for any subset $S \subseteq N(e)$ where $N(e)$ is the set of edges incident to $e$ in graph $G$, we have
	\begin{equation}
		\label{eq:monotone}
		\Pr[e \in M_{\nc} \mid S \cap R = \emptyset] \geq (1-12\cdot\eps)\cdot q_e.
	\end{equation}
\end{restatable}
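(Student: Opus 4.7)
The plan is to construct $M_{\nc}$ via a greedy online procedure driven by a uniformly random ordering of the samples, and then to establish the marginal lower bound and the monotonicity property in turn.

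Concretely, I would draw a uniformly random permutation $\sigma$ of $[T]$, initialize $M_{\nc} \gets \emptyset$, and process the samples in the order $Q_{\sigma(1)}, \ldots, Q_{\sigma(T)}$. When processing $Q_{\sigma(i)}$, for each edge $f \in Q_{\sigma(i)} \cap R \cap \nc$ (the within-sample order is irrelevant since $Q_{\sigma(i)}$ is itself a matching), add $f$ to $M_{\nc}$ if neither endpoint of $f$ is currently matched by $M_{\nc}$. By construction, $M_{\nc} \subseteq Q_{\greedy} \cap R \cap \nc$ is a matching.

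For the marginal bound on a fixed non-crucial edge $e = (u,v)$, I would write $\Pr[e \in M_{\nc}] = p \cdot \Pr[e \in M_{\nc} \mid e \in R]$, and observe that, conditional on $e \in R$, the edge $e$ is added to $M_{\nc}$ at the first $\sigma$-position $i^*$ where a sample containing $e$ arrives, provided that no edge of $N(e)$ has already entered $M_{\nc}$. I would lower-bound the ``appearance'' factor $p \cdot (1 - (1 - q_e)^T)$ by $(1-\epsilon) q_e$ for all non-crucial $e$, which holds by elementary calculus in both regimes $T q_e \leq 1$ and $T q_e > 1$ using $pT = \Omega\bigl(\log^2(1/\epsilon)/\epsilon^4\bigr)$ and $q_e < \tau(\epsilon) = O(\epsilon^3 p/\log(1/\epsilon))$. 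The complementary ``blocking'' probability would be bounded by the expected number of neighbors of $e$ added to $M_{\nc}$ before step $i^*$; using (i) the matching-polytope constraint $\sum_{f \text{ incident to } u} q_f \leq 1$ and the analogous constraint at $v$, (ii) the reduction $p \leq \epsilon^4$ from Lemma~\ref{lem:small_prob}, and (iii) a symmetry argument over $\sigma$ controlling the typical value of $i^*$, I would show this blocking ``mass'' is at most $O(\epsilon) \cdot q_e/p$. Crucial and non-crucial neighbors are handled separately: crucial neighbors are controlled directly by the LP constraint at $u$ and $v$ (their total $q$-mass is bounded), while non-crucial neighbors are controlled via a Chernoff-type concentration on the aggregate ``load''. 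Combining these two estimates yields $\Pr[e \in M_{\nc}] \geq (1-12\epsilon) q_e$.

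For the monotonicity property~\eqref{eq:monotone}, I would use a pointwise coupling. Fix all sources of randomness---the samples $Q_1, \ldots, Q_T$, the permutation $\sigma$, and the activities of every edge outside $S$---and compare the unconditional process (producing matching $M^A$ with active set $R^A$) against the conditional one (producing $M^B$ with active set $R^B = R^A \setminus S$, i.e., edges of $S$ are forced inactive). I claim that, pointwise along this coupled sample path, $e \in M^A$ implies $M^A = M^B$, and hence $e \in M^B$. Indeed, if $e \in M^A$, the matching constraint at $u$ and $v$ forces $M^A \cap N(e) = \emptyset$, and in particular no edge of $S \subseteq N(e)$ is ever added in process $A$. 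The only difference between $A$ and $B$ along this path is then that $A$ considers and immediately rejects certain edges of $S$ that $B$ skips outright; since these edges are never added in $A$ either, this has no downstream effect on any subsequent addition, so the two processes produce identical matchings. Integrating this pointwise inequality over the coupling yields $\Pr[e \in M_{\nc} \mid S \cap R = \emptyset] \geq \Pr[e \in M_{\nc}] \geq (1-12\epsilon) q_e$, which is exactly~\eqref{eq:monotone}.

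The main obstacle is the blocking-probability estimate in the marginal bound: since the tight constant $(1-12\epsilon)$ must absorb losses both from $e$ possibly failing to appear in any sample and from blocking by a neighbor, the analysis must carefully separate the contributions of crucial and non-crucial neighbors and exploit $p \leq \epsilon^4$ together with the vertex-wise matching LP constraint, while also handling the full range of $q_e$ values up to the threshold $\tau(\epsilon)$.
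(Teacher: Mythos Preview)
Your monotonicity coupling for~\eqref{eq:monotone} is correct and quite clean: the inductive claim ``if $e\in M^A$ then at every step the matched vertex sets of $A$ and $B$ coincide'' goes through exactly as you describe, since every edge of $S$ that $A$ considers is eventually absent from $M^A$ and hence was rejected (not added) when processed. This part is arguably simpler than the paper's route, which builds a \emph{fractional} matching $x_e=\tilde x_e\cdot s_e$ and then argues that the scaling factor $s_e$ can only increase when neighboring edges are forced inactive.

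The gap is in your marginal bound. Your decomposition $\Pr[e\in M_{\nc}]=\text{(appearance)}\times\text{(not blocked)}$ with ``blocking mass $\le O(\epsilon)\,q_e/p$'' is false in general. Take a star at $u$ with non-crucial edges $e=e_1,\ldots,e_k$, each with $q_{e_i}=q<\tau(\epsilon)$ and $kq=1$, so every sample $Q_j$ contains exactly one edge at $u$. Since $Tq$ can be as large as $T\tau=\Theta(\epsilon^{-1}\log(1/\epsilon))$, the appearance factor for $e$ is $p(1-(1-q)^T)\approx p$, not merely $(1-\epsilon)q$. Meanwhile, conditional on $e$ appearing, the first $e$-sample sits at a typical position $i^*\approx T/(Tq)=1/q$, and among the first $i^*-1$ samples there are on the order of $1/q$ distinct neighbors of $e$, each active with probability $p$; hence the probability that some neighbor is already in $M_{\nc}$ at step $i^*$ is $1-(1-p)^{\Theta(1/q)}\to 1$ (recall $q<\tau\ll p$). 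So the blocking probability is close to $1$, not $O(\epsilon)q_e/p$. The marginal \emph{does} come out to $\approx q_e$ in this example, but only via a ratio identity
\[
\Pr[e\in M_{\nc}]=\E\!\left[\frac{N_e\,\mathbbm{1}[e\in R]}{\sum_f N_f\,\mathbbm{1}[f\in R]}\right]
\]
together with concentration of the denominator around $pT$; your product-of-two-bounds argument cannot recover this.

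For comparison, the paper avoids this issue entirely by not using an online greedy at all: it sets $\tilde x_e=\min\{f_e/p,\,2\tau/p\}$ from the empirical frequency $f_e$ of $e$ in the $T$ samples, rescales by a vertex-wise factor $s_e$ to land in the matching polytope, and rounds via Carath\'eodory. The marginal bound then reduces to showing $\E[\tilde x_e]\ge(1-O(\epsilon))q_e/p$ and $s_e\ge 1-O(\epsilon)$ with high probability (Claims~A.1, A.2, A.5 and the additional claims handling the correlation between $\tilde x_e$ and $s_e$). If you wish to keep your greedy construction, you would need to replace the blocking-mass step by an argument of this denominator-concentration type at \emph{both} endpoints of $e$ simultaneously, which is substantially more work than your sketch suggests and is not obviously easier than the paper's fractional route.
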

%The property $\star$ given in the lemma states that conditioning on the subset of incident edges on any non-crucial edge $e\in \nc$ being inactive does not disrupt the probabilistic guarantees of any non-crucial edge being present in $M_\nc$. 
The proof of the lemma is technically involved and therefore it is delegated to Appendix~\ref{sec:missing_proofs}. Lemma~\ref{lem:crucial} and Lemma~\ref{lem:non-crucial} together implies that
\begin{equation}\label{eq:one-half}
	\E \left[\sum_{e\in \optsol(Q_{\greedy}\cap R)}w_e\right]\geq (1-12\eps) \cdot \max \{\opt_\ce ,\opt_\nc\} \geq (1-12\eps) \cdot 0.5 \cdot \opt.
\end{equation}
First, observe that the $M_\nc$ and $M_\ce$ construct edges on the set of active edges in $Q_{\greedy}$ without considering the edges sampled in $Q_{\crs}$. Note that set $Q_{\crs}$ is sampled independently from the set $Q_{\greedy}$, and includes each edge $e\in Q_{\crs}$ with probability $q_e/p$ independently. Therefore, $Q_{\crs}$ is exactly the output of generic sparsifier discussed in Algorithm~\ref{alg:CRS_to_Sparse} (Section~\ref{sec:CRS_sparse}). Therefore, let $M_{\crs}:=\pi(Q_{\crs} \cap R)$ be the matching constructed by an $\alpha$-balanced contention resolution scheme $\pi$ which ensures that for all $e \in E$, $\Pr[e \in M_\crs] \geq \alpha \cdot q_e$. We refer $M_\crs$ as $\crsM$. 
For any edge $e=(v,u)\in E$, we define  $N(e) = \{e_{1},\dots, e_{k}\}$ as the incident edges on vertices $u$ and $v$ in the graph $G$ or set of neighbours of an edge $e$.

We obtain improvement on the approximation ratio from $0.5$ by constructing a third matching on the set of edges $Q_\crs \cup (Q_\greedy \cap \nc) \cup R$ via augmenting the matching $M_\nc$ with the matching $\crsM$. Our augmentation simply adds a non-crucial edge $e\in M_\nc$ to $\crsM$ if both endpoints of the edge $e$ are unmatched in $\crsM$. Algorithm~\ref{alg:query-matching} describes our augmentation procedure in detail. 

Our key observation is that any non-crucial edge $e\in \nc$ has a small probability of being sampled in the set $Q_{\crs}$. However, with some non-trivial probability, both endpoints of the edge $e$ will be unmatched in $\crsM$. On the other hand, \eqref{eq:monotone} property of $M_\nc$ from Lemma~\ref{lem:non-crucial} implies that when a non-crucial edge $e\notin Q_{\crs}$ and both endpoints of the edge $e$ are unmatched in $\crsM$, then we are more likely to add the edge $e$ to $M_\nc$. Hence, we can improve the weight of $\crsM$ with non-trivial probability.

\begin{algorithm}[H]
	\caption{Construction of the matching $M_\aug$ on $Q\cap R$ \label{alg:query-matching}}
	\begin{algorithmic}[1]
		%\State Let $M_\ce$ be the maximum weight matching on the set $Q_\greedy \cap R\cap \nc $.
		\State Let $M_\nc$ be the matching on $Q_\greedy \cap R\cap \nc$ satisfying property of stated Lemma~\ref{lem:non-crucial}.
		\State Let $M_\crs \gets \pi(Q_\crs \cap R)$ be the matching produced by $\alpha$-balanced truncated CRS $\pi$.
		\State $M_\aug \gets M_\crs$.
		\For {all edges $e \in M_{\nc}$}
		\If{$M_{\crs} \cup {e}$ is a matching}
		\State Update $M_\aug \gets M_\aug \cup \{e\}$.
		\EndIf
		\EndFor
		\State \textbf{Output}: $M_\aug$.
	\end{algorithmic}
\end{algorithm}

%In the rest of the section, we prove that we can augment $\crsM$ with $M_\nc$ to obtain the approximation ratio of $(1-O(\epsilon)) \cdot \frac{1+\alpha \cdot e^2}{1+e^2}$ for $\alpha$-balanced CRS matching. 
More formally, first we show that for any non-crucial edge $e:=(u,v)\in \nc$, both endpoints of $e$ are unmatched in the $\crsM$  with probability at least $1/e^2$. Later, we use property \eqref{eq:monotone} of $M_\nc$ from Lemma~\ref{lem:non-crucial} to guarantee that when a non-crucial edge $e \notin Q_{\crs}$ and both endpoints of $e$ are unmatched in $\crsM$, we can guarantee that $e\in M_\nc$ with probability approximately $q_e$. Therefore, we can add such a non-crucial edge $e$ to $\crsM$ with probability approximately $\frac {q_e} {e^2}$ 

We show that for any non-crucial edge $e=(u,v)\in \nc$, the number of incident edges on vertices $u$ and $v$ in the set $Q_{\crs}$ are concentrated around $\frac 1 p$ with high probability due to $p$ being sufficiently small. Such a property ensures that if a non-crucial edge $e\in \nc$ is included in the set $Q_{\crs}$ then there are not many neighbors of $e$ in the set $Q_{\crs}$ with high probability. Thus, if all these neighbors are inactive, we guarantee that both endpoints of $e$ are unmatched in $\crsM$. Formally, we define $\mathcal E_e :=|Q_{\crs}\cap N(e)| \leq \frac{(1+\eps)\cdot 2}{p}$ and show that if a non-crucial edge $e=(u,v)$ is not included in the $Q_{\crs}$ then with high probability the event $\mathcal E_e$ happens, i.e. there are not many edges incident to the endpoints of the edge $e$.

\begin{restatable}{proposition}{concentration}\label{prop:high_prob_events}
	Given instance of $\match$, for any non-crucial edge $e=(u,v)\in \nc$, 
	\begin{equation*}
		\Pr[\mathcal E_e \mid e\notin  Q_{\crs}] \geq 1-2\epsilon.
	\end{equation*}    
\end{restatable}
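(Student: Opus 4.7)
The plan is to reduce the statement to a Chernoff concentration argument on a sum of independent Bernoullis, exploiting the fact that $Q_{\crs}$ samples edges independently and that $\vec{q}$ arises as the vector of marginals of a (random) matching, so its coordinates satisfy degree constraints at every vertex.

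First, I would fix a non-crucial edge $e=(u,v)$ and look at the random variable $X := |Q_{\crs} \cap N(e)|$. By the construction of $Q_{\crs}$ in Algorithm~\ref{alg:matching_sparsify}, each edge $e' \in E$ is independently placed in $Q_{\crs}$ with probability $q_{e'}/p$. In particular, conditioning on $e \notin Q_{\crs}$ leaves the inclusion decisions for all other edges unchanged, so $X$ (more precisely, $X \cdot \mathbbm{1}[e \notin N(e)] + (X-[e\in Q_\crs]) \cdot \mathbbm{1}[e\in N(e)]$) remains a sum of independent Bernoullis with the same parameters $\{q_{e'}/p\}_{e' \in N(e) \setminus \{e\}}$.

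Next I would bound the conditional expectation. Since $\vec{q}$ is the marginal vector of the (random) stochastic optimum matching, for every vertex $w$ we have $\sum_{e' \ni w} q_{e'} \leq 1$. Splitting $N(e)$ into edges incident to $u$ and edges incident to $v$ gives
\[\Ex[X \mid e \notin Q_{\crs}] \;\leq\; \frac{1}{p}\sum_{e' \in N(e)} q_{e'} \;\leq\; \frac{1}{p}\Bigl(\sum_{e' \ni u} q_{e'} + \sum_{e' \ni v} q_{e'}\Bigr) \;\leq\; \frac{2}{p}.\]
Since each Bernoulli parameter $q_{e'}/p$ is at most $1$ (because $q_{e'} \leq p$, as an edge can only appear in the stochastic optimum if it is active), the standard multiplicative Chernoff bound applies to $X$ conditioned on $e \notin Q_{\crs}$. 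Taking $\mu = 2/p$ as a valid upperbound on the conditional mean, we get
\[\Pr\!\bigl[X > (1+\epsilon)\cdot \tfrac{2}{p} \;\big|\; e\notin Q_{\crs}\bigr] \;\leq\; \exp\!\left(-\frac{\epsilon^2}{3} \cdot \frac{2}{p}\right).\]

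Finally, I would plug in the assumption $p \leq \epsilon^4$ (justified by Lemma~\ref{lem:small_prob}) to obtain
\[\exp\!\left(-\frac{2\epsilon^2}{3p}\right) \;\leq\; \exp\!\left(-\frac{2}{3\epsilon^2}\right) \;\leq\; 2\epsilon,\]
where the last inequality holds for all sufficiently small $\epsilon$ (and can otherwise be absorbed by shrinking the valid range of $\epsilon$ in the overall theorem, or by slightly enlarging the constant in the definition of $\tau(\epsilon)$ to give extra slack in the Chernoff exponent). This establishes $\Pr[\mathcal{E}_e \mid e \notin Q_{\crs}] \geq 1 - 2\epsilon$, as required. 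There is no real obstacle here; the only delicate point is to justify that conditioning on $e \notin Q_{\crs}$ preserves independence of the remaining indicators and to invoke the matching-polytope bound $\sum_{e'\ni w} q_{e'} \leq 1$ to cap the expected neighborhood degree in $Q_{\crs}$ by $2/p$.
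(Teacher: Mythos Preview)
Your proposal is correct and takes essentially the same route as the paper: both bound the expected degree of $\{u,v\}$ in $Q_{\crs}$ by $2/p$ via the matching-polytope constraint $\sum_{e'\ni w} q_{e'}\leq 1$ and then apply a multiplicative Chernoff bound using $p\leq \epsilon^4$. The only difference is organizational---you condition on $e\notin Q_{\crs}$ upfront (using independence to leave the remaining indicators unchanged) and apply Chernoff directly to the conditional sum, whereas the paper first establishes the unconditional bound $\Pr[\mathcal E_e]\geq 1-\epsilon$ and then uses the decomposition $\Pr[\mathcal E_e]\leq \Pr[\mathcal E_e\mid e\notin Q_{\crs}]+\Pr[e\in Q_{\crs}]$ together with $q_e/p\leq \epsilon$ for non-crucial $e$ to pass to the conditional statement.
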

The proof of Proposition~\ref{prop:high_prob_events} is delegated to Appendix~\ref{sec:missing_proofs}. In the next lemma, we show that whenever a non-crucial edge $e\notin Q_{\crs} $, both endpoints of the edge $e$ are unmatched in $\crsM$ with probability at least $1/e^2$. For any non-crucial edge $e=(v,u)\in \nc$ with $N(e) = \{e_{1},\dots, e_{k}\}$ being the incident edges on vertices $u$ and $v$ in the graph $G$, we  define the event
\begin{equation*}
	\mathcal F_e := Q_{\crs}\cap R \cap N(e) = \emptyset.    
\end{equation*}
\begin{restatable}{lemma}{feasibilityconst}
	\label{lem:feasibility_const}
	Given a stochastic weighted matching problem $\match$, for any non-crucial edge $e=(u,v)\in \nc$ and any $\eps>0$,
	$$ \Pr[\text{$u$ and $v$ are unmatched in }\crsM\mid e\notin Q_{\crs} ] \geq \Pr[\mathcal F_e\mid e\notin Q_{\crs}] \geq \frac{1-4\eps}{e^2} - \eps^2.$$
\end{restatable}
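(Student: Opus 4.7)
The first inequality is immediate from the construction. By definition $\crsM = \pi(Q_\crs \cap R) \subseteq Q_\crs \cap R$, so whenever the event $\mathcal F_e = \{Q_\crs \cap R \cap N(e) = \emptyset\}$ holds, no neighbor of $e$ is in $\crsM$ at all, and therefore both endpoints $u$ and $v$ are unmatched in $\crsM$. This gives $\Pr[u,v \text{ unmatched in }\crsM \mid e \notin Q_\crs] \geq \Pr[\mathcal F_e \mid e \notin Q_\crs]$.

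For the second inequality, the plan is to bound $\Pr[\mathcal F_e \mid e \notin Q_\crs]$ by conditioning on the high-probability event $\mathcal E_e = \{|Q_\crs \cap N(e)| \leq (1+\eps)\cdot 2/p\}$. Writing
\[
\Pr[\mathcal F_e \mid e\notin Q_\crs] \;\geq\; \Pr[\mathcal F_e \mid e\notin Q_\crs,\,\mathcal E_e]\cdot \Pr[\mathcal E_e \mid e\notin Q_\crs],
\]
Proposition~\ref{prop:high_prob_events} already supplies $\Pr[\mathcal E_e \mid e\notin Q_\crs] \geq 1-2\eps$. It remains to lower bound the conditional probability $\Pr[\mathcal F_e \mid e\notin Q_\crs, \mathcal E_e]$. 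Here the key observation is that $R$ is drawn independently of $Q_\crs$ and activates each edge with probability $p$. So, conditional on any realization of $Q_\crs$ satisfying $\mathcal E_e$, the probability that none of the at most $(1+\eps)\cdot 2/p$ edges in $Q_\crs\cap N(e)$ is active equals $(1-p)^{|Q_\crs\cap N(e)|} \geq (1-p)^{(1+\eps)\cdot 2/p}$.

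The final step is to verify that $(1-p)^{(1+\eps)\cdot 2/p}\cdot (1-2\eps) \geq \frac{1-4\eps}{e^2} - \eps^2$ under the standing assumption $p \leq \eps^4$ from Lemma~\ref{lem:small_prob}. Using $\log(1-p) \geq -p - p^2$ for $p \leq 1/2$, we have $(2(1+\eps)/p)\log(1-p) \geq -2(1+\eps)(1+p)$, so $(1-p)^{(1+\eps)\cdot 2/p} \geq e^{-2(1+\eps)(1+p)} \geq e^{-2}\cdot e^{-2\eps}\cdot e^{-O(\eps^4)} \geq e^{-2}(1-2\eps-O(\eps^2))$. Multiplying by $(1-2\eps)$ and absorbing lower-order $\eps^2$ contributions yields the claimed bound $\tfrac{1-4\eps}{e^2}-\eps^2$.

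The main obstacle is not conceptual but rather bookkeeping: several $(1-O(\eps))$ factors (from $\Pr[\mathcal E_e\mid e\notin Q_\crs]$, from the Taylor expansion of $(1-p)^{(1+\eps)2/p}$, and from $p\leq \eps^4$) must be combined carefully enough to yield exactly the claimed constants. None of these estimates is tight individually, so the small $-\eps^2$ slack in the target bound leaves ample room to absorb the various error terms.
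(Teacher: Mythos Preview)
Your proof is correct and follows essentially the same route as the paper: condition on $\mathcal E_e$, invoke Proposition~\ref{prop:high_prob_events} for $\Pr[\mathcal E_e\mid e\notin Q_\crs]\ge 1-2\eps$, use independence of $R$ from $Q_\crs$ to lower bound the conditional probability by $(1-p)^{2(1+\eps)/p}$, and finish with a Taylor-type estimate. Your handling of the joint conditioning is in fact slightly cleaner than the paper's---the paper detours through $\Pr[\mathcal F_e\mid \mathcal E_e]$ and pays an extra additive $\Pr[e\in Q_\crs]\le\eps^2$ to pass back to $\Pr[\mathcal F_e\mid \mathcal E_e,\, e\notin Q_\crs]$, whereas you observe directly that both $\mathcal E_e$ and $\{e\notin Q_\crs\}$ are $Q_\crs$-measurable, so conditioning on them does not affect the law of $R$.
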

\begin{proof}
	
	First, notice that,
	\begin{equation*}
		\Pr[\text{$u$ and $v$ are unmatched in }\crsM\mid e\notin Q_{\crs} ] \geq \Pr[\mathcal F_e\mid e\notin Q_{\crs}],
	\end{equation*}
	because if all edges in $Q_{\crs}\cap N(e)$ are inactive then both endpoints $u$ and $v$ are unmatched in $\crsM$. Now, we can express,
	\begin{align*}
		\Pr[\mathcal F_e\mid e\notin Q_{\crs}] &\geq \Pr[\mathcal F_e\mid e\notin Q_{\crs}\land \mathcal E_e]\cdot \Pr[\mathcal E_e\mid e\notin Q_{\crs}]\\
		&\geq (1-2\eps)\cdot  \Pr[\mathcal F_e\mid e\notin Q_{\crs}\land \mathcal E_e].
	\end{align*}
	By observing 
	\begin{align*}
		\Pr[\mathcal F_e\mid \mathcal E_e] \leq  \Pr[\mathcal F_e\mid \mathcal E_e \land e\notin Q_{\crs}] + \Pr[e\in Q_{\crs}]
	\end{align*}
	we obtain
	$$ \Pr[\mathcal F_e\mid \mathcal E_e \land e\notin Q_{\crs}]\geq \Pr[\mathcal F_e\mid \mathcal E_e] - \Pr[e\in Q_{\crs}] \geq \Pr[\mathcal F_e\mid \mathcal E_e] - \eps^2.$$
	
	Combining everything, we obtain, 
	\begin{align*}
		\Pr[\mathcal F_e\mid e\notin Q_{\crs}] &\geq (1-2\eps)\cdot  (\Pr[\mathcal F_e\mid \mathcal E_e] - \eps^2)\\
		&\geq (1-2\eps) \left((1-p)^\frac{2}{p(1+\eps)} - \eps^2\right) \geq (1-2\eps) \left((1-p^2)e^{-\frac 2{1+\eps}} -\eps^2 \right)\\
		&\geq (1-2\eps) \left((1-p^2)(1-\eps)e^{-2} -\eps^2 \right)\\
		&\geq \frac{1-4\eps}{e^2} -\eps^2
	\end{align*}
	and conclude the proof.
\end{proof}

The previous lemma provides a lower bound on the probability of both endpoints of a non-crucial edge $e\in \nc$ being unmatched in $\crsM$. Next, we prove a lower bound on each edge $e$ being selected in the matching $M_\aug$. First, observe that the $M_\aug$ contains $\crsM$ therefore, each edge $e\in E$ is selected in $M_\aug$ with probability at least $\alpha\cdot q_e$. Then any non-crucial edge $e\in \nc$ that is not selected in $Q_\crs$ (hence it is not selected in $\crsM$) is further added in the $M_\aug$ when event $\mathcal F_e$ occurs. First, note that the event $\mathcal F_e$ implies $Q_\crs \cap N(e) \subseteq N(e)$ are inactive. Hence, given $\mathcal F_e$, property \eqref{eq:monotone} from Lemma~\ref{lem:non-crucial} further implies that $e\in \nc$ is selected in $M_\nc$ with probability approximately $q_e$. Formalizing the above discussion, we obtain the following Lemma.

\begin{lemma}\label{lem:augment}
	Let $M_\aug$ be the output of the procedure described in Algorithm~\ref{alg:query-matching} then,
	$$ \Pr[e \in M_\aug] \geq q_e \cdot \alpha\ \forall e \in \ce \qquad \text{and}\qquad \Pr[e \in M_\aug] \geq q_e \cdot \left(\alpha + \frac{1-O(\epsilon)}{e^2}\right)\ \forall e \in \nc.$$
\end{lemma}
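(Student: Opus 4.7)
The plan is to decompose the event $\{e \in M_\aug\}$ according to the two disjoint ways that an edge can end up in $M_\aug$ in Algorithm~\ref{alg:query-matching}: either $e$ is already present in $M_\crs$, or $e$ is added during the augmentation pass from $M_\nc$. For a crucial edge $e \in \ce$, I will use only the first contribution: since $M_\aug \supseteq M_\crs$ by construction, and $Q_\crs \cap R$ is a product distribution with marginals $\vec q$ on the matching polytope, the $\alpha$-balanced CRS guarantee for $\pi$ immediately yields $\Pr[e \in M_\aug] \geq \Pr[e \in M_\crs] \geq \alpha q_e$. The same bound is retained as the first summand in the bound for non-crucial edges.

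For the improved bound on non-crucial $e \in \nc$, the key observation is that the two ways of entering $M_\aug$ are genuinely disjoint: if $e \in M_\crs$ then both endpoints of $e$ are already occupied by $e$ itself, so $e$ cannot also be added during the augmentation loop. Consequently
\begin{equation*}
\Pr[e \in M_\aug] \geq \Pr[e \in M_\crs] + \Pr\bigl[e \in M_\nc,\ e \notin Q_\crs,\ \text{both endpoints of $e$ are unmatched in } M_\crs\bigr].
\end{equation*}
I would lower bound the second term by passing to the stronger event $\{e \in M_\nc\} \cap \{e \notin Q_\crs\} \cap \mathcal F_e$ (noting that $\mathcal F_e$ forces $M_\crs$ to contain no edge adjacent to $e$), and then factor it as
\begin{equation*}
\Pr[e \in M_\nc \mid e \notin Q_\crs,\ \mathcal F_e]\ \cdot\ \Pr[\mathcal F_e \mid e \notin Q_\crs]\ \cdot\ \Pr[e \notin Q_\crs].
\end{equation*}
The last factor is $1 - q_e/p \geq 1 - O(\epsilon)$, because $q_e < \tau(\epsilon) = O(\epsilon^3 p)$ when $e$ is non-crucial. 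The middle factor is at least $\tfrac{1-4\epsilon}{e^2} - \epsilon^2$ directly from Lemma~\ref{lem:feasibility_const}. The first factor should be at least $(1-12\epsilon) q_e$ via the monotonicity property~\eqref{eq:monotone} from Lemma~\ref{lem:non-crucial}. Multiplying the three factors and combining with the $\alpha q_e$ from $M_\crs$ yields the target bound $q_e\bigl(\alpha + (1-O(\epsilon))/e^2\bigr)$.

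The main obstacle I anticipate is justifying that first conditional lower bound rigorously. The event $\mathcal F_e$ involves both $Q_\crs$ and $R$, while $M_\nc$ depends on $Q_\greedy \cap R$; naively conditioning on $\mathcal F_e$ reshapes the law of $R$ on neighbors of $e$ and this interaction must be disentangled before~\eqref{eq:monotone} can be invoked. I plan to handle this by conditioning further on the random set $S := Q_\crs \cap N(e)$, which is a function of $Q_\crs$ alone and therefore independent of $R$ and of $Q_\greedy$. On each realization $S = S_0$, the event $\mathcal F_e$ reduces to the clean event $\{R \cap S_0 = \emptyset\}$ with $S_0 \subseteq N(e)$ deterministic, and applying~\eqref{eq:monotone} with this $S_0$ gives $\Pr[e \in M_\nc \mid e \notin Q_\crs,\ S = S_0,\ R \cap S_0 = \emptyset] \geq (1-12\epsilon)q_e$ uniformly in $S_0$. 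Averaging over $S_0$ then lifts this to the unconditional bound needed above.
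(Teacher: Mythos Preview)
Your proposal is correct and follows essentially the same decomposition and sequence of lemmas as the paper's proof: bound $\Pr[e\in M_\crs]\geq \alpha q_e$ via the CRS, pass from $\{e\notin M_\crs\}$ to the smaller event $\{e\notin Q_\crs\}$, factor out $\Pr[e\notin Q_\crs]\geq 1-\epsilon$, invoke Lemma~\ref{lem:feasibility_const} for $\Pr[\mathcal F_e\mid e\notin Q_\crs]$, and use~\eqref{eq:monotone} for the remaining conditional. Your treatment of the conditioning step is in fact more careful than the paper's: the paper applies~\eqref{eq:monotone} directly with the random set $N(e)\cap Q_\crs$ in place of $S$, whereas you correctly observe that~\eqref{eq:monotone} is stated for deterministic $S$ and supply the missing averaging argument by first conditioning on the realization $S_0$ of $Q_\crs\cap N(e)$ (which, being a function of $Q_\crs$ alone, is independent of $(Q_\greedy,R)$ and hence of $M_\nc$).
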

\begin{proof}
	We first show some probabilistic guarantees of $M_\aug$ for crucial and non-crucial edges separately. Since $\pi(Q_{\crs} \cap R)=M_\crs  \subseteq M_\aug$ is constructed by $\alpha$-balanced CRS $\pi$, we have $\Pr[e \in M_\aug] \geq \alpha \cdot q_e$ for all $e \in E$. For a non-crucial edge $e \in \nc$, we can improve this lower-bound as follows. For any non-crucial edge $e \in \nc$
	
	\begin{align*}
		\Pr[e \in M_\aug] &\geq \Pr[e \in M_\crs] + \Pr[e \notin M_\crs \wedge e \in M_\nc \wedge \mathcal F_e]\\
		&\geq \Pr[e \in M_\crs] + \Pr[e \notin Q_{\crs} \wedge e \in M_\nc \wedge \mathcal F_e] && (M_\crs \subseteq Q_\crs)\\
		&\geq \alpha \cdot q_e + \Pr[e \in M_\nc \wedge \mathcal F_e \mid e \notin Q_{\crs}] \cdot \Pr[e \notin Q_{\crs}] \\
		&\geq \alpha \cdot q_e + (1-\epsilon) \cdot \Pr[e \in M_\nc \wedge \mathcal F_e \mid e \notin Q_{\crs}], && (q_e/p \leq \epsilon)
	\end{align*}
	Where the first inequality holds because $\mathcal F_e$ implies that both endpoints of $e$ are unmatched in $\crsM$. Next, we further compute the conditional probability as follows
	\begin{align*}
		\Pr[e \in M_\nc \wedge \mathcal F_e \mid e \notin Q_{\crs}] &= \Pr[e \in M_\nc \mid \mathcal F_e \wedge e \notin Q_{\crs}] \cdot \Pr[\mathcal F_e \mid e \notin Q_{\crs}]\\
		&\geq  \Pr[e \in M_\nc \mid N(e) \cap Q_{\crs} \cap R = \emptyset] \cdot \Pr[\mathcal F_e \mid e \notin Q_{\crs}]\\
		&\geq (1-12\epsilon) \cdot q_e \cdot \Pr[\mathcal F_e \mid e \notin Q_{\crs}] && (\text{\eqref{eq:monotone}  from Lemma~\ref{lem:non-crucial}})\\
		&\geq (1-12\epsilon) \cdot q_e \cdot \left(\frac{1-4\eps}{e^2}-\eps^2\right) && (\text{Lemma~\ref{lem:feasibility_const}})\\
		&\geq (1-O(\epsilon)) \cdot \frac{q_e}{e^2}.
	\end{align*}
	Thus, we summarize lower bounds as follows
	$$ \Pr[e \in M_\aug] \geq q_e \cdot \alpha\ \forall e \in \ce \qquad \text{and}\qquad \Pr[e \in M_\aug] \geq q_e \cdot \left(\alpha + \frac{1-O(\epsilon)}{e^2}\right)\ \forall e \in \nc.$$
\end{proof}

Now, we are ready to prove the main theorem of this section.
\begin{proof}[Proof of Theorem~\ref{thm:matching-sparsifier}]
	Let $Q$ be the output of Algorithm~\ref{alg:matching_sparsify}. Let $M_\ce$ be the matching defined in Lemma~\ref{lem:crucial}, $M_\nc$ be the matching constructed in Lemma~\ref{lem:non-crucial} and $M_\aug$ be the matching constructed by Algorithm~\ref{alg:query-matching}. Notice that $M_\aug,M_\nc, M_\ce \subseteq Q \cap R$. 
	Recall that $\opt_\ce$ and $\opt_\nc$ are expected contributions of crucial and non-crucial edges to the optimum value. First, by Lemma~\ref{lem:augment}, we obtain
	\begin{align*} 
		\Ex[f(M_\aug)] &\geq \alpha \cdot \sum_{e\in \ce} w_e\cdot q_e + \left(\alpha + \frac{1-13\epsilon}{e^2}\right) \cdot \sum_{e\in \nc} w_e\cdot q_e \\
		&\geq \alpha \cdot \opt_\ce + \left(\alpha + \frac{1-13\epsilon}{e^2}\right) \cdot \opt_\nc.
	\end{align*}
	By Lemma~\ref{lem:crucial} and Lemma~\ref{lem:non-crucial} we also know that
	$$ \Ex[f(M_\ce)] \geq (1-\epsilon) \cdot \opt_\ce \qquad \text{and} \qquad \Ex[f(M_\nc)] \geq (1-\epsilon) \cdot \opt_\nc.$$
	So, combining three inequalities will give us 
	\begin{align*}
		\Ex[f(\optsol(Q \cap R))] &\geq \Ex[\max\{f(M_\ce),f(M_\aug),f(M_\nc)\}]\\
		&\geq \max\{\alpha \cdot \opt_{\ce} + \left(\alpha + \frac{1-13\epsilon}{e^2}\right) \cdot \opt_{\nc},\ (1-\epsilon) \cdot \opt_{\ce},\ (1-\epsilon) \cdot \opt_{\nc}\}.
	\end{align*}
	Minimizing the right hand side for $\opt_\ce$ and $\opt_\nc$ we obtain the following guarantee
	$$ \Ex[f(\optsol(Q \cap R))] \geq (1-13\eps) \cdot \frac{1+\alpha\cdot e^2}{1+e^2}\cdot \opt.$$
	Combining with Equation~\ref{eq:one-half}, we conclude the proof of the theorem.
\end{proof}

\section{Additive Optimization over the Intersection of $k$ Matroids}\label{sec:sparse_matroidintersections}
%\todo[inline]{Motivate this section}

Given our $(1-\epsilon)$-approximate sparsifier for additive optimization over a single matroid constraint, a natural question is whether the natural generalization of this algorithm to the intersection of matroids is $(1-\epsilon)$-approximate. This turns out to not be the case even for bipartite matching (the intersection of two partition matroids): the algorithm which iteratively selects a maximum-weight matching and removes it from the graph, for a total of $\poly(1/p)$ iterations, has an approximation guarantee bounded away from $1$ as shown by \cite{blum2015ignorance}.

%the natural generalization of this algorithm, in particular where we iteratively  the algorithm that 
%finds iteratively $I_1,\dots,I_\tau$ where $I_t$ is maximum-weight feasible set in the set system $\I\mid V\setminus \bigcup_{i=1}^{t-1}I_t$ \footnote{$\I\mid V\setminus \bigcup_{i=1}^{t-1}I_t = \left\{S\subseteq V\setminus \bigcup_{i=1}^{t-1}I_t: S\in \I   \right\}$} can not be an $(1-\epsilon)$-approximate sparsifier with degree $O(1/p)$ for weighted bipartite matching constraints (intersection of two partition matroids), as shown by \cite{behnezhad2019stochastic}.

It is well known that the greedy algorithm is $\frac 1 k$-approximate for maximizing an additive function over the intersection of $k$ matroids. We conjecture that this greedy algorithm can be converted to a $\frac{1-\epsilon}{k}$-approximate sparsifier with degree $\poly\left(\frac{1}{p\epsilon}\right)$. However, the main challenge here is that, unlike for a single matroid, multiple solutions for matroid intersection do not always ``combine'' well.   In this section, we prove a slightly weaker sparsification result for additive optimization over the intersection of $k$ matroid constraints, which nevertheless beats the best known bound of $1/(k+1)$ on the correlation gap of $k$-matroid intersection (see \cite{adamczyk2018random}), and therefore outperforms our generic sparsifier for this problem (see Corollary~\ref{coro:crsspars}).

%However, \cite{mestre2006greedy} shows that a greedy algorithm for additive objective and $k$-matroid intersection constraint outputs $\frac{1}{k}$-approximate solution to the deterministic packing problem. Such a greedy algorithm visits each item in decreasing order of weight and includes to solution if the item preserves feasibility. Similar to the case of matroids, we conjecture that the greedy algorithm, achieving $(1/k)$-approximate solution, can be converted into $(1/k)$-approximate greedy sparsifier for $k$-matroid intersections. However, the main challenge is the lack of augmentation properties for $k$-matroid intersections, which does not allow the perfect augmentation of two disjoint solutions. We have not completed this discussion by a theorem, and therefore we pose it as an open question. 

We exhibit a $\frac{(1-\epsilon)}{k+\frac{1}{k+1}}$-approximate sparsifier for additive optimization over the intersection of $k$~matroids. The following theorem is the main result of this section.

\begin{theorem}\label{thm:matroid_intersection_sparsifier}
 For each $\epsilon>0$, there is a $\frac{(1-\epsilon)}{k+\frac{1}{k+1}}$-approximate sparsifier of degree $O \left(\frac 1 {\eps\cdot p} \log \frac 1 \epsilon \right)$ for stochastic packing problem $\spack$ when $(E,\I)$ is the intersection of $k$ matroids and $f$ is additive.
\end{theorem}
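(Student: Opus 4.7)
The sparsifier draws $T = \Theta\!\left(\frac{1}{\epsilon p}\log\frac{1}{\epsilon}\right)$ i.i.d.\ samples $Q_1,\ldots,Q_T$ from the stochastic-optimum oracle $\mathcal D_{\optsol}$ and outputs $Q = \bigcup_{t=1}^{T} Q_t$. Since each $Q_t$ is a common independent set of size at most $r=\rank(\I)$, $\E[|Q|] \leq Tr$, which gives the required sparsification degree.

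\textbf{Analysis setup.} Let $q_e = \Pr_{R'}[e \in \optsol(R')]$ be the marginal probability that $e$ appears in the stochastic optimum under a fresh realization $R'$, so that $\vec q$ lies in the matroid-intersection polytope $P_\I$ and $\sum_e w_e q_e = \opt$. For each element, $\Pr[e \in Q] = 1-(1-q_e)^T$, which by the choice of $T$ is at least $1-\epsilon$ whenever $q_e \ge \epsilon p$. Up to a $(1-O(\epsilon))$ factor I would truncate the marginals below that ``heavy'' threshold and pretend that every heavy element appears in $Q$; the mass of light elements will be absorbed into the $(1-\epsilon)$ slack by a standard accounting using $\sum_e q_e \le r$.

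\textbf{Core via basis exchange.} Conditional on the samples and on $R$, the goal is to extract a common independent set $S \subseteq Q\cap R$ with $\E[f(S)] \ge \frac{1-O(\epsilon)}{k+\frac{1}{k+1}}\opt$. I would use a basis-exchange argument tailored to $k$-matroid intersection. For two common independent sets $A$ and $B$, the standard matroid-intersection exchange lemma gives, in each of the $k$ matroids $\M_i$ separately, a correspondence between $A\setminus B$ and $B\setminus A$ whose swaps preserve independence in $\M_i$. Instantiating this with $A$ equal to a single active sample $Q_t \cap R$ and $B=\optsol(R)$ already recovers the correlation-gap bound $\frac{1}{k+1}$. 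The strict improvement to $\frac{1}{k+\frac{1}{k+1}}$ should come from exploiting the $T$ independent samples simultaneously: I would set up a randomized rounding in which each element of the stochastic optimum survives the $k$ ``blocking'' matroid swaps and additionally one extra exchange slot that is available only because multiple i.i.d.\ candidate samples are present in $Q$. Formally, I would construct a randomized common independent set $S\subseteq Q\cap R$ with $\Pr[e\in S] \ge \frac{q_e}{k+\frac{1}{k+1}}(1-O(\epsilon))$ for every $e$, via a convex combination over sample indices of matroid-$i$ exchanges, where the extra $\frac{1}{k+1}$ in the denominator represents the fractional slot won back by this sample-averaging.

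\textbf{Main obstacle.} The challenging step is the refined exchange argument itself. The matroid-union theorem that powered the single-matroid analysis of Section~\ref{sec:sparse_matroid} does not generalize to $k$-matroid intersection, so we cannot cleanly combine independent sets; instead, we must route exchanges through all $k$ matroids without double-blocking an element and still squeeze an additional $\frac{1}{k+1}$ from the multi-sample structure. I expect the delicate part to be pinning down the right potential function (or LP-weighting) that saturates exactly at $\frac{1}{k+\frac{1}{k+1}}$ rather than $\frac{1}{k+1}$, and verifying that the potential drop in one exchange round amortizes correctly when averaged across the $T$ samples.
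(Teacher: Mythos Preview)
Your algorithm is the paper's, and you correctly spot that a heavy/light (``crucial/non-crucial'') split at threshold $\Theta(\epsilon p)$ is central. But two real gaps remain. First, light elements are \emph{not} absorbable into the $\epsilon$ slack: the bound $\sum_e q_e \le r$ says nothing about $\sum_{e\ \text{light}} w_e q_e$, which can be all of $\opt$. In the paper the light elements are precisely where the hard work happens. Second, the ratio $\frac{1}{k+1/(k+1)}$ does \emph{not} come from a single refined exchange that ``wins back an extra $\frac{1}{k+1}$ slot.'' It comes from combining two separate bounds and then taking the worse over all crucial/non-crucial splits: heavy elements appear in $Q$ with probability $\ge 1-\epsilon$ (so the sparsified problem captures $(1-\epsilon)\opt_{\mathrm C}$ outright), while light elements are handled by an exchange-map construction that yields $\Pr[e\in I^*] \gtrsim \frac{q_e}{k+q_e/p}$, which is $\approx q_e/k$ when $q_e \le \epsilon p$. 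Maximizing $\min\{x,\ \frac{x}{k+1}+\frac{1-x}{k}\}$ over the crucial fraction $x$ is what produces $\frac{1}{k+1/(k+1)}$.

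The exchange-map itself is also different from what you sketch. The paper never compares a sample to $\optsol(R)$ (which need not lie in $Q$). Instead, Algorithm~\ref{alg:constructT}/\ref{alg:constructI} processes $Q_1,\ldots,Q_\tau$ sequentially: for each $t$, it uses the single-matroid swap lemma in each of the $k$ matroids to push the active part $Q_t\cap R$ forward into every later $Q_i$, losing each element of $Q_i\setminus Q_t$ with probability at most $p$ per matroid. Summing over the index $i$ at which $e$ first appears gives $\Pr[e\in I^*]\ge \sum_i q_e(1-q_e)^{i-1}\cdot p(1-p)^{k(i-1)} \gtrsim \frac{p q_e}{kp+q_e}$. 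So the missing ideas are (i) the forward-propagation exchange between \emph{samples} rather than against $\optsol(R)$, and (ii) the two-bound max that turns $\approx 1$ for heavy and $\approx 1/k$ for light into the stated $\frac{1}{k+1/(k+1)}$.
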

We describe our sparsifier in Algorithm~\ref{alg:Sparse_MI}. This sparsifier samples $Q_1,\dots, Q_\tau \sim_{iid} \sopt$, where $\sopt$ is a stochastic optimal oracle, and outputs $Q = \bigcup_{i=1}^\tau Q_i $. Similar algorithms with degree $\poly\left( \frac 1 p\right)$ have been considered for the sparsification of matching by \cite{behnezhad2019stochastic,behnezhad2020unweighted}, and were shown to be $0.6568$-approximate for the unweighted additive objective and $0.501$-approximate for a general additive objective.
The same style of algorithm would later yield a $(1-\epsilon)$-approximate sparsifier for the unweighted additive objective, but with degree $exp(exp(exp(1/p)))$ \cite{behnezhad2020unweighted}.

 %which samples $Q_1,\dots, Q_\tau \sim_{iid} \sopt$ and outputs sparse subset $Q = \bigcup_{i=1}^\tau Q_i $. Similar algorithm have been analysed for sparsification of stochastic packing problem for additive function with matching constraints \footnote{Also known as stochastic matching problem with queries} in \cite{behnezhad2019stochastic}, and proved to be $0.6568$-approximate sparsifier for unweighted additive objective function and $0.501$-approximate sparsifier for general additive functions with sparsification degree $ \poly(1/p)$.  

%\sddelete{Notice that, Algorithm~\ref{alg:Sparse_MI} is $\frac 1 {k+1/(k+1)}$-approximate sparsification for additive objective function with k-matroid intersection constraints that improves $\frac{1}{k+1}$-sparsifier presented in Algorithm~\ref{alg:CRS_to_Sparse} and also outperforms the best known bound on correlation gap of $\frac 1 {k+1} $ \cite{adamczyk2018random}.} %We already said this above!

\begin{algorithm}[H]
    \caption{Sparsifier for additive optimization over the intersection of $k$ matroid constraints.}\label{alg:Sparse_MI}
   \textbf{Input: }$\spack$ where $(E,\I)$ is the intersection of $k$ matroids and $f$ is additive \\
    \textbf{Input: }$\sopt$ (stochastic optimum oracle for $\spack$)\\
    \textbf{Input: }$\epsilon>0$
    \begin{algorithmic}
    \State $\tau \gets \frac{2}{\eps p}\log \frac{2}{\epsilon}$
    
    \State $Q\gets \emptyset $
  \For{all $i=1\dots,\tau$}
    \State Sample $Q_i \sim \sopt$ 
    \State $Q\gets Q\cup Q_i$
    \EndFor
    \State \textbf{Output: }Sparse subset $Q$. 
    \end{algorithmic}
\end{algorithm}

%\todo[inline]{Make a discussion of proof idea for the algorithm.}

In order to prove Theorem~\ref{thm:matroid_intersection_sparsifier}, we provide a procedure for constructing a feasible solution $I\subseteq Q\cap R$ such that $\mathbb E [\sum_{i\in I}w_e] \geq \frac{(1-\epsilon)}{k+1/(k+1)} \opt$. First, we recall the following well-known lemma.

\begin{lemma}[See e.g. \cite{welsh10}]\label{lem:exchange_property}
    Let $S_1,S_2$ be any two independent sets of matroid $\M$, if $e\in S_1\setminus S_2$ and  $e\in \textbf{Span}(S_2)$ then $\exists f\in S_2\setminus S_1$ such that $S_1\setminus e \cup f\in \I$ and $S_2\setminus f\cup e \in \I$.
\end{lemma}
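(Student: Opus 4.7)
The plan is to reduce the statement to a single-circuit argument using the fundamental circuit of $e$ with respect to $S_2$. Since $e \notin S_2$ but $e \in \spn(S_2)$, the set $S_2 \cup \{e\}$ is dependent while $S_2$ is independent, so $S_2 \cup \{e\}$ contains a unique circuit $C$, which necessarily passes through $e$; hence $C \setminus \{e\} \subseteq S_2$. By the standard property of fundamental circuits, for every $f \in C \setminus \{e\}$ the set $S_2 - f + e = (S_2 \cup \{e\}) \setminus \{f\}$ is independent. Thus one side of the exchange is free: it suffices to exhibit an element $f \in (C \setminus \{e\}) \cap (S_2 \setminus S_1)$ for which $S_1 - e + f$ is also independent.

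First I would verify that the candidate set $A := (C \setminus \{e\}) \setminus S_1$ is nonempty. If instead $C \setminus \{e\} \subseteq S_1$, then combined with $e \in S_1$ we would obtain $C \subseteq S_1$, contradicting the independence of $S_1$. Note also that $A \subseteq S_2 \setminus S_1$ automatically since $C \setminus \{e\} \subseteq S_2$.

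The main step is to show that at least one $f \in A$ makes $S_1 - e + f$ independent. I will argue by contradiction: assume every $f \in A$ lies in $\spn(S_1 - e)$, so that $A \subseteq \spn(S_1 - e)$. The remaining elements of $C \setminus \{e\}$ lie in $S_1$, and being distinct from $e$ they lie in $S_1 - e \subseteq \spn(S_1 - e)$. Combining, $C \setminus \{e\} \subseteq \spn(S_1 - e)$. Since $C$ is a circuit through $e$, we have $e \in \spn(C \setminus \{e\})$, and applying monotonicity and idempotence of $\spn$ gives $e \in \spn(S_1 - e)$. This contradicts $S_1 = (S_1 - e) + e \in \I$, which forces $e \notin \spn(S_1 - e)$.

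The element $f$ produced this way lies in $S_2 \setminus S_1$ and in $C \setminus \{e\}$, so both $S_1 - e + f$ and $S_2 - f + e$ are independent, as required. The only place I expect any care is in the contradiction step, where one must cleanly invoke the fundamental-circuit characterization and the span axioms rather than a stronger exchange property; once the bookkeeping of which elements of $C$ lie in $S_1$ versus $A$ is set up, the rest of the argument is routine.
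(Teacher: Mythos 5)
Your proof is correct. The paper does not prove this lemma but cites it as standard (Welsh), and the argument you give — take the fundamental circuit $C$ of $e$ with respect to $S_2$, observe that $S_2 - f + e$ is independent for any $f \in C \setminus \{e\}$, then show by contradiction (using $e \in \spn(C\setminus\{e\})$ and span monotonicity/idempotence) that some such $f$ outside $S_1$ also makes $S_1 - e + f$ independent — is exactly the textbook proof of this symmetric exchange property.
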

%\begin{proof}
 % Given $S_1, S_2\in \I$ and $e\in S_1\setminus S_2$ there exists a unique $C\subseteq S_2$ such that $C\cup e$ is a circuit since $e \in \spn(S_2)$. Now, if there exists $f\in C$ such that $f\notin \textbf{Span}(S_1)$ then $S_1 \cup f \in \I$ and therefore, $S_1\cup f\setminus e \in \I$. On the other hand, $e \cup C$ is the unique circuit of $S_2 \cup e$, and so $S_2 \setminus f \cup e \in \I$. 
  
 % For the rest of the proof, we focus on the case when $\forall f\in C$, $f\in \spn(S_1)$. Now, for each $f\in C$, let $C_f \subseteq S_1$ be the minimal set such that $f\in \textbf{Span}(C_f)$. We claim that $\exists f\in C$ such that $e\in C_f$. For the sake of contradiction, let $e\notin C_f$ for all $f\in C$, hence, $e\notin \bigcup_{f\in C}C_f$. On the other hand, $C\subseteq \textbf{Span}\left( \bigcup_{f\in C}C_f\right) $ and $e\in \textbf{Span}(C)$, which implies $e\in \textbf{Span}\left( \bigcup_{f\in C}C_f\right) $ and $S_1\notin \I$. Which is a contradiction. Therefore, let $C_f$ be the set such that $e\in C_f$. Note that $f\notin S_1$, otherwise we get $f\in \textbf{Span}(C_f)$ (as $e\in C_f$) which contradicts $S_1\in \I$. Therefore, $C_f \cup f$ is a circuit that contains $e$ such that $C_f\subseteq S_1$ and $f\in S_2$. Therefore, $S_1\cup f\setminus e \in \I$ and $S_2\setminus f\cup e \in \I$. Concluding the proof. 
%\end{proof}
%The last Lemma 

%\npcomment{Explain the following Lemma} \ykcomment{+ How the previous lemma is useful to obtain that result?}

\begin{algorithm}[H]
    \caption{Procedure for Constructing Feasible $T$ for Lemma~\ref{lem:exchangemap}}\label{alg:constructT}
   \textbf{Input:} Matroid $\M=(E,\I)$ \\
      \textbf{Input:} Independent sets $S_1,S_2$ of $\M$ \\
    \textbf{Input:} Set $R \sse E$ of active elements 
    \begin{algorithmic}
    \State Let $S_1\setminus S_2 =\{e_1,\dots,e_\ell\}$.
    \For{all $i=1,\dots,\ell$}
        \If{$e_i\notin \textbf{Span}(S_2)$}
            \If{$e_i\in R$}
                \State $S_2\gets S_2\cup e_i$ 

            \Else
                \State $S_1\gets S_1\setminus e_i$                 
            \EndIf
        \Else
            \State Let $f_i\in S_2\setminus S_1$ be as in Lemma~\ref{lem:exchange_property}
            \If{$e_i\in R$}
                \State $S_2\gets S_2\setminus f_i \cup e_i$ 
            \Else
                \State $S_1\gets S_1\setminus e_i \cup f_i$ 
            \EndIf
        \EndIf
    \EndFor
    \State  \textbf{Output: }$T=S_2$
    \end{algorithmic}
\end{algorithm}

Let $S_1$ and $S_2$ be two independent sets of the same matroid $\M=(E,\I)$. Let $R\subseteq E$ be the (random) set of active elements with parameter $p$. Algorithm~\ref{alg:constructT} is a procedure which augments the active elements in $S_1$, i.e. $S_1\cap R$, with elements of $S_2$. We show in Lemma~\ref{lem:exchangemap} that this augmentation procedure includes each element of $S_2$ with probability $1-p$,\footnote{We mention that similar guarantees were claimed by \cite{maehara2019submodular} for their ``uniform exchange maps''. We believe  there is an irreparable bug in their analysis. Therefore, we give an alternative and constructive proof for the existence of such maps here.} and therefore outputs an independent set $T$ with $f(T) \geq \Ex[f(S_1 \cap R)] + (1-p) \cdot f(S_2)$. 

%Let $S_1$ and $S_2$ be two independent sets, Lemma~\ref{lem:exchangemap} argues that we can augment the set of active elements from $S_1$, i.e. $S_1 \cap R$ with the elements of $S_2$. For the sake of simplicity, consider any two disjoint independent sets $S_1$ and $S_2$ in a matroid $\I$, Lemma~\ref{lem:exchangemap} implies that if each element $e\in S_1$ is active with probability $p$, i.e. $\Pr[e\in R]=p$ then we can augment $S_1\cap R$ via elements of $S_2$ to form a feasible set $T$ such that each $f \in S_2$ appears in $T$ with probability $1-p$ . Thus, we ensure that $f(T) \geq \Ex[f(S_1 \cap R)] + (1-p) \cdot f(S_2)$. Such a property allows us to iteratively go over the selected samples $Q_1,\dots, Q_\tau$ and greedily construct a feasible set that approximates the expected stochastic optimum value.

\begin{lemma}\label{lem:exchangemap}
Let $\M=(E,\I)$ be a matroid, and let $S_1$ and $S_2$ be two independent sets $\M$. Let $R\subseteq E$ include each element of $E$ independently with probability $p$. Algorithm~\ref{alg:constructT} outputs a random set $T\subseteq (S_1\cap R) \cup S_2$ satisfying:	

\begin{enumerate}
	\item $S_1\cap S_2 \subseteq T$ with probability $1$.
        \item $T\in \I$ with probability $1$. 
	\item $(S_1\setminus S_2)\cap R \subseteq T$, i.e. $\Pr[e\in T] = p$ for all $e\in S_1\setminus S_2$.
	\item $\Pr[f\in T]\geq 1-p$ for all $f\in S_2\setminus S_1$. 
\end{enumerate}	
\end{lemma}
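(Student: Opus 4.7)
The plan is to analyze Algorithm~\ref{alg:constructT} iteration by iteration. Write $S_1^{(i)}, S_2^{(i)}$ for the state of the two sets after processing $e_1, \dots, e_i$ (so $S_1^{(0)} = S_1$, $S_2^{(0)} = S_2$, and $T = S_2^{(\ell)}$). I will maintain by induction on $i$ the following invariants:
\begin{enumerate}[label=(I\arabic*)]
\item $S_1^{(i)}, S_2^{(i)} \in \I$;
\item $S_1 \cap S_2 \subseteq S_1^{(i)} \cap S_2^{(i)}$;
\item for every $j \leq i$, $e_j \in S_1^{(i)} \cap S_2^{(i)}$ if $e_j \in R$, and $e_j \notin S_1^{(i)} \cup S_2^{(i)}$ otherwise;
\item any $f \in S_2 \setminus S_1$ that has been chosen as some $f_k$ at an iteration $k \leq i$ with $e_k \notin R$ belongs to $S_1^{(i)}$.
\end{enumerate}
All four propagate by a routine case analysis on the two branches of the algorithm, using Lemma~\ref{lem:exchange_property} to preserve (I1) in the Case 2 branch, and using the fact that elements of original $S_2 \setminus S_1$ are never equal to any $e_j$ to preserve (I4).

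Properties 1 and 2 of the lemma are then immediate from (I2) and (I1) at $i = \ell$. For Property 3, fix $e_i \in (S_1 \setminus S_2) \cap R$; both sub-branches of iteration $i$ in which $e_i \in R$ insert $e_i$ into $S_2$, and (I3) ensures $e_i$ survives in $S_2^{(\ell)}$ because $e_i$ remains in $S_1^{(j-1)}$ for every $j \geq i$, so $e_i$ is never a valid candidate $f_j$ for later deletion.

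The nontrivial step is Property 4. Fix $f \in S_2 \setminus S_1$, and let $J_f \in \{1,\dots,\ell\} \cup \{\bot\}$ denote the iteration at which $f$ is selected as $f_{J_f}$, if any. I will first show uniqueness: once $f$ is chosen at some iteration $j$, exactly one of two mutually exclusive outcomes occurs. If $e_j \in R$ then $f$ is removed from $S_2$ forever, so $f$ is no longer in the candidate pool $S_2^{(k-1)} \setminus S_1^{(k-1)}$ for any $k > j$; if $e_j \notin R$ then $f$ joins $S_1$ and by (I4) stays there, so again $f \notin S_2^{(k-1)} \setminus S_1^{(k-1)}$. Consequently, the event $\{f \notin T\}$ coincides with $\{J_f \neq \bot \text{ and } e_{J_f} \in R\}$. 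The decisive observation is that $J_f$ is predictable with respect to the filtration $\mathcal F_j = \sigma(R_{e_1}, \dots, R_{e_j})$: the candidate $f_j$ arising in iteration $j$ is a function of $S_1^{(j-1)}, S_2^{(j-1)}$, which are themselves functions of $R_{e_1}, \dots, R_{e_{j-1}}$, so $\{J_f = j\} \in \mathcal F_{j-1}$. Hence conditional on $J_f = j$, the coin $R_{e_j}$ is still unbiased, and
\begin{equation*}
\Pr[f \notin T] \;=\; \sum_{j=1}^{\ell} \Pr[J_f = j]\cdot \Pr[e_j \in R \mid J_f = j] \;=\; p \cdot \Pr[J_f \neq \bot] \;\leq\; p.
\end{equation*}

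The main obstacle is precisely this measurability argument behind Property 4: one must verify that the algorithm's entire trajectory up to, but not including, the flip $R_{e_{J_f}}$ is already determined by strictly earlier flips, so that conditioning on $J_f = j$ introduces no bias on $\{e_j \in R\}$. Executing this rigorously requires fixing a deterministic tie-breaking rule for the choice of $f_j$ among the candidates granted by Lemma~\ref{lem:exchange_property}, and then using invariants (I3) and (I4) to confirm that the candidate pools $S_2^{(j-1)} \setminus S_1^{(j-1)}$ shrink monotonically inside the original $S_2 \setminus S_1$, so that any given $f$ is indeed touched at most once.
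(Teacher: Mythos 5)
Your proof is correct and takes essentially the same approach as the paper's: the crux in both is that each $f \in S_2 \setminus S_1$ ``competes'' at most once, and the coin determining whether it loses is fresh at the moment $f$ is drawn into a competition, giving $\Pr[f \in T] \geq 1-p$ by a deferred-decisions argument. You are more explicit than the paper about the measurability/filtration justification behind that last step (that $\{J_f = j\}$ is determined by the earlier coin flips, so $R_{e_j}$ is unbiased conditionally), which the paper leaves implicit in the phrase ``$f_i$ competes for its position at most once, it follows that\dots''.
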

\begin{proof}
The algorithm only modifies (adds or deletes) items from the symmetric difference of $S_1$ and $S_2$, therefore, $S_1 \cap S_2 \subseteq T$ and property 1 holds. Observe that if $e_i \notin \spn(S_2)$ then $S_2 \cup e_i \in \I$. If $e_i \in \spn(S_2)$, by Lemma~\ref{lem:exchange_property}, we can find $f_i \in S_2 \setminus S_1$ which can be exchanged with $e_i$ without violating feasibility. Therefore $S_1$ and $S_2$ remains feasible throughout the procedure, and so property 2 holds. Observe that we add $e_i \in S_1 \setminus S_2$ to $S_2$ if and only if $e_i \in R$, which establishes property~3. Therefore, for all elements $e_i\in S_1\setminus S_2$ we have $\Pr[e_i\in T]=p$.
    
     We say that an element $f_i \in S_2 \setminus S_1$ competes with $e_i\in S_1\setminus S_2$ if the algorithm tries to swap $e_i$ and $f_i$.  We say that $f_i$ ``wins" this competition if and only if $e_i$ is inactive. If $f_i$ wins then it remains in $S_2$ and it is added to $S_1$ as well; therefore, $f_i$ remains in $S_2$ until the end of algorithm. In other words, $f_i$ competes for its position at most once.  It follows that $f_i \in S_2 \setminus S_1$ appears in the output $T$ with probability at least $1-p$, and so property 4 holds.
\end{proof}

Now, if $S_1$ and $S_2$ are independent in all $k$ matroids, i.e. $S_1, S_2 \in \bigcap_{\ell = 1}^k \I_\ell$, then we can run the procedure described in Algorithm~\ref{alg:constructT} for each matroid $\M_\ell$ for $\ell=1,\dots, k$ and obtain sets $T(\ell)\in I_\ell$ satisfying all properties of Lemma~\ref{lem:exchangemap} for their respective matroids. We then get the following corollary for their intersection. %Letting $T:=\bigcap_{\ell=1}^kT(\ell)\in \bigcap_{\ell = 1}^k \I_\ell$, we get the following corollary.
 
\begin{corollary}\label{coro:exchangemap}
    Let $\M_1, \ldots \M_k$ be matroids with $\M_\ell=(E,\I_\ell)$, and let $\I = \bigcap_{\ell=1}^k \I_\ell$ be their common independent sets. Let $S_1$ and $S_2$ be in $\I$.  Let $R\subseteq E$ include each element of $E$ independently with probability $p$. Let $T(\ell) \in \I_\ell$ be the output of Algorithm~\ref{alg:constructT} for matroid $\M_\ell$, for each $\ell \in [k]$. The set $T:=\bigcap_{\ell=1}^kT(\ell)$ satisfies:
\begin{enumerate}

  \item $S_1\cap S_2 \subseteq T$ with probability $1$.
  \item $T\in \I$ with probability $1$. 
  \item $(S_1\setminus S_2)\cap R \subseteq T$, i.e. $\Pr[e\in T] = p$ for all $e\in S_1\setminus S_2$. 
  \item $\Pr[f\in T]\geq (1-p)^k$ for all $f\in S_2\setminus S_1$
\end{enumerate}
\end{corollary}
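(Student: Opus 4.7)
The plan is to derive each of the four properties by applying Lemma~\ref{lem:exchangemap} to every matroid $\M_\ell$ individually and then combining the resulting $T(\ell)$. The first three follow immediately. For property~1, Lemma~\ref{lem:exchangemap}(1) gives $S_1 \cap S_2 \subseteq T(\ell)$ for every $\ell$, hence $S_1 \cap S_2 \subseteq \bigcap_\ell T(\ell) = T$ deterministically. For property~2, Lemma~\ref{lem:exchangemap}(2) yields $T(\ell) \in \I_\ell$ for every $\ell$; since $T \subseteq T(\ell)$ and each $\I_\ell$ is downward closed, $T \in \I_\ell$ for all $\ell$, so $T \in \bigcap_\ell \I_\ell = \I$. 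For property~3, Lemma~\ref{lem:exchangemap}(3) deterministically places every $e \in (S_1 \setminus S_2) \cap R$ into each $T(\ell)$, hence into $T$, giving $\Pr[e \in T] = p$ for all $e \in S_1 \setminus S_2$.

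The only delicate step is property~4. Fix $f \in S_2 \setminus S_1$. Lemma~\ref{lem:exchangemap}(4) gives $\Pr[f \in T(\ell)] \geq 1-p$ for each individual $\ell$, but a direct union bound over $\ell$ only yields $\Pr[f \in T] \geq 1 - kp$, which is strictly weaker than the target $(1-p)^k$. My plan is to close this gap via a correlation argument. I would show that the indicator of $\{f \in T(\ell)\}$ is a monotone decreasing function of $R$, viewed as an indicator vector in $\{0,1\}^E$, and then invoke the FKG inequality on the product measure of $R$ to conclude
\[
\Pr\!\left[\,\bigcap_\ell \{f \in T(\ell)\}\,\right] \;\geq\; \prod_\ell \Pr[f \in T(\ell)] \;\geq\; (1-p)^k.
\]

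The hard part is verifying this monotonicity claim, because Algorithm~\ref{alg:constructT} takes different swap paths under different $R$. The plan is a single-coordinate coupling: given $R \subseteq R'$ differing in exactly one element $e \in S_1 \setminus S_2$, execute the algorithm on $\M_\ell$ in parallel for both inputs and argue inductively over the processing order that $f \in T(\ell; R')$ implies $f \in T(\ell; R)$. The intuition is that activating one more element of $S_1 \setminus S_2$ only increases the total number of swaps into $S_2$, and thereby the pressure to evict members of $S_2 \setminus S_1$, so it cannot reinstate $f$ once it has been swapped out. Making this precise requires (i) fixing a canonical deterministic tie-breaking rule for the choice of partner $f_i$ in Lemma~\ref{lem:exchange_property}, so that partner changes are not themselves an independent source of divergence, and (ii) tracking how the symmetric difference of the two runs' states evolves from step $e$ onward, showing that at the termination the $R'$-run's $S_2$ differs from the $R$-run's only by further evictions. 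Once this coupling delivers monotonicity, the FKG step is routine and completes the proof of property~4.
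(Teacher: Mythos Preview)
Your derivations of properties~1--3 are correct and match the paper's intent: the paper states the corollary without proof, treating it as immediate from Lemma~\ref{lem:exchangemap} applied to each $\M_\ell$.

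For property~4, you correctly identify that a naive union bound only yields $1-kp$ rather than $(1-p)^k$, and you propose to close the gap via FKG by showing that $\mathbb{1}[f \in T(\ell)]$ is a monotone non-increasing function of $R$. Unfortunately, this monotonicity claim is \emph{false} in general, so the coupling you sketch cannot succeed. Here is a concrete counterexample. Take the linear matroid over $\RR^3$ with $e_1=(1,0,0)$, $e_2=(0,1,1)$, $f=(0,1,0)$, $g=(1,1,0)$, $h=(0,0,1)$; set $S_1=\{e_1,e_2\}$ and $S_2=\{f,g,h\}$, and use the deterministic tie-breaking rule that prefers $f$ over $g$ over $h$. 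At step~1 the unique circuit through $e_1$ in $S_2\cup\{e_1\}$ is $\{e_1,f,g\}$, so the partner is $f$. At step~2, if $e_1$ was active the current $S_2$ is $\{e_1,g,h\}$, the circuit through $e_2$ is $\{e_1,e_2,g,h\}$, and tie-breaking selects partner $g$; if $e_1$ was inactive the current $S_2$ is still $\{f,g,h\}$ but now $f\in S_1$, the circuit through $e_2$ is $\{e_2,f,h\}$, and the only admissible partner is $h$. Consequently, with $R=\{e_2\}$ the output is $\{f,g,e_2\}$ (so $h\notin T$), whereas with the strictly larger $R'=\{e_1,e_2\}$ the output is $\{e_1,e_2,h\}$ (so $h\in T$). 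Thus $\mathbb{1}[h\in T]$ is not decreasing in $R$, and your inductive claim that ``the $R'$-run's $S_2$ differs from the $R$-run's only by further evictions'' fails: activating $e_1$ changes \emph{which} element is evicted at step~2, rescuing $h$.

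The paper itself offers no argument for the $(1-p)^k$ bound beyond declaring the statement a corollary, and it is not clear that the bound as written is actually correct. What is clear is that the union bound gives $\Pr[f\in T]\geq 1-kp$, and this weaker bound is all that is used downstream: in the proof of Theorem~\ref{thm:matroid_intersection_sparsifier} the authors immediately invoke $(1-p)^k\geq 1-kp$ and proceed with $1-kp$. So the safe route is to prove property~4 with $1-kp$ in place of $(1-p)^k$, note that this suffices for every subsequent use, and flag the stated bound as unjustified.
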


Now we are ready to state Algorithm~\ref{alg:constructI} that finds an active feasible subset of $Q \cap R$ by using Corollary~\ref{coro:exchangemap}, where $Q$ is the output of Algorithm~\ref{alg:Sparse_MI}.

Algorithm~\ref{alg:constructI} visits samples $Q_1,\dots, Q_\tau$ in order. The algorithm starts with an empty ``working solution" $I(0)$ at the beginning. At the first iteration of the outer for-loop, Algorithm~\ref{alg:constructI} adds all the active elements from $Q_1$ to the working solution $I(1) = Q_1\cap R$. Moreover, in the inner for-loop, the algorithm iterates over all $j>1$ and updates $Q_j$. In particular, $I(1)$ is added to $Q_j$ and some elements are removed to preserve feasibility, as per Corollary~ \ref{coro:exchangemap}. This process is then repeated for $t=2,\dots,\tau$.

\begin{algorithm}[H]
    \caption{Procedure for Constructing $I\subseteq Q\cap R$, feasible for $\bigcap_{\ell=1}^k \M_\ell$}\label{alg:constructI}
    \textbf{Input: } $Q= \bigcup_{t=1}^\tau Q_t$ as in Algorithm~\ref{alg:Sparse_MI}, and the set $R$ of active elements.
    \begin{algorithmic}
    % \State $Q_{\tau+1} \gets \emptyset$
    \State $I(0)\gets \emptyset$
    \For{all $t=1,\dots,\tau$}
        \State $I(t)\gets (Q_t\cap R)$.
        \For{all $i=t+1, \dots, \tau$}
            \State Let $T_\ell$ be the output of Algorithm \ref{alg:constructT} for $\M_\ell$ with inputs $S_1=Q_t$, $S_2=Q_i$, and $R$.
            \State Update $Q_i \gets \bigcap_{\ell=1}^k T_\ell$.
        \EndFor
    \EndFor
    \State \textbf{Output:} $I(\tau)$.
    \end{algorithmic}
\end{algorithm}
First, we show that Algorithm~\ref{alg:constructI} outputs a feasible set of active elements.
%Before we prove probabilistic guarantees of the procedure, we first discuss that the output of Algorithm~\ref{alg:constructI} is feasible and contains only active elements.

\begin{proposition}
    The output $I^*=I(\tau)$ of Algorithm~\ref{alg:constructI} is feasible for $\bigcap_{\ell=1}^k \M_{\ell}$, and satisfies  $I^*\subseteq R\cap Q$, with probability $1$.
\end{proposition}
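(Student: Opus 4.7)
The plan is to maintain two simultaneous invariants throughout the execution of Algorithm~\ref{alg:constructI}: at every point during the algorithm, for every $t \in \{1,\ldots,\tau\}$, the current value of $Q_t$ satisfies (a) $Q_t \in \bigcap_{\ell=1}^k \I_\ell$, and (b) $Q_t \subseteq Q$, where $Q$ denotes the union of the original $\tau$ samples drawn from $\sopt$. Both invariants hold trivially at initialization, since each original $Q_s \sim \sopt$ is feasible in $\bigcap_\ell \I_\ell$ by definition of the stochastic optimum oracle, and each such set is tautologically contained in $Q$.

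To verify that the two invariants survive each update $Q_i \gets \bigcap_{\ell=1}^k T_\ell$ in the inner loop, I would invoke Lemma~\ref{lem:exchangemap} and Corollary~\ref{coro:exchangemap}. Invariant (a) is preserved because each $T_\ell$ is produced by Algorithm~\ref{alg:constructT} applied to $\M_\ell$ on inputs that are independent in $\M_\ell$ (by the inductive hypothesis), and hence $T_\ell \in \I_\ell$ by property~2 of Lemma~\ref{lem:exchangemap}; intersecting over $\ell$ lands in $\bigcap_\ell \I_\ell$. Invariant (b) requires the observation that the output of Algorithm~\ref{alg:constructT} satisfies $T_\ell \subseteq S_1 \cup S_2$ (by inspection: the procedure only ever adds elements of $S_1$ to $S_2$ and only ever removes elements of $S_2$, never introducing anything outside $S_1 \cup S_2$). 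Applied with $S_1 = Q_t$ and $S_2 = Q_i$ at their current values, this yields $Q_i^{\mathrm{new}} \subseteq Q_t \cup Q_i^{\mathrm{old}} \subseteq Q$ by the inductive hypothesis. The other $Q_s$ are untouched during the update, so their invariants persist.

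With both invariants in hand, the conclusion is immediate: at the end, $I^* = I(\tau) = Q_\tau \cap R$ where $Q_\tau$ denotes its current (possibly modified) value at the start of outer iteration $\tau$. By invariant (a) and downward-closedness of each $\I_\ell$, we have $I^* \in \bigcap_{\ell=1}^k \I_\ell$; trivially $I^* \subseteq R$; and by invariant (b), $I^* \subseteq Q_\tau \subseteq Q$. Combining these yields $I^* \subseteq R \cap Q$ and feasibility in the intersection of the $k$ matroids, with probability $1$. I do not anticipate any substantive obstacle here: the only subtlety is confirming the containment $T \subseteq S_1 \cup S_2$ for Algorithm~\ref{alg:constructT}, which is a one-line observation from its pseudocode, and recognizing that Corollary~\ref{coro:exchangemap}'s probabilistic guarantees (properties~3 and~4) are not needed for this proposition — only the deterministic properties~1 and~2 are required.
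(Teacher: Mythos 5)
Your proof is correct and takes essentially the same approach as the paper: both rely on the deterministic properties (feasibility and containment) guaranteed by Lemma~\ref{lem:exchangemap} and Corollary~\ref{coro:exchangemap}, propagated inductively through the inner-loop updates. The paper's own proof is terser and does not explicitly verify the $Q_i \subseteq Q$ invariant, whereas you spell it out via the observation $T \subseteq (S_1 \cap R) \cup S_2$ (which is in fact already part of the statement of Lemma~\ref{lem:exchangemap}, so you could cite it rather than re-derive it); you also correctly note that only the deterministic properties~1--2 of the corollary are needed, not the probabilistic properties~3--4.
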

\begin{proof}

Observe that in each step $i$ of the inner loop, the algorithm updates $Q_i$ by using the procedure defined by Corollary~\ref{coro:exchangemap}. Therefore, all $Q_i$'s remain feasible. Moreover, the output $I(\tau) = Q_\tau \cap R$ at the end of the procedure only contains active elements. Therefore Algorithm~\ref{alg:constructT} outputs a feasible subset of $Q \cap R$.
\end{proof}

The following lemma proves a lower bound on the probability that Algorithm~\ref{alg:constructI} includes each element of $Q$. 

\begin{lemma}  \label{lem:exist-guarantee}
	Let $I^* = I(\tau)$ be the output of Algorithm~\ref{alg:constructI}. For any $e \in E$, we have 
    $$\Pr[e \in I^* \mid e \in Q_i \setminus Q_{1:i-1}] \geq p \cdot (1-p)^{k(i-1)}$$ 
    where $Q_{1:i-1} := \bigcup_{\ell=1}^{i-1} Q_i$.
\end{lemma}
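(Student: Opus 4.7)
My plan is to trace the element $e$ through the execution of Algorithm~\ref{alg:constructI}. Let $Q_t^{(s)}$ denote the state of $Q_t$ at the end of iteration $s$, so that $I^* = I(\tau) = Q_\tau^{(\tau-1)} \cap R$. Conditional on $e \in Q_i \setminus Q_{1:i-1}$, I split the analysis into three phases: (i) \emph{survival} of $e$ in $Q_i$ during iterations $t = 1, \ldots, i-1$, where $e$ appears in $S_2 \setminus S_1$ at every relevant update; (ii) \emph{propagation} of $e$ into every $Q_j$ with $j > i$ at iteration $t = i$, where $e$ now lies in $S_1$; and (iii) \emph{persistence} of $e$ in those $Q_j$'s through iterations $t = i+1, \ldots, \tau-1$.

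A preliminary induction on $s$ shows that $e \notin Q_t^{(s)}$ for every $t < i$: the base case follows from the conditioning, and the inductive step uses that each update output $T \subseteq (S_1 \cap R) \cup S_2$ inherits the absence of $e$ from $S_1$ and $S_2$. A useful byproduct is that $R(e)$ is never queried during phase (i), so the event $\{e \in Q_i^{(i-1)}\}$ is independent of $\{e \in R\}$. For phase (i), at each iteration $t < i$ we have $e \in S_2 \setminus S_1$ in the update of $Q_i$, so property 4 of Corollary~\ref{coro:exchangemap} gives $\Pr[e \in Q_i^{(t)} \mid e \in Q_i^{(t-1)}] \geq (1-p)^k$, and chaining yields $\Pr[e \in Q_i^{(i-1)} \mid e \in Q_i \setminus Q_{1:i-1}] \geq (1-p)^{k(i-1)}$. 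For phase (ii), conditional on $e \in R$ and $e \in Q_i^{(i-1)}$, every iteration-$i$ update of some $Q_j$ has $e \in S_1$; either property 1 (when $e \in S_2$) or property 3 together with $e \in R$ (when $e \in S_1 \setminus S_2$) forces $e \in Q_j^{(i)}$. For phase (iii), a short induction on $t \geq i$ shows $e \in Q_t^{(t-1)}$ and $e \in Q_j^{(t-1)}$ for all $j > t$, so property 1 keeps $e$ in $Q_j^{(t)}$, and in particular $e \in Q_\tau^{(\tau-1)}$. Multiplying the independent probabilities from phase (i) and $\Pr[e \in R] = p$ yields the claimed bound $p \cdot (1-p)^{k(i-1)}$.

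The step requiring the most care is the chaining in phase (i): Corollary~\ref{coro:exchangemap} bounds per-iteration survival by $(1-p)^k$ for a fixed pair $(S_1, S_2)$, but across iterations the pairs themselves depend on $R$-values already revealed by earlier updates. I would justify the chaining by tracking $e$'s swap-partner history directly: across $i-1$ iterations and $k$ matroids, $e$ can be selected as an exchange partner $f_j$ at most $k(i-1)$ times in total, since any such selection either removes $e$ from $S_2$ (when $e_j \in R$) or moves $e$ into $S_1 \cap S_2$ (when $e_j \notin R$), in either case making $e$ ineligible for further selection within that call. Thus $e$ survives iff each of the at-most-$k(i-1)$ corresponding partner elements is inactive in $R$; since these partners are all distinct from $e$, revealing $R$-bits only upon query shows that their joint inactive probability is at least $(1-p)^{k(i-1)}$ (coincidences among partner elements only improve the bound).
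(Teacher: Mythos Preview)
Your proposal is correct and follows essentially the same three-phase structure as the paper's proof: the paper also reduces to showing $\Pr[e \in I(i)] \geq p\,(1-p)^{k(i-1)}$ via the observation that $e \in I(i)$ propagates to $I(\tau)$, then chains $\Pr[e \in Q_i(t) \mid e \in Q_i(t-1)] \geq (1-p)^k$ by invoking Corollary~\ref{coro:exchangemap} and uses deferred decisions for the factor of $p$. Your preliminary induction (that $e$ never enters any $Q_t$ with $t<i$) and your last paragraph (arguing why the per-iteration bound survives conditioning on the history) make explicit the steps the paper leaves implicit.
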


\begin{proof}
    First of all, we fix $e \in E$ and condition on $e\in Q_i \setminus Q_{1:i-1}$. Let $Q_i(t)$ be the set $Q_i$ after step $t$ of the outer loop for $t=0,\dots, i$. For simplicity of notation, denote $Q_i=Q_i(0)$. It suffices to  show that 
    $$\Pr[e \in I(i)] \geq p \cdot (1-p)^{k \cdot (i-1)}$$
    since if $e \in I(i)$, then $e \in Q_j(i)$ for any $j \geq i$, and so $e \in I(\tau)$.

    We now lowerbound the probability that $e \in I(i)$. We have
    $$ \Pr[e \in I(i)] = \Pr[e \in Q_i(i-1) \cap R] = p \cdot \Pr[e \in Q_i(i-1)]$$
    where the last equality follows from the principal of deferred decisions and the fact that $e_i\in Q_i\setminus Q_{1:i-1}$. Now, iterative application of Corollary~\ref{coro:exchangemap} completes the proof:
    \begin{align*}
        \Pr[e \in Q_i(i-1)] &= \prod_{t=1}^{i-1} \Pr[e \in Q_i(t) \mid e \in Q_i(t-1)] \\ &\geq  \prod_{t=1}^{i-1} (1-p)^k = (1-p)^{k\cdot(i-1)}.
    \end{align*}
%    The inequality follows from \sddelete{the assumption that $e\in Q_i(0)$ and} iterative application of 
\end{proof}

Now, we are ready to prove Theorem~\ref{thm:matroid_intersection_sparsifier}.
\begin{proof}[Proof of Theorem~\ref{thm:matroid_intersection_sparsifier}]
  Let $\set{w_e}_{e \in E}$ be the weights associated with the additive objective $f$. Let $Q_1,\dots, Q_\tau \sim_{iid} \sopt$ be the samples selected by Algorithm~\ref{alg:Sparse_MI}, $Q = \bigcup_{i=1}^\tau Q_i$ be their union, and $R$ be the set of active elements. Let $q_e = \Pr_{\O\sim \sopt}[e\in \O]$ be the probability that $e$ is in the stochastic optimum solution $\sopt$, for all $e\in E$.

  Given $\{q_e\}_{e\in E}$, We say that element $e\in E$ is crucial if $q_e\geq  p\cdot \eps / 2$, otherwise we call $e$ non-crucial. We denote the set of crucial elements by $\operatorname{C}$, and the set of non-crucial elements by $\operatorname{NC}$. We can write the optimum value of $\spack$ as follows:
\begin{equation}
    \opt = \sum_{e\in E}w_e\cdot q_e = \sum_{e\in \operatorname{C}}w_e\cdot q_e +  \sum_{e\in \operatorname{NC}}w_e\cdot q_e
\end{equation}
Observe that for all $e\in E$, $\Pr[e\in Q_i] = q_e\leq p$ and that event is independent for $i=1,\dots , \tau$. So, $\Pr[e\notin Q] = (1-q_e)^\tau $. For any crucial element $e\in \operatorname{C}$, we have
\begin{equation*}
   \Pr[e\in Q] = 1-(1-q_e)^\tau \geq 1-\left(1-\frac{\eps\cdot p}{2}\right)^\tau  \geq 1-\epsilon.
\end{equation*}
Therefore, we obtain our first bound on the performance of Algorithm~\ref{alg:Sparse_MI} as follows, where $\O \sse R$ is a stochastic optimum solution with distribution $\sopt$.
\begin{equation}
   \E\left[\max_{\substack{I\subseteq Q\cap R\\ I\in \I}} f(I) \right]\geq \E [f(\O\cap Q)]  \geq (1-\epsilon)\cdot \sum_{e\in \operatorname{C}} w_e\cdot q_e \label{eq:value_important},
 \end{equation}

Now let $I^* \subseteq Q \intersect R$ be the output of Algorithm~\ref{alg:constructI}. We can bound the probability that each element is in $I^*$ as follows:

\begin{align*}
    \Pr[e\in I^*] &= \sum_{i=1}^\tau \Pr[e\in Q_i \setminus Q_{1:i-1}]\cdot \Pr[e\in I^* \mid e\in Q_i \setminus Q_{1:i-1}]\\
    &= \sum_{i=1}^\tau \Pr\left[e\in Q_i \wedge \bigcap_{j=1}^{i-1}\{e\notin Q_j\} \right] \cdot \Pr[e\in I^* \mid e\in Q_i \setminus Q_{1:i-1}]\\
    &\geq \sum_{i=1}^\tau q_e (1-q_e)^{i-1} \cdot p (1-p)^{k \cdot (i-1)} && (\text{Lemma~\ref{lem:exist-guarantee}})\\
                  & \geq  p\cdot q_e \cdot \frac{1-((1-p)^{k}\cdot(1-q_e))^{\tau}}{1-(1-p)^k\cdot (1-q_e)} && (\text{Truncated geometric sum})\\
    &\geq \left(1-\frac{\eps}{2}\right) \cdot \frac{p \cdot q_e}{1-(1-p)^k\cdot (1-q_e)} && (\tau \geq \frac{1}{p}\log\left(\frac{2}{\epsilon}\right) \text{ and } (1-q_e)\leq 1)\\
                  &\geq \left(1-\frac{\eps}{2}\right) \cdot \frac{p \cdot q_e}{1-(1-kp)\cdot (1-q_e)} && ((1-p)^k \geq 1-kp)\\
                  &= \left(1-\frac{\eps}{2}\right) \cdot \frac{p \cdot q_e}{kp+q_e - kpq_e} && ((1-p)^k \geq 1-kp)\\
    &\geq  \left(1-\frac{\epsilon}{2}\right)\cdot \frac{q_e}{k+q_e/p}
\end{align*}
Since $q_e \leq p$, we have  $\Pr[e\in I^*]\geq \frac{1- \eps}{k+1} \cdot q_e$ for all elements $e$, in particular crucial elements. For non-crucial elements $e$, we we obtain the following tighter bound:

\begin{equation*}
\Pr[e\in I^*]\geq \left(1-\frac{\eps}{2}\right) \cdot \frac{q_e}{k+\eps/2}   \geq \frac{(1-\eps)}{k}  \cdot q_e 
\end{equation*}
 We thus obtain our second bound on the performance of Algorithm~\ref{alg:Sparse_MI}.
\begin{align}
   \E\left[\max_{\substack{I\subseteq Q\cap R\\I\in \I}} \sum_{e\in I}w_e \right] \geq \E \left[\sum_{e\in I^*}w_e\right]
   \geq \frac{(1-\epsilon)}{k+1}\cdot \sum_{e\in \operatorname{C}}w_e\cdot q_e + \frac{(1-\epsilon)}{k}\cdot \sum_{e\in \operatorname{NC}}w_e\cdot q_e \label{eq:value_I}
\end{align}

Combining our bounds from Equations~\eqref{eq:value_important} and~\eqref{eq:value_I}, we get
\begin{align*}
   \E\left[\max_{\substack{I\subseteq Q\cap R\\I\in \I}} \sum_{e\in I}w_e \right] &\geq \max\left\{(1-\epsilon)\cdot \sum_{e\in \operatorname{C}}w_e\cdot q_e, \quad \frac{(1-\epsilon)}{k+1}\cdot \sum_{e\in \operatorname{C}}w_e\cdot q_e + \frac{(1-\epsilon)}{k}\cdot \sum_{e\in \operatorname{NC}}w_e\cdot q_e \right\}\\
   &\geq \frac{(1-\epsilon )}{k+1/(k+1)}\cdot \sum_{e\in E}w_e\cdot q_e\\
   &\geq \frac{(1-\epsilon )}{k+1/(k+1)}\cdot \opt,
\end{align*}
concluding the proof of the theorem. 
\end{proof}

%%% Local Variables:
%%% mode: latex
%%% TeX-master: "main"
%%% End:

\section{Submodular Optimization}
\label{sec:submodular}

In this section, we consider the sparsification of stochastic packing problems with a monotone submodular objective. First, we design improved polynomial-time sparsifiers for some SPPs with a coverage submodular objective. Second, we show an information-theoretic impossibility result for sparsification in this setting. In particular, even for optimizing a coverage function subject to a uniform matroid constraint we show that no sparsifier with degree independent of the number of elements can achieve an approximation ratio better than $\left( 1-\frac 1 e\right)$.

\subsection{Coverage Function Optimization} \label{sec:coverage}
 A set function $f:2^E\rightarrow \R_+ $ is a \emph{coverage function} if elements of the ground set $i \in E$ index subsets $A_i$ of some universe $U$, and  $f(S)= \left|\bigcup_{i \in S} A_i \right|$ for $S \subseteq E$. Whereas in general $U$ may be infinite, or even an arbitrary measure space, here we consider the case where $U$ is finite. For clarity, we refer to $i \in E$ as \emph{elements} and $j \in U$ as \emph{points}.%\sdedit{For convenience, we let $U=[m]=\set{1,\ldots,m}$ and $E=[n]=\set{1,\ldots,n}$.}
%\sdcomment{I got rid of $n$ and $m$, not necessary.}

In this section, we consider SPPs $\spack$ when $f$ is a coverage function. For polynomial-time implementation we also require that $f$ is given explicitly, and that the polytope of the set system admits an efficient separation oracle (which is the case for matroids and matroid intersections). We employ monotone contention resolution schemes for the $(E,\I)$ as a proof tool, and express our approximation ratio as a function the best balance ratio of such a monotone CRS. The most notable instantiation of this result is for submodular optimization subject to a matroid constraint: we improve the $\left(1-\frac{1}{e}\right)^3$-approximate polynomial-time sparsifier of Corollary~\ref{coro:crsspars} to $\left(1-\frac{1}{e}\right)^2$ when the objective is a coverage function, while maintaining the same degree of $\frac{1}{p}$.

%	Recall Corollary~\ref{coro:crsspars} from Section~\ref{sec:CRS_sparse}, \sdedit{which presents polynomial time $\left(1-\frac{1}{e}\right)^3$-approximate sparsifier with degree $\frac{1}{p}$ for submodular optimization subject to a matroid constraint. In this section, we improve the approximation ratio to $\left(1-\frac{1}{e}\right)^2$ when the objective is a coverage function, while maintaining the same degree of $\frac{1}{p}$.}%we design $\left(1-\frac{1}{e}\right)^2$-approximate $\poly$-time sparsifier for $\spack$ with degree of $\frac 1 p$ when objective $f$ is coverage.

     Our sparsifier is based on rounding the following LP relaxation of the stochastic optimum of of $\spack$ to a sparse subsect of degree $\frac 1 p$. Here $P_\I = \operatorname{convexhull}\{\mathbbm{1}_I: I\in \I\} \sse [0,1]^E$ is the polytope of the set system $(E,\I)$.
%We design a poly-time sparsifier for coverage functions over a matroid constraint by first constructing the following poly-time solvable LP-relaxation of the stochastic optimum of $\spack$.  Then, we construct our sparsifier by rounding the optimal solution of LP to a sparse subset of degree $\frac 1 p$ with desired approximation guarantees. 

\begin{align*}
    \max\qquad &\sum_{j \in U} y_j\\
    \text{subject to}\qquad &y_j \leq \sum_{i: j \in A_i} x_i &\text{for all } j \in U,\\
    &y_j \leq 1 &\text{for all }j \in U, \tag{LP}\label{lp:coverage}\\
    &x_i \leq p &\text{for all }i \in E,\\
    &\vec{x} \in P_\I
    % &\sum_{j \in S} y_j \leq \rank(S) &\text{for all }S \subseteq V
\end{align*}

It is easy to see that \eqref{lp:coverage} is a relaxation of the stochastic optimum, therefore its optimal value $\opt_{LP}$ is at least $\opt$. Moreover, this LP can be solved in polynomial time relative to a separation oracle for $P_\I$.%  To see that the above LP is a relaxation of the stochastic optimum solution, let $x_i$ be the probability that $S_j$ being in the stochastic optimum solution and $x_i$ be the probability that $i\in E$ is covered by the stochastic optimum solution. Observe that $y_j \leq p$ since each subset $S_j$ for $j\in [m]$ is active with probability $p$ and $\vec{y}\in P_\I$ since it is a convex combination of stochastic optimum solutions. Therefore, $\opt_{LP} \geq \opt$. 
% It is well-known that this LP can be solved efficiently in $\poly$-time, assuming access to the independence oracle of $\M$.

Given an optimal solution $(\vec{x},\vec{y})$ to \eqref{lp:coverage}, we construct our sparse set $Q \sse E$ as follows: we include each element $i$ independently with probability $\frac{x_i}{p}$. Notice that the expected cardinality of $Q$ is at most $\frac{\rank((E,I)))}{p}$, and hence our sparsifier has degree $\frac 1 p$.

Next, we bound the approximation ratio of our sparsifier. First we compute the probability of each point $j \in U$ being covered by $Q\cap R$ as follows: 

\begin{align*}
   \Pr[Q\cap R \text{ covers }j] &= \Pr\left[j \in \bigcup_{i \in Q\cap R} A_i\right] = 1-\prod_{i: j \in A_i}\left(1-\frac{x_i}{p} \cdot p\right)\\ 
   &\geq 1-\exp\left( -\sum_{i: j \in A_i }x_i \right)\\
   &\geq 1-e^{y_i} \\
   &\geq \left(1-\frac{1}{e}\right) y_j. 
\end{align*}
Therefore, $$\Exp\left[f(Q \cap R)\right] \geq \left(1-\frac{1}{e}\right)\cdot\opt_{LP} \geq \left(1-\frac{1}{e}\right)\cdot\opt.$$ 
Observe that each set $i \in E$, appears in $Q \cap R$ independently with probability $x_i$. The set $Q \cap R$ may be infeasible in general (though it is feasible on average due to $x \in P_I$). We must lowerbound the objective value of the best feasible subset of $Q \cap R$, and we do this using monotone contention resolutions schemes. Let $\pi$ be a $c$-balanced monotone CRS instantiated for polytope $P_\I$. The (random) set $T=\pi_x(Q \intersect R)$ is feasible, and satisfies \[\Ex[f(T)] \geq c \Ex[f(Q \intersect R)] \geq c \cdot \left(1-\frac{1}{e}\right) \opt.\]  %pplying this scheme to $Q \intersect R$ outputs a set that is feasible for  The $(1-\frac{1}{e})$-balanced  mononotone contention resolution scheme for matroid polytope $P_I$ serves to certify such a lowerbound. Indeed, applying this scheme to $Q \intersect R$ outputs a set 
%Thus, we round $Q\cap R$ via a monotone CR scheme $\pi_{\vec y}$ of matroid polytope $P_\I$ and conclude that $Q$ is $\left( 1-\frac 1 e \right)^2$-approximate sparsifier with degree $O(1/p)$.
We obtain the following theorem.

%We extend this result to stochastic packing problem  $\spack$ with a coverage objective function when $P_\I$ is $\poly$-time solvable and exhibits a $c$-balanced monotone CR scheme using the exactly  same argument. We formalize this in the following theorem whose proof is a corollary of the discussion we present in this section.

\begin{theorem}\label{thm:submodularsparsematroid}
Given a stochastic packing problem $\spack$ with coverage objective $f$ such that $P_\I$  admits a $c$-balanced monotone CRS, there exists $\left(1-\frac 1 e \right)\cdot c$-approximate sparsifier with degree $\frac 1 p$. Moreover, if $f$ is given explicitly and $P_I$ is polynomial-time solvable, then the sparsifier can be implemented in polynomial time.
\end{theorem}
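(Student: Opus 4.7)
The plan is to round the stated coverage LP and use a monotone CRS as a \emph{pure analysis tool} to certify that the sparsified instance retains an approximately optimal solution. First I would verify that \eqref{lp:coverage} is indeed a relaxation: taking $x_i = \Pr[i \in \O]$ and $y_j = \Pr[\text{$j$ is covered by $\O$}]$ for the stochastic optimum $\O$ yields an LP-feasible point of value $\opt$ (feasibility uses $\vec{x} \in P_\I$ since $\O$ is always feasible, $x_i \leq p$ since $i$ must be active, and the coverage constraint by union bound). Hence $\opt_{LP} \geq \opt$.

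Given an optimal $(\vec{x}, \vec{y})$, I would form $Q$ by independently including each $i \in E$ with probability $x_i/p$. The degree bound is immediate since $\Ex[|Q|] = \sum_i x_i/p \leq r/p$ using $\vec{x} \in P_\I$. Because the sparsifier's randomness is independent of $R$, the key observation is that $Q \cap R$ is precisely the product distribution on $E$ with marginals $\vec{x} \in P_\I$.

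The approximation analysis has two steps. Step one bounds $\Ex[f(Q \cap R)]$ in terms of $\opt_{LP}$ by arguing point-by-point: the probability that $j \in U$ is \emph{not} covered by $Q \cap R$ equals $\prod_{i : j \in A_i}(1-x_i) \leq \exp(-\sum_{i : j \in A_i} x_i) \leq e^{-y_j}$, so the covering probability is at least $1 - e^{-y_j} \geq (1 - 1/e) y_j$ by concavity on $[0,1]$. Summing gives $\Ex[f(Q\cap R)] \geq (1-1/e) \opt_{LP} \geq (1-1/e) \opt$. Step two applies the $c$-balanced monotone CRS $\pi$ for $P_\I$ at marginals $\vec{x}$, producing a feasible $T = \pi_{\vec{x}}(Q \cap R) \sse Q \cap R$ with $\Ex[f(T)] \geq c \cdot F(\vec{x}) = c \cdot \Ex[f(Q \cap R)]$, where the first inequality uses the monotone-CRS guarantee for submodular objectives from \cite[Theorem~1.3]{chekuri2014submodular} and the equality uses that $Q \cap R$ is a product distribution with marginals $\vec{x}$. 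Combining the two steps, $\Ex[\opt(J|Q)] \geq \Ex[f(T)] \geq c(1-1/e) \opt$.

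For the polynomial-time claim I would solve \eqref{lp:coverage} via the ellipsoid method using the given separation oracle for $P_\I$ (the explicit representation of $f$ makes $U$ and the $A_i$ accessible), then perform the independent rounding to output $Q$. Note that the CRS itself is never invoked by the algorithm; it is used only to witness existence of a good feasible subset of $Q \cap R$, which the optimization stage will recover on its own. The main subtlety in the argument is the second step above: we need the monotone-CRS guarantee to yield a bound against the submodular (not merely additive) objective $f$, which is exactly the content of \cite[Theorem~1.3]{chekuri2014submodular} applied with the multilinear extension, and the reason the $(1-1/e)$ correlation-gap factor from the generic Theorem~\ref{thm:sparse_CRS_exact} is absorbed into the point-wise coverage bound rather than incurred a second time.
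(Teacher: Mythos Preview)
Your proposal is correct and follows essentially the same approach as the paper: solve the coverage LP, round to $Q$ by including each $i$ independently with probability $x_i/p$, bound $\Ex[f(Q\cap R)] \geq (1-1/e)\opt_{LP}$ via the point-wise coverage calculation, and then invoke the monotone CRS on the product distribution $Q\cap R$ (with marginals $\vec{x}\in P_\I$) to extract a feasible $T$ with $\Ex[f(T)]\geq c\,\Ex[f(Q\cap R)]$. Your write-up is in fact slightly more careful than the paper's, spelling out why the LP is a relaxation and making explicit that the CRS step uses \cite[Theorem~1.3]{chekuri2014submodular} against the multilinear extension $F(\vec{x})=\Ex[f(Q\cap R)]$.
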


Theorem~\ref{thm:submodularsparsematroid}, along with the monotone contention resolution schemes of \cite{alaei2014bayesian, chekuri2014submodular}, implies the following corollary.

\begin{corollary}
For stochastic packing problems with a coverage objective function, there is a sparsifier with degree $\frac{1}{p}$ achieving the following approximation ratios:
\begin{enumerate}
	\item $\left(1-\frac{1}{e}\right)^2$ for a matroid constraint.
	\item $\left(1 - \frac{1}{\sqrt {r+3}} \right)\left(1 - \frac{1}{e} \right)$ for an $r$-uniform matroid constraint.
    \item $\left(1 - \frac{1}{e} \right)^3$ for the intersection of two matroid constraints.
    \end{enumerate}
    Moroever, this sparsifier can be implemented in polynomial time when the coverage function objective is given explicitly, and the polytope of the constraints is polynomial-time solvable.
\end{corollary}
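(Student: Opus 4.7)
The plan is to apply Theorem~\ref{thm:submodularsparsematroid} three times, each time plugging in a known balance ratio for a monotone contention resolution scheme for the relevant polytope. Since Theorem~\ref{thm:submodularsparsematroid} already produces a sparsifier of degree $\frac{1}{p}$ whose approximation factor is $\left(1-\frac{1}{e}\right) \cdot c$ whenever $P_\I$ admits a $c$-balanced monotone CRS, and also guarantees polynomial-time implementability whenever $f$ is given explicitly and $P_\I$ is polynomial-time solvable, the corollary reduces entirely to citing the appropriate monotone CRS for each of the three set systems.

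First, for a general matroid constraint, I would invoke the $(1-1/e)$-balanced monotone CRS for the matroid polytope due to \cite{chekuri2014submodular}. Plugging $c = 1-1/e$ into Theorem~\ref{thm:submodularsparsematroid} yields the claimed $\left(1-\frac{1}{e}\right)^2$ approximation. Second, for the $r$-uniform matroid constraint, I would invoke the $\left(1-\frac{1}{\sqrt{r+3}}\right)$-balanced monotone CRS of \cite{alaei2014bayesian} (this is sharper than the generic matroid bound when $r$ is small to moderate), which after plugging into the theorem gives the claimed $\left(1-\frac{1}{\sqrt{r+3}}\right)\left(1-\frac{1}{e}\right)$ ratio. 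Third, for the intersection of two matroid constraints, I would use the fact that if $P_{\I_1}$ and $P_{\I_2}$ each admit monotone CRSs with balance ratios $c_1$ and $c_2$ respectively, then their intersection polytope admits a $c_1 \cdot c_2$ balanced monotone CRS (obtained by composing the two schemes, as in \cite{chekuri2014submodular}); applying this with $c_1 = c_2 = 1-1/e$ and substituting into Theorem~\ref{thm:submodularsparsematroid} gives the claimed $\left(1-\frac{1}{e}\right)^3$ ratio.

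For the polynomial-time implementation claim, I would note that in all three cases the constraint polytope is polynomial-time solvable: matroid polytopes and intersections of two matroid polytopes admit polynomial-time separation via the matroid/matroid-intersection polytope descriptions, and the uniform matroid polytope is trivially tractable. Since coverage functions are assumed to be given explicitly, the LP~\eqref{lp:coverage} can be solved in polynomial time in each case, and the monotone CRSs above are themselves known to be polynomial-time implementable. Hence the sparsifier inherits polynomial-time implementability, completing the proof.

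I do not expect any serious obstacle here; the entire argument is a ``plug-and-play'' instantiation of Theorem~\ref{thm:submodularsparsematroid} with standard monotone CRS results from the literature, and the only mild care required is in citing the tightest known balance ratio for each specific polytope.
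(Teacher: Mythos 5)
Your proposal is correct and follows essentially the same route as the paper: the paper likewise derives the corollary by plugging the $(1-1/e)$-balanced monotone matroid CRS of \cite{chekuri2014submodular}, the $\bigl(1-1/\sqrt{r+3}\bigr)$-balanced monotone CRS for $r$-uniform matroids from \cite{alaei2014bayesian}, and the standard CRS composition for the two-matroid intersection into Theorem~\ref{thm:submodularsparsematroid}. The polynomial-time claim is handled identically, via explicit coverage functions and polynomial-time solvability of the relevant polytopes.
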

%\ykmargincomment{We can consider knapsack here if decide to write knapsack.}
% \todo[]{We can mention knapsack as well.}

The above corollary implies that the approximation ratio approaches $\left(1-\frac{1}{e}\right)$ for an $r$-uniform matroid constraint as $r\rightarrow \infty$. In the next section, we show that this is the best possible approximation ratio of any sparsifier for a coverage objective and a uniform matroid constraint.

% Another interesting open question is that whether we can beat CRS factor for any stochastic packing problem with submodular objective functions. We answered this question affirmatively for modular objective functions and matroid constraints in Section~\ref{sec:sparse_matroid} and also designed algorithms beating the state-of-the-art CRS ratio for all $k$-matroid intersection constraints in Section~\ref{sec:sparse_matroidintersections}. However, all known algorithms for submodular functions are due to monotone CR schemes. The main challenge here is that we need further complex augmentation techniques which guarantees both augmentation of feasible sets and value of the objective function.
% It is open whether we can design a $\alpha \cdot \left(1-\frac{1}{e}\right)$-sparsifier for
% The next question is to extend this result to arbitrary submodular functions. The main challenge over there is to find $y_j$ values denoting the probability of item $j$ being in the optimal solution.

% \todo[inline]{Open question1: can we go beyond $(1-1/e)\cdot \alpha$ where $\alpha$ is the CRS factor.}

% \todo[inline]{Open question2: Currently we don't know how to find $y_j$ for submodular functions, is there any way to find $y_j$? Extension of continuous greedy?}

\subsection{Impossibility Result}
\label{sec:impossibility}
In this section, we show a certain limit on the sparsification of stochastic packing problems for submodular objectives even with an $r$-uniform matroid constraint. Formally, we show that for any fixed $\delta >0$, there is no $\left( 1-\frac{1}{e}+\delta \right)$-approximate sparsifier for a stochastic packing problem with a submodular objective and a matroid constraint with degree $O_p(1)$. Indeed, the impossibility result holds for coverage objectives and $r$-uniform matroid constraints, which is one of the most permissible problem instances of stochastic packing with submodular objective functions. Our impossibility result is information-theoretic and holds for any sparsifier with degree $O_p(1)$ \footnote{Here, $O_p(1)$ is a function only depending on $p$, and $O_p(1)$ is allowed to be much larger than $1/p$.} regardless of the runtime of the sparsifier.

\begin{theorem}
    \label{thm:impossible}
  For any for any $\delta >0$, there exists an SPP $\spack$ with a coverage objective, uniform matroid constraints and probability $p\in[0,1/3]$ such that no sparsifier with degree $O_p(1)$ achives approximation ratio better than $1-\frac 1 e + \delta $ .
\end{theorem}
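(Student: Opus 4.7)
My plan is to prove the theorem by constructing a coverage instance with many planted full-coverage bases among structurally similar decoys, where the $1-1/e$ barrier emerges from the correlation gap of coverage functions under an activity-imposed density constraint. Fix $\delta>0$, set $p=\Theta(\delta)$ with $p\leq 1/3$, choose the rank $r$ large in terms of $\delta$, and take a universe $U$ of size $m$. Plant $M$ uniformly random partitions $\pi_1,\ldots,\pi_M$ of $U$ into $r$ equal-size blocks. Let the ground set $E$ be the $Mr$ blocks (one element per block of each partition), let the constraint be the $r$-uniform matroid, and let the objective be the coverage function $f(S)=|\bigcup_{B\in S}B|$; call this SPP $J$. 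The intuition is that any $r$ blocks drawn from at least two distinct planted partitions behave like $r$ independent uniformly random $(m/r)$-subsets of $U$, and their coverage is therefore bounded by the correlation-gap-type quantity $m(1-(1-1/r)^r)\approx m(1-1/e)$, while each full $\pi_i$ covers all of $U$.

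For the stochastic optimum, each $\pi_i$ is fully active with probability $p^r$, so for $M\geq p^{-r}\log(1/\delta)$ we have $\Pr[\exists i:\pi_i\subseteq R]\geq 1-\delta$ and hence $\opt(J)\geq m(1-\delta)$. For any $Q\subseteq E$ with $|Q|\leq dr$, I would bound $\opt(J|Q)$ by decomposing the sparsifier's best $r$-cover according to how its blocks distribute across planted partitions: let $a_j=|Q\cap \pi_j|$ and write $b_j$ for the number of blocks from $\pi_j$ used in the sparsifier's chosen $r$-subset $T$, so $\sum_j b_j=r$ and concentration forces $b_j\lesssim a_jp$. Conditioned on the complement of the rare event $\{T=\pi_i$ for some $i\}$ (an event of total probability at most $dp^r$ by a union bound over planted partitions), the blocks of $T$ span at least $d'\gtrsim 1/p$ distinct partitions, so $\prod_j(1-b_j/r)\geq (1-p)^{1/p}$ and hence the expected coverage of $T$ is at most $m(1-(1-p)^{1/p})\leq m(1-1/e+O(p))$. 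A McDiarmid-style concentration argument, combined with a union bound over all $\binom{Mr}{r}$ possible $r$-subsets of $E$, upgrades this expectation bound into a high-probability bound on $f(T)$ itself provided $m\gtrsim r\log(Mr)/\delta^2$. Adding back the contribution of fully-active planted partitions, $\leq dp^r\cdot m\leq\delta m$ for $p$ small enough in terms of $d,\delta$, we obtain $\opt(J|Q)\leq m(1-1/e+\delta)$ uniformly over $Q$ with high probability over the random construction.

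Combining the two bounds yields a ratio of at most $(1-1/e+\delta)/(1-\delta)\leq 1-1/e+O(\delta)$, and rescaling $\delta$ gives the theorem. The main obstacle is the joint tuning of parameters so that: (i) $Mp^r\gtrsim \log(1/\delta)$ for a near-full stochastic optimum, (ii) $m\gtrsim r\log(Mr)/\delta^2$ for the McDiarmid-plus-union-bound step to succeed, and (iii) $p$ is sufficiently small in $\delta$ to make the correlation-gap-style bound $(1-p)^{1/p}\geq 1/e-\delta/2$ hold while simultaneously $dp^r\leq\delta$. These constraints can be met by fixing $p=\Theta(\delta)$ first, then $r=\Theta(\log(d/\delta)/\log(1/p))$, then $M=\Theta(p^{-r}\log(1/\delta))$, and finally $m=\Theta(r\log(Mr)/\delta^2)$; a secondary subtlety is arguing that the activity constraint $b_j\lesssim a_jp$ is unavoidable for the sparsifier regardless of how it chooses the profile $(a_j)$, which I would handle by a case analysis on whether $Q$ concentrates on few partitions (blocked by active-count limits) or spreads across many (directly giving the $(1-1/r)^r$ bound).
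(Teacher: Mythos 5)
Your proposal takes the same conceptual route as the paper: plant a large number of equal $r$-partitions of the universe as the ground set, argue that the stochastic optimum is near-full because some planted partition is entirely active once there are enough of them, and bound the sparsifier's value by combining (i) a Chernoff cap of roughly $(1+\delta)rp$ on the number of active blocks in any one partition, (ii) a product-form expression for coverage of a block collection, and (iii) an optimization over the incidence profile showing that the best feasible $r$-subset covers at most $\approx 1-(1-p)^{1/p}\approx 1-1/e$ for $p$ small in $\delta$.

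The substantive difference is in how the product form is obtained. You plant \emph{random} partitions and then invoke McDiarmid concentration plus a union bound over all $\binom{Mr}{r}$ candidate $r$-subsets to upgrade an in-expectation product formula to a pointwise bound, which forces the extra tuning of $m$. The paper instead uses a \emph{deterministic} nested ``fractal'' construction of the partitions $\mathcal{S}^1,\dots,\mathcal{S}^{n/r}$ (Equation~\eqref{eq:equal-r-partition}) under which the coverage is \emph{exactly} $f(Q)=1-\prod_i(1-s_i/r)$ for every $Q$ (Proposition~\ref{prop:sym}), not merely approximately so in expectation. This eliminates the concentration-plus-union-bound layer entirely and the dependence on $m$. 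The paper also packages your ``case analysis'' into a single exchange lemma (Proposition~\ref{prop:local_improve}) showing that concentrating the incidence vector only increases coverage, which handles both the spread and the concentrated regimes uniformly.

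Two smaller points. Your flagged subtlety about establishing $b_j\lesssim a_j p$ is not needed: the weaker bound $b_j\leq|\pi_j\cap Q\cap R|\leq|\pi_j\cap R|\leq(1+\delta)rp$ already suffices, and the last inequality is just Chernoff on the active elements of the whole partition, an event that does not reference $Q$ at all (this is exactly the paper's event $\mathcal{E}$, union-bounded only over the at most $|Q|$ touched partitions). Also, the phrase ``behave like $r$ independent uniformly random $(m/r)$-subsets'' is slightly off — blocks from the same planted partition are disjoint, not independent — although the expectation $m\prod_j(1-b_j/r)$ you want still comes from independence across partitions together with disjointness within each, so the plan survives. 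Finally, your insistence that $p=\Theta(\delta)$ so that $(1-(1+\delta)p)^{1/((1+\delta)p)}\geq 1/e-O(\delta)$ is correct and in fact necessary: the inequality $1-(1-(1+\delta)p)^{1/p}\leq 1-e^{-(1+\delta)}$ fails for $p$ near $1/3$ (since $(1-x)^{1/x}\leq 1/e$ for all $x\in(0,1)$, the inequality runs the other way), so the theorem as stated only goes through by choosing $p$ small in $\delta$, exactly as you do.
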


Before proceeding with the proof, we first construct a family of instances $\spack$ with a coverage objective and $r$-uniform matroid constraint. We start by defining a coverage function on measurable subsets of $\mathbb{R}$. Let $f$ be a function such that $f(\mathcal{S}) = m\left( \bigcup_{S \in \mathcal{S}} S \right)$, where, $m(U)$ is the measure of the given set $U$. 

%We say that any $r$ elements of $V$ form a feasible solution and each set in $V$ is active with probability $p$ independently where the collection of active sets are $R$.

We call a collection of measurable sets $\mathcal{S}$ an equal $r$-partition of $[0,1]$ if elements $S_1, \dots S_k \subseteq [0,1]$ of $\mathcal{S}$ forms a partition of $[0,1]$ with an equal measure of $1/r$. Next, we describe the construction of $n/r$ many distinct equal $r$-partitions $\mathcal{S}^1, \mathcal{S}^2, \dots \mathcal{S}^{n/r}$ where $\S^i = \{ S_j^i\}_{j=1}^r$ and assume that $n$ is a multiple of $r$. We define  ground set $E=\bigcup_{i=1}^{n/r} \mathcal{S}^i$. Our aim is  to construct equal partitions in a way that any set $S^i_j \in \mathcal{S}^i$ covers $1/r$ fraction of any other set $S_{i'}^{j'}$ for $i' \neq i$ from a different $r$-partition. Such a construction allows to make all alternatives of $S^i_j$ symmetric. We obtain such a desired construction of $n/r$ many different equal $r$-partitions as follows: 
\begin{equation}\label{eq:equal-r-partition}
\mathcal{S}^i = \left\{\left. S^i_j := \bigcup_{\ell=1}^{r^{i-1}} \left[ \frac{\ell-1}{r^{i-1}}+\frac{j-1}{r^i}, \frac{\ell-1}{r^{i-1}}+\frac{j}{r^i} \right] \right| \forall j \in [r]\right\}. \qquad \text{for any } i \in [n/r].
\end{equation}
See Figure~\ref{fig:partitions} for the visualization of the construction. The following proposition demonstrates that any two partitions $\S^i$ and $\S^j$ are indistinguishable for the evaluation of function $f(\cdot)$.
\begin{proposition}
    \label{prop:sym}
    Let $Q \subseteq E = \bigcup_i \mathcal{S}^i$, and $\vec{s}$ be the \emph{incidence vector} such that $s_i := |\mathcal{S}^i \cap Q|$. Then,
    $$ f(Q) = 1 - \prod_{i=1}^{n/r} \left(1-\frac{s_i}{r}\right)  $$
\end{proposition}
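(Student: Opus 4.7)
The plan is to recognize the equal $r$-partitions $\mathcal{S}^1, \ldots, \mathcal{S}^{n/r}$ as encoding successive base-$r$ digits of points in $[0,1]$, and then exploit the independence of these digits under Lebesgue measure. Concretely, for $x \in [0,1)$ write its base-$r$ expansion $x = 0.d_1(x) d_2(x) \ldots$ (breaking ties at boundary points on a set of measure zero). I would first verify the identification $S^i_j = \{ x \in [0,1) : d_i(x) = j-1 \}$ up to a null set. This follows directly from the formula: any $x \in S^i_j$ has the form $x = (\ell-1)/r^{i-1} + (j-1)/r^i + z$ with $\ell \in \{1,\ldots,r^{i-1}\}$ and $z \in [0, 1/r^i)$. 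Since $(\ell-1)/r^{i-1}$ terminates in base $r$ after at most $i-1$ digits and $z$ contributes only digits from position $i+1$ onward, the $i$-th digit of $x$ must equal $j-1$. The converse is immediate by matching $\ell$ to the first $i-1$ digits.

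Next, I would invoke the standard fact that under Lebesgue measure on $[0,1)$ the digits $d_1, d_2, \ldots$ are mutually independent, each uniformly distributed on $\{0, 1, \ldots, r-1\}$. This gives that for any collection of subsets $T_i \subseteq \{1, \ldots, r\}$,
\begin{equation*}
m\!\left(\bigcap_{i=1}^{n/r} \bigcup_{j \in T_i} S^i_j\right) = \prod_{i=1}^{n/r} \frac{|T_i|}{r}.
\end{equation*}

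Finally, given $Q \subseteq E$, let $T_i = \{ j : S^i_j \in Q \}$, so $|T_i| = s_i$. Writing the complement of $\bigcup_{S \in Q} S$ inside $[0,1)$ and grouping by partition,
\begin{equation*}
[0,1) \setminus \bigcup_{S \in Q} S = \bigcap_{i=1}^{n/r}\!\left([0,1) \setminus \bigcup_{j \in T_i} S^i_j\right) = \bigcap_{i=1}^{n/r} \bigcup_{j \notin T_i} S^i_j,
\end{equation*}
where I used that the sets $\{S^i_j\}_{j=1}^r$ form a partition of $[0,1)$ for each fixed $i$. Applying the product formula with $\{1,\ldots,r\} \setminus T_i$ in place of $T_i$ yields
\begin{equation*}
m\!\left([0,1) \setminus \bigcup_{S \in Q} S\right) = \prod_{i=1}^{n/r} \frac{r - s_i}{r} = \prod_{i=1}^{n/r}\!\left(1 - \frac{s_i}{r}\right),
\end{equation*}
and therefore $f(Q) = 1 - \prod_i (1 - s_i/r)$ as claimed.

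There is no real obstacle here; the only subtlety is the measure-zero ambiguity at dyadic (base-$r$) boundary points, which is handled by working with the half-open intervals implicit in the digit representation and noting that boundary identifications do not affect Lebesgue measure. The key conceptual step is the digit interpretation of the partitions $\mathcal{S}^i$, which turns the geometric intersection calculation into an elementary product of independent uniform digit probabilities.
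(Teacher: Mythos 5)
Your proof is correct and takes a genuinely different route from the paper. The paper proceeds by induction on the number $t$ of partitions used: fixing an uncovered interval $S^1_j$ from the first partition, it defines a scaling map $\rho$ that sends $S^1_j$ back to $[0,1]$ and identifies the restricted partitions with a copy of the same construction on one fewer partition, then applies the inductive hypothesis and sums. Your argument instead observes once and for all that $S^i_j$ is (up to a null set) the set of $x$ whose $i$-th base-$r$ digit equals $j-1$, so the events ``covered by $\mathcal{S}^i$'' for different $i$ are literally independent under Lebesgue measure, and the product formula falls out immediately. This is a non-inductive, more conceptual argument: it exposes the probabilistic independence that the paper's proof establishes only implicitly through the self-similar recursion, and it delivers the more general intersection identity
\begin{equation*}
m\!\left(\bigcap_{i=1}^{n/r} \bigcup_{j \in T_i} S^i_j\right) = \prod_{i=1}^{n/r} \frac{|T_i|}{r}
\end{equation*}
for free. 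The paper's induction, by contrast, is self-contained and does not require invoking the classical i.i.d.-digits fact, and its scaling/zoom picture (Figures~\ref{fig:partitions} and~\ref{fig:partition2}) is arguably closer to the geometric intuition that motivated the construction in the first place. Both proofs are sound; yours is shorter and isolates the independence structure more transparently.
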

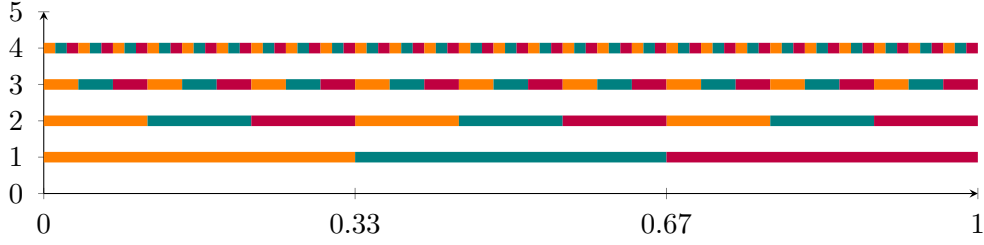
\begin{figure}[htb]
    \centering

\begin{tikzpicture}
    \begin{axis}[
        width=14 cm,
        height=4 cm,
        axis line style = thin,
        axis lines = left,
        % xlabel = \(x\),
        % ylabel = {\(f(x)\)},
        xtick = {0,1/3,2/3,1},
        ytick = {0,1,2,3,4,5},
        ymin = 0,
        ymax = 5,
        line width=4,
    ]
    
    % LEVEL 1
    \addplot [
        domain=0:1/3, 
        samples=100, 
        color=orange,
    ]
    {+ 1};
    \addplot [
        domain=1/3:2/3, 
        samples=100, 
        color=teal,
    ]
    {+ 1};
    \addplot [
        domain=2/3:1, 
        samples=100, 
        color=purple,
    ]
    {+ 1};
    
    % LEVEL 2
    \addplot [
        domain=0:1/9, 
        samples=100, 
        color=orange,
    ]
    {+ 2};
    \addplot [
        domain=1/3:4/9, 
        samples=100, 
        color=orange,
    ]
    {+ 2};
    \addplot [
        domain=2/3:7/9, 
        samples=100, 
        color=orange,
    ]
    {+ 2};
    
    \addplot [
        domain=1/9:2/9, 
        samples=100, 
        color=teal,
    ]
    {+ 2};
    \addplot [
        domain=4/9:5/9, 
        samples=100, 
        color=teal,
    ]
    {+ 2};
    \addplot [
        domain=7/9:8/9, 
        samples=100, 
        color=teal,
    ]
    {+ 2};
    
    \addplot [
        domain=2/9:1/3, 
        samples=100, 
        color=purple,
    ]
    {+ 2};
    \addplot [
        domain=5/9:2/3, 
        samples=100, 
        color=purple,
    ]
    {+ 2};
    \addplot [
        domain=8/9:1, 
        samples=100, 
        color=purple,
    ]
    {+ 2};
    
    % LEVEL 3
    
    \addplot[
        domain=0/27:1/27, 
        samples=100, 
        color=orange,
    ] {+ 3};

    \addplot[
        domain=1/27:2/27, 
        samples=100, 
        color=teal,
    ] {+ 3};

    \addplot[
        domain=2/27:3/27, 
        samples=100, 
        color=purple,
    ] {+ 3};

    \addplot[
        domain=3/27:4/27, 
        samples=100, 
        color=orange,
    ] {+ 3};

    \addplot[
        domain=4/27:5/27, 
        samples=100, 
        color=teal,
    ] {+ 3};

    \addplot[
        domain=5/27:6/27, 
        samples=100, 
        color=purple,
    ] {+ 3};

    \addplot[
        domain=6/27:7/27, 
        samples=100, 
        color=orange,
    ] {+ 3};

    \addplot[
        domain=7/27:8/27, 
        samples=100, 
        color=teal,
    ] {+ 3};

    \addplot[
        domain=8/27:9/27, 
        samples=100, 
        color=purple,
    ] {+ 3};

    \addplot[
        domain=9/27:10/27, 
        samples=100, 
        color=orange,
    ] {+ 3};

    \addplot[
        domain=10/27:11/27, 
        samples=100, 
        color=teal,
    ] {+ 3};

    \addplot[
        domain=11/27:12/27, 
        samples=100, 
        color=purple,
    ] {+ 3};

    \addplot[
        domain=12/27:13/27, 
        samples=100, 
        color=orange,
    ] {+ 3};

    \addplot[
        domain=13/27:14/27, 
        samples=100, 
        color=teal,
    ] {+ 3};

    \addplot[
        domain=14/27:15/27, 
        samples=100, 
        color=purple,
    ] {+ 3};

    \addplot[
        domain=15/27:16/27, 
        samples=100, 
        color=orange,
    ] {+ 3};

    \addplot[
        domain=16/27:17/27, 
        samples=100, 
        color=teal,
    ] {+ 3};

    \addplot[
        domain=17/27:18/27, 
        samples=100, 
        color=purple,
    ] {+ 3};

    \addplot[
        domain=18/27:19/27, 
        samples=100, 
        color=orange,
    ] {+ 3};

    \addplot[
        domain=19/27:20/27, 
        samples=100, 
        color=teal,
    ] {+ 3};

    \addplot[
        domain=20/27:21/27, 
        samples=100, 
        color=purple,
    ] {+ 3};

    \addplot[
        domain=21/27:22/27, 
        samples=100, 
        color=orange,
    ] {+ 3};

    \addplot[
        domain=22/27:23/27, 
        samples=100, 
        color=teal,
    ] {+ 3};

    \addplot[
        domain=23/27:24/27, 
        samples=100, 
        color=purple,
    ] {+ 3};

    \addplot[
        domain=24/27:25/27, 
        samples=100, 
        color=orange,
    ] {+ 3};

    \addplot[
        domain=25/27:26/27, 
        samples=100, 
        color=teal,
    ] {+ 3};

    \addplot[
        domain=26/27:27/27, 
        samples=100, 
        color=purple,
    ] {+ 3};

    % LEVEL 4
    
    \addplot[
        domain=0/81:1/81, 
        samples=100, 
        color=orange,
    ] {+ 4};

    \addplot[
        domain=1/81:2/81, 
        samples=100, 
        color=teal,
    ] {+ 4};

    \addplot[
        domain=2/81:3/81, 
        samples=100, 
        color=purple,
    ] {+ 4};

    \addplot[
        domain=3/81:4/81, 
        samples=100, 
        color=orange,
    ] {+ 4};

    \addplot[
        domain=4/81:5/81, 
        samples=100, 
        color=teal,
    ] {+ 4};

    \addplot[
        domain=5/81:6/81, 
        samples=100, 
        color=purple,
    ] {+ 4};

    \addplot[
        domain=6/81:7/81, 
        samples=100, 
        color=orange,
    ] {+ 4};

    \addplot[
        domain=7/81:8/81, 
        samples=100, 
        color=teal,
    ] {+ 4};

    \addplot[
        domain=8/81:9/81, 
        samples=100, 
        color=purple,
    ] {+ 4};

    \addplot[
        domain=9/81:10/81, 
        samples=100, 
        color=orange,
    ] {+ 4};

    \addplot[
        domain=10/81:11/81, 
        samples=100, 
        color=teal,
    ] {+ 4};

    \addplot[
        domain=11/81:12/81, 
        samples=100, 
        color=purple,
    ] {+ 4};

    \addplot[
        domain=12/81:13/81, 
        samples=100, 
        color=orange,
    ] {+ 4};

    \addplot[
        domain=13/81:14/81, 
        samples=100, 
        color=teal,
    ] {+ 4};

    \addplot[
        domain=14/81:15/81, 
        samples=100, 
        color=purple,
    ] {+ 4};

    \addplot[
        domain=15/81:16/81, 
        samples=100, 
        color=orange,
    ] {+ 4};

    \addplot[
        domain=16/81:17/81, 
        samples=100, 
        color=teal,
    ] {+ 4};

    \addplot[
        domain=17/81:18/81, 
        samples=100, 
        color=purple,
    ] {+ 4};

    \addplot[
        domain=18/81:19/81, 
        samples=100, 
        color=orange,
    ] {+ 4};

    \addplot[
        domain=19/81:20/81, 
        samples=100, 
        color=teal,
    ] {+ 4};

    \addplot[
        domain=20/81:21/81, 
        samples=100, 
        color=purple,
    ] {+ 4};

    \addplot[
        domain=21/81:22/81, 
        samples=100, 
        color=orange,
    ] {+ 4};

    \addplot[
        domain=22/81:23/81, 
        samples=100, 
        color=teal,
    ] {+ 4};

    \addplot[
        domain=23/81:24/81, 
        samples=100, 
        color=purple,
    ] {+ 4};

    \addplot[
        domain=24/81:25/81, 
        samples=100, 
        color=orange,
    ] {+ 4};

    \addplot[
        domain=25/81:26/81, 
        samples=100, 
        color=teal,
    ] {+ 4};

    \addplot[
        domain=26/81:27/81, 
        samples=100, 
        color=purple,
    ] {+ 4};

    \addplot[
        domain=27/81:28/81, 
        samples=100, 
        color=orange,
    ] {+ 4};

    \addplot[
        domain=28/81:29/81, 
        samples=100, 
        color=teal,
    ] {+ 4};

    \addplot[
        domain=29/81:30/81, 
        samples=100, 
        color=purple,
    ] {+ 4};

    \addplot[
        domain=30/81:31/81, 
        samples=100, 
        color=orange,
    ] {+ 4};

    \addplot[
        domain=31/81:32/81, 
        samples=100, 
        color=teal,
    ] {+ 4};

    \addplot[
        domain=32/81:33/81, 
        samples=100, 
        color=purple,
    ] {+ 4};

    \addplot[
        domain=33/81:34/81, 
        samples=100, 
        color=orange,
    ] {+ 4};

    \addplot[
        domain=34/81:35/81, 
        samples=100, 
        color=teal,
    ] {+ 4};

    \addplot[
        domain=35/81:36/81, 
        samples=100, 
        color=purple,
    ] {+ 4};

    \addplot[
        domain=36/81:37/81, 
        samples=100, 
        color=orange,
    ] {+ 4};

    \addplot[
        domain=37/81:38/81, 
        samples=100, 
        color=teal,
    ] {+ 4};

    \addplot[
        domain=38/81:39/81, 
        samples=100, 
        color=purple,
    ] {+ 4};

    \addplot[
        domain=39/81:40/81, 
        samples=100, 
        color=orange,
    ] {+ 4};

    \addplot[
        domain=40/81:41/81, 
        samples=100, 
        color=teal,
    ] {+ 4};

    \addplot[
        domain=41/81:42/81, 
        samples=100, 
        color=purple,
    ] {+ 4};

    \addplot[
        domain=42/81:43/81, 
        samples=100, 
        color=orange,
    ] {+ 4};

    \addplot[
        domain=43/81:44/81, 
        samples=100, 
        color=teal,
    ] {+ 4};

    \addplot[
        domain=44/81:45/81, 
        samples=100, 
        color=purple,
    ] {+ 4};

    \addplot[
        domain=45/81:46/81, 
        samples=100, 
        color=orange,
    ] {+ 4};

    \addplot[
        domain=46/81:47/81, 
        samples=100, 
        color=teal,
    ] {+ 4};

    \addplot[
        domain=47/81:48/81, 
        samples=100, 
        color=purple,
    ] {+ 4};

    \addplot[
        domain=48/81:49/81, 
        samples=100, 
        color=orange,
    ] {+ 4};

    \addplot[
        domain=49/81:50/81, 
        samples=100, 
        color=teal,
    ] {+ 4};

    \addplot[
        domain=50/81:51/81, 
        samples=100, 
        color=purple,
    ] {+ 4};

    \addplot[
        domain=51/81:52/81, 
        samples=100, 
        color=orange,
    ] {+ 4};

    \addplot[
        domain=52/81:53/81, 
        samples=100, 
        color=teal,
    ] {+ 4};

    \addplot[
        domain=53/81:54/81, 
        samples=100, 
        color=purple,
    ] {+ 4};

    \addplot[
        domain=54/81:55/81, 
        samples=100, 
        color=orange,
    ] {+ 4};

    \addplot[
        domain=55/81:56/81, 
        samples=100, 
        color=teal,
    ] {+ 4};

    \addplot[
        domain=56/81:57/81, 
        samples=100, 
        color=purple,
    ] {+ 4};

    \addplot[
        domain=57/81:58/81, 
        samples=100, 
        color=orange,
    ] {+ 4};

    \addplot[
        domain=58/81:59/81, 
        samples=100, 
        color=teal,
    ] {+ 4};

    \addplot[
        domain=59/81:60/81, 
        samples=100, 
        color=purple,
    ] {+ 4};

    \addplot[
        domain=60/81:61/81, 
        samples=100, 
        color=orange,
    ] {+ 4};

    \addplot[
        domain=61/81:62/81, 
        samples=100, 
        color=teal,
    ] {+ 4};

    \addplot[
        domain=62/81:63/81, 
        samples=100, 
        color=purple,
    ] {+ 4};

    \addplot[
        domain=63/81:64/81, 
        samples=100, 
        color=orange,
    ] {+ 4};

    \addplot[
        domain=64/81:65/81, 
        samples=100, 
        color=teal,
    ] {+ 4};

    \addplot[
        domain=65/81:66/81, 
        samples=100, 
        color=purple,
    ] {+ 4};

    \addplot[
        domain=66/81:67/81, 
        samples=100, 
        color=orange,
    ] {+ 4};

    \addplot[
        domain=67/81:68/81, 
        samples=100, 
        color=teal,
    ] {+ 4};

    \addplot[
        domain=68/81:69/81, 
        samples=100, 
        color=purple,
    ] {+ 4};

    \addplot[
        domain=69/81:70/81, 
        samples=100, 
        color=orange,
    ] {+ 4};

    \addplot[
        domain=70/81:71/81, 
        samples=100, 
        color=teal,
    ] {+ 4};

    \addplot[
        domain=71/81:72/81, 
        samples=100, 
        color=purple,
    ] {+ 4};

    \addplot[
        domain=72/81:73/81, 
        samples=100, 
        color=orange,
    ] {+ 4};

    \addplot[
        domain=73/81:74/81, 
        samples=100, 
        color=teal,
    ] {+ 4};

    \addplot[
        domain=74/81:75/81, 
        samples=100, 
        color=purple,
    ] {+ 4};

    \addplot[
        domain=75/81:76/81, 
        samples=100, 
        color=orange,
    ] {+ 4};

    \addplot[
        domain=76/81:77/81, 
        samples=100, 
        color=teal,
    ] {+ 4};

    \addplot[
        domain=77/81:78/81, 
        samples=100, 
        color=purple,
    ] {+ 4};

    \addplot[
        domain=78/81:79/81, 
        samples=100, 
        color=orange,
    ] {+ 4};

    \addplot[
        domain=79/81:80/81, 
        samples=100, 
        color=teal,
    ] {+ 4};

    \addplot[
        domain=80/81:81/81, 
        samples=100, 
        color=purple,
    ] {+ 4};

    \end{axis}
\end{tikzpicture}

    \caption{Visualization of construction for $n=12$ and $r=3$ on $[0,1]$. Three colors at level $y=i$ for $i=1,\dots 4$ denotes $3$ measurable sets of equal $3$-partition.} \label{fig:partitions}
\end{figure}

\begin{proof}
We prove the proposition by induction. We define $U(Q)$ be the uncovered fraction of measure space by set $Q \subseteq \bigcup_{i}\mathcal S^i$. We use induction on the number of partitions. We claim that when the number of $r$-partition is $t$, uncovered fraction is $U(Q)=\prod_{i=1}^{t} \left(1-\frac{s_i}{r}\right)$. 
    
For $t=1$, the claim follows trivially. Now, let by the inductive hypothesis, the claim holds when the number of $r$-partitions is $\leq t$. We need to show that the claim holds when the number of $r$-partitions are $t+1$. 

Let $Q \subseteq \bigcup_{i=1}^{t+1} \mathcal{S}^i$. Fix any set $S_j^1 \in \mathcal{S}^1 \setminus Q$ which is not covered by $Q$. Now, we will use induction to compute what fraction of $S^1_j$ is covered by $Q \setminus \mathcal{S}^1$. Let us define measurable sets restricted to interval $S^1_j = \left[ \frac{j-1}{r}, \frac{j}{r} \right]$. Let $\mathcal{T}^i := \{S_j^1 \cap S  \mid S \in \mathcal{S}^i\}$ for all $i=2, \dots t+1$, and $\mathcal{T} := \bigcup_{i=2}^{t+1} \mathcal{T}^i$. Observe that each $\mathcal{T}^i$ is an equal $r$-partition of $S_j^1$, and $\mathcal{T}$ is the union of $t$ distinct equal $r$-partitions of $S^1_j$. Similarly, we can define the restriction of $Q$ to the interval $S^1_j$, that is $\bar{Q}:=\{S^1_j \cap S \mid S \in Q \setminus \mathcal{S}^1\}$. Intuitively, once we zoom in to interval $S^1_j$, $\mathcal{T}$ will look like $\mathcal{S} \setminus \mathcal{S}^{t+1}$ if we scale everything by $r$, see Figure~\ref{fig:partition2}. Now, we can apply induction on $\mathcal{T}$ and $\bar{Q}$ to compute what fraction of the interval $S_j^1$ is covered by $\bar{Q}$. 

To make the discussion formal, let us define a mapping $\rho: \left[\frac{j-1}{r}, \frac{j}{r}\right] \rightarrow [0,1]$ such that $\rho(x)=\left(x-\frac{j-1}{r} \right)\cdot r$, and extend it $\rho[T]=\{\rho(x) \mid x \in T\}$ to subsets $T \subseteq \left[\frac{j-1}{r}, \frac{j}{r}\right]$. Observe that $\{\rho[T] \mid T \in \mathcal{T}\}=\mathcal{S} \setminus \mathcal{S}^{t+1}$ and $\rho$ multiplies measure of the sets in $\mathcal{T}$ by $r$. Therefore, if we apply induction to $\mathcal{T}$ with $\bar{Q}$, we obtain that 
$$ m(\bar{Q})=\frac{1}{r}\cdot\left(1- \prod_{i=2}^{t+1} \left( 1- \frac{s_i}{r} \right)\right).$$
Recall that $m(Q)$ is the measure of the union of sets in $Q$. The equality follows from the fact that $m(S^1_j)=1/r$ and $S^1_j$ intersects with any $S \in Q\setminus\mathcal{S}^1$. Notice that the above equality holds for any $S^1_j$. Therefore, we obtain
$$ f(Q) = m(Q) = 1-\prod_{i=1}^{t+1} \left(1-\frac{s_i}{r} \right) $$
concluding the proof.

\end{proof}

\begin{figure}[htb]
    \centering

\begin{tikzpicture}
    \begin{axis}[
        width=14 cm,
        height=4 cm,
        axis line style = thin,
        axis lines = left,
        % xlabel = \(x\),
        % ylabel = {\(f(x)\)},
        xtick = {0,1/9,2/9,3/9},
        ytick = {0,1,2,3,4,5},
        ymin = 0,
        ymax = 5,
        line width=4,
    ]
    \addplot [
        domain=0:1/3, 
        samples=100, 
        color=orange,
    ]
    {+ 1};

    % LEVEL 1
    \addplot [
        domain=0:1/9, 
        samples=100, 
        color=orange,
    ]
    {+ 2};
    \addplot [
        domain=1/9:2/9, 
        samples=100, 
        color=teal,
    ]
    {+ 2};
    \addplot [
        domain=2/9:3/9, 
        samples=100, 
        color=purple,
    ]
    {+ 2};
    
    % LEVEL 2
    \addplot [
        domain=0:1/27, 
        samples=100, 
        color=orange,
    ]
    {+ 3};
    \addplot [
        domain=1/9:4/27, 
        samples=100, 
        color=orange,
    ]
    {+ 3};
    \addplot [
        domain=2/9:7/27, 
        samples=100, 
        color=orange,
    ]
    {+ 3};
    
    \addplot [
        domain=1/27:2/27, 
        samples=100, 
        color=teal,
    ]
    {+ 3};
    \addplot [
        domain=4/27:5/27, 
        samples=100, 
        color=teal,
    ]
    {+ 3};
    \addplot [
        domain=7/27:8/27, 
        samples=100, 
        color=teal,
    ]
    {+ 3};
    
    \addplot [
        domain=2/27:1/9, 
        samples=100, 
        color=purple,
    ]
    {+ 3};
    \addplot [
        domain=5/27:2/9, 
        samples=100, 
        color=purple,
    ]
    {+ 3};
    \addplot [
        domain=8/27:1/3, 
        samples=100, 
        color=purple,
    ]
    {+ 3};
    
    % LEVEL 3
    
    \addplot[
        domain=0/81:1/81, 
        samples=100, 
        color=orange,
    ] {+ 4};

    \addplot[
        domain=1/81:2/81, 
        samples=100, 
        color=teal,
    ] {+ 4};

    \addplot[
        domain=2/81:3/81, 
        samples=100, 
        color=purple,
    ] {+ 4};

    \addplot[
        domain=3/81:4/81, 
        samples=100, 
        color=orange,
    ] {+ 4};

    \addplot[
        domain=4/81:5/81, 
        samples=100, 
        color=teal,
    ] {+ 4};

    \addplot[
        domain=5/81:6/81, 
        samples=100, 
        color=purple,
    ] {+ 4};

    \addplot[
        domain=6/81:7/81, 
        samples=100, 
        color=orange,
    ] {+ 4};

    \addplot[
        domain=7/81:8/81, 
        samples=100, 
        color=teal,
    ] {+ 4};

    \addplot[
        domain=8/81:9/81, 
        samples=100, 
        color=purple,
    ] {+ 4};

    \addplot[
        domain=9/81:10/81, 
        samples=100, 
        color=orange,
    ] {+ 4};

    \addplot[
        domain=10/81:11/81, 
        samples=100, 
        color=teal,
    ] {+ 4};

    \addplot[
        domain=11/81:12/81, 
        samples=100, 
        color=purple,
    ] {+ 4};

    \addplot[
        domain=12/81:13/81, 
        samples=100, 
        color=orange,
    ] {+ 4};

    \addplot[
        domain=13/81:14/81, 
        samples=100, 
        color=teal,
    ] {+ 4};

    \addplot[
        domain=14/81:15/81, 
        samples=100, 
        color=purple,
    ] {+ 4};

    \addplot[
        domain=15/81:16/81, 
        samples=100, 
        color=orange,
    ] {+ 4};

    \addplot[
        domain=16/81:17/81, 
        samples=100, 
        color=teal,
    ] {+ 4};

    \addplot[
        domain=17/81:18/81, 
        samples=100, 
        color=purple,
    ] {+ 4};

    \addplot[
        domain=18/81:19/81, 
        samples=100, 
        color=orange,
    ] {+ 4};

    \addplot[
        domain=19/81:20/81, 
        samples=100, 
        color=teal,
    ] {+ 4};

    \addplot[
        domain=20/81:21/81, 
        samples=100, 
        color=purple,
    ] {+ 4};

    \addplot[
        domain=21/81:22/81, 
        samples=100, 
        color=orange,
    ] {+ 4};

    \addplot[
        domain=22/81:23/81, 
        samples=100, 
        color=teal,
    ] {+ 4};

    \addplot[
        domain=23/81:24/81, 
        samples=100, 
        color=purple,
    ] {+ 4};

    \addplot[
        domain=24/81:25/81, 
        samples=100, 
        color=orange,
    ] {+ 4};

    \addplot[
        domain=25/81:26/81, 
        samples=100, 
        color=teal,
    ] {+ 4};

    \addplot[
        domain=26/81:27/81, 
        samples=100, 
        color=purple,
    ] {+ 4};
    
    \end{axis}
\end{tikzpicture}

    \caption{Visualization of measurable sets restricted to interval $S^1_1=[0,1/3]$ when $n=12$ and $r=3$.}\label{fig:partition2}
\end{figure}
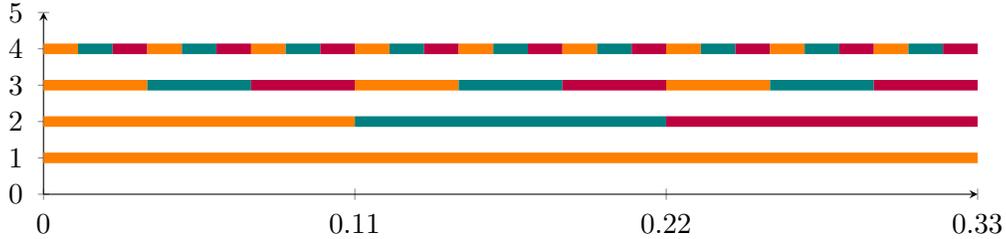

For the sake of exposition, we define
\begin{equation*}
    f(\vec{s}) := 1 - \prod_{i=1}^{n/r} \left(1-\frac{s_i}{r}\right).
\end{equation*}

In the following proposition, we prove how $f(\cdot)$ improves by local steps.
\begin{proposition}
    \label{prop:local_improve}
    Let $\vec{s} \in [r]^{n/r}$ with $s_i < r$ for all $i \in [n/r]$ be an incidence vector such that $r > s_i > s_j >0$ for some $i \neq j$. Define the new vector $\vec{s}'$ as $s'_i = s_i+1$, $s'_j = s_j-1$, and $s'_\ell = s$ for any $\ell \notin \{i,j\}$. Then, we have 
    $$ f(\vec{s}) < f(\vec{s}'). $$
\end{proposition}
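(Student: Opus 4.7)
The plan is to prove this by direct algebraic comparison of the two products, exploiting the fact that $\vec{s}$ and $\vec{s}'$ differ only in coordinates $i$ and $j$. Writing $f(\vec{s}) < f(\vec{s}')$ as
\[
 \prod_{\ell=1}^{n/r}\!\left(1-\tfrac{s_\ell}{r}\right) \;>\; \prod_{\ell=1}^{n/r}\!\left(1-\tfrac{s'_\ell}{r}\right),
\]
I would factor out the common term $P := \prod_{\ell \notin \{i,j\}}\!\left(1-\tfrac{s_\ell}{r}\right)$. The hypothesis $s_\ell < r$ for every $\ell$ guarantees $P > 0$, so the inequality reduces to comparing only the two factors at indices $i$ and $j$.

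Next I would set $a := 1 - s_i/r$ and $b := 1 - s_j/r$, so that the remaining claim becomes
\[
 (a-\tfrac{1}{r})(b+\tfrac{1}{r}) \;<\; ab.
\]
Expanding the left side and cancelling $ab$ from both sides leaves $\frac{a-b}{r} - \frac{1}{r^2} < 0$, i.e., $a - b < \frac{1}{r}$. Since $s_i > s_j$ we have $a - b = \frac{s_j - s_i}{r} < 0 < \frac{1}{r}$, so the inequality holds strictly. I would also briefly note that the hypotheses $s_i < r$ and $s_j > 0$ ensure $s'_i = s_i+1 \le r$ and $s'_j = s_j-1 \ge 0$, so $\vec{s}'$ is a legitimate incidence vector and the factor $(a-\tfrac{1}{r}) = 1 - s'_i/r$ stays nonnegative.

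There is no real obstacle here beyond a one-line algebraic identity; the only thing worth being careful about is verifying that strict inequality survives (which it does because $a - b$ is strictly negative, not merely $\le 1/r$) and that the factor $P$ out front is nonzero, both of which follow immediately from the hypothesis $s_\ell < r$ for all $\ell$. The proposition thus follows from a single two-term smoothing step, which is the kind of local exchange argument that will presumably be iterated in the subsequent impossibility proof to reduce an arbitrary $\vec{s}$ to a balanced one.
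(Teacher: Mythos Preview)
Your proposal is correct and takes essentially the same approach as the paper: factor out the common product over $\ell\notin\{i,j\}$ (the paper calls it $\beta$, you call it $P$), then compare the two remaining factors by a one-line expansion. Your write-up is in fact a bit more careful than the paper's, since you explicitly justify $P>0$ via the hypothesis $s_\ell<r$ and check that $\vec{s}'$ stays in range.
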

\begin{proof}
    Let $\beta=\prod_{\ell \in [n/r]\setminus \{i,j\}} (1-\frac{s_\ell}{r})$. Then we have
    $$f(\vec{s}')-f(\vec{s}) = \beta \cdot \left[\left(1-\frac{s_i}{r}\right)\left(1-\frac{s_j}{r}\right)-\left(1-\frac{s_i}{r}-\frac{1}{r}\right)\left(1-\frac{s_j}{r}+\frac{1}{r}\right)\right] = \beta \cdot \left[\frac{s_i}{r^2} - \frac{s_j}{r^2}\right] > 0$$
\end{proof}
The proposition indicates that a sparse vector $\vec s$ implies higher value of the function $f$. As a corollary of Proposition~\ref{prop:local_improve}, $\max_{|Q|<r}f(Q)$ is maximized for $Q$ that selects sets from the same partition $\S^i$ for some $i\in [n/r]$. Next,we prove the desired upperbound on the approximation ratio of any degree $O_p(1)$ sparsifier for SPP $\spack$ and complete the proof of Theorem~\ref{thm:impossible}.

% Finding the maximizer of $f(\vec{s})$ will be important once the sum of entries $s_i$ adds up to $k$ which is the case when the corresponding solution is feasible with respect to $k$-uniform matroid.
% It turns out to be important when the sum of entries $s_i$ adds up to $k$ which is the case when the corresponding solution is feasible with respect to $k$-uniform matroid constraint.
\begin{proof}[Proof of Theorem~\ref{thm:impossible}] Let $Q$ be a sparsifier of SSP $\spack$ with degree $O_p(1)$. WLOG, assume that $n$ is a multiple of $r$. Let $\mathcal{Q} \subseteq [n/r]$ be the set of indices of equal $r$-partitions (defined in Equation~\ref{eq:equal-r-partition}) with non-empty intersection with set $Q$.   
	
 We say that equal $r$-partition $\S^i$ is \emph{excessively active} if $\S^i \cap R $ covers at least $(1+\delta)\cdot p$ fraction of the measure space $[0,1]$. For any $i\in \Q$, usinf Chernoff's bound, we can upper bound the probability of any $r$-partition $\S^i$ being excessively active as follows:   
     \begin{align*}
 	\Pr[\text{$\mathcal{S}^i$ is excessively active}]
 	&=\Pr[|\mathcal{S}^i \cap R| > (1+\delta) \cdot r \cdot p]\\
 	&\leq \exp\left( - \frac{\delta^2 \cdot r \cdot p}{3}\right) && \text{(Chernoff's Bound)}
 \end{align*}
  By union bound, we can bound the probability of at least one of the equal $r$- partitions whose index is in $\Q$ covering $(1+\delta)\cdot p$ fraction of the measure space as follows:
\begin{align*}
	\Pr[\text{at least one $\mathcal{S}^i$ is excessively active for } i\in \Q ] \leq \frac{r}{O_p(1)} \cdot \exp \left( -\frac{\delta^2 \cdot r \cdot p}{3} \right) \leq \delta.
\end{align*}
Note that $|\mathcal{Q}| \leq |Q| \leq r/{O_p(1)}$ and the last inequality follows by plugging any $r > \log \left( {\delta \cdot O_p(1)} \right) \cdot \frac{3\cdot O_p(1)}{\delta^2}$. Let $\mathcal{E}$ be the event where none of the sets $\mathcal{S}^i$ for $i \in \mathcal{Q}$ is excessively active, i.e. $\forall i\in \mathcal Q: |\mathcal{S}^i \cap R| \leq (1+\delta) \cdot r \cdot p$. Note that $\Pr[\mathcal{E}] \geq 1 - \delta $.

% \begin{comment}Therefore, plugging any $r > \log \left( {\delta \cdot O_p(1)} \right) \cdot \frac{3\cdot O_p(1)}{\delta^2}$ in the above equation, we obtain,
% $$ \Pr[\text{at least one $\mathcal{S}^i$ is excessively active for some } i\in \Q ] \leq \delta $$\
% \end{comment}

% Let $\mathcal{E}$ be the event where $\forall i\in \mathcal Q: |\mathcal{S}^i \cap R| \leq (1+\delta) \cdot r \cdot p$ . This further implies that
% $$ \Pr[\mathcal E] = \Pr[\forall i\in \mathcal Q: |\mathcal{S}^i \cap R| \leq (1+\delta) \cdot r \cdot p ] \geq 1-\delta . $$ 
 Let $\optsol(A)$ denotes the optimal solution in the set $A$, i.e. $\max_{A^* \subseteq A: |A^*|\leq r}f(A^*)$. Next, we first condition on the event $\mathcal E$ and upper bound $f(Q\cap R)$ as,  %\npcomment{Yusuf, can you simplify the following part?}
\begin{align*}
	f(\optsol(Q\cap R))&=f\left( \optsol\left(R \cap \bigcup_{i \in \Q} \bigcup_{S \in \mathcal{S}^i} S\right) \right)\\
	 & \leq 1 - \prod_{i \in \Q} \left( 1 - \frac{|\S_i \cap R|}{r}\right) && \text{(by Propositions~\ref{prop:sym})} \\
	&\leq 1-\left[1-(1+\delta)\cdot p\right]^{1/p} && \text{(by Propositions~\ref{prop:local_improve} and event $\mathcal E$)}\\
	&\leq 1-e^{-(1+\delta)} && (\text{since $p \leq 1/3$}).
\end{align*}
Now, we can upper bound the expected value of $\optsol(Q\cap R)$: 
\begin{align*}
	\Ex[f(\optsol(Q) \cap R))] &= 
	\Ex[f(\optsol(Q) \cap R))\mid \mathcal{E}] \cdot \Pr[\mathcal{E}] + \Ex[f(\optsol((Q) \cap R))\mid \mathcal{E}^C] \cdot \Pr[\mathcal{E}^C] \\
	&\leq 1-e^{-(1+\delta)} + \delta \\
	&= 1-e^{-1}+e^{-1}(1-e^{-\delta})\\
	&\leq 1-e^{-1}+e^{-1}\delta \qquad \qquad \qquad (1-\delta \leq e^{-\delta})\\
	&\leq 1-\frac{1}{e}+2\delta.
\end{align*}

On the other hand, if $n/r > p^r \cdot 3 \log(1/\delta)$ then at least one of the equal $r$-partitions, say $\mathcal{S}^i$, will be completely active, with probability $(1-\delta)$, i.e. $\Pr[\exists i\in [n/r]: \mathcal{S}^i \subseteq R]\geq 1-\delta $. Formally, 
$$ \Pr[\text{at least one of $\mathcal{S}^i$ is completely active}] \geq 1-(1-p^r)^{n/r} \geq 1-\left(1-\frac{1}{e}\right)^{3 \log(1/\delta)} \geq 1-\delta. $$
Therefore, $\opt \geq 1-\delta$. Combining two facts proves that 
$$ \frac{\Ex[f(\optsol(Q \cap R))]}{\opt} \leq \frac{1-\frac{1}{e}+2\delta}{1-\delta} \leq 1-\frac{1}{e} + 3\delta $$
since $1-1/e \leq 1$. Replacing $\delta$ with $\delta'/3$ completes the proof.
\end{proof}

\section{Open Questions}
\begin{itemize}
	\item We believe that our results portend a deeper connection between the sparsification and contention resolution. The results of Section \ref{sec:CRS_sparse} show that contention resolution serves to lower-bound the sparsification ratio. We ask whether the connection goes both ways. In particular, does the existence of a $c$-sparsifier of degree $1/p$ imply a contention resolution scheme with balance $c$? This is intimated by Proposition~\ref{prop:both-ways}. Does the existence of a $c$-sparsifier of degree $\poly(1/p)$ imply a contention resolution scheme with balance $\Omega(c)$ (or some other expression involving $c$ and the degree)? This is intimated by the example from Appendix~\ref{sec:ssp_additive_impos}. Formalizing  a tighter connection between sparsification and contention resolution (equivalently, the correlation gap) might lead to new structural and computational insights for the latter.
	\item In Section~\ref{sec:sparse_matroid}, we show that a greedy sparsifier $1-\epsilon$ approximate with degree $O(1/p)$ for additive optimization subject to a matroid constraint. We conjecture that a similar greedy sparsifier exists for the intersection of $k$ matroids, obtaining a $\frac{1-\epsilon}{k}$-approximation with degree $O(1/p)$. A similar greedy sparsifier, albeit with degree $O(1/p^{1/\epsilon})$, was shown to be $1/2$-approximate for the special case of unweighted bipartite matching in \cite{blum2015ignorance}.
	\item We conjecture that there exists a $\frac{1-\eps}{k-1}$-approximate sparsifier with $\poly(1/p)$ degree for additive optimization subject to the intersection of $k$ matroid constraints. Our intuition comes from the work of \cite{lee2010submodular-k-intersect}, which shows a $\frac{1}{k-1}$ approximation algorithm for the associated optimization problem. The work of \cite{behnezhad2020unweighted,behnezhad2020unweighted} has already made some progress towards this conjecture by showing a $(1-\epsilon)$-approximate sparsifier for bipartite matching with degree $\exp(\exp(\exp(1/p)))$.

    % \sddelete{\item Motivated by the impossibility of SPPs with submodular objectives, we ask a natural question for additive objectives. Is there a packing system with a constant correlation gap that does not admit $1-\epsilon$ sparsification for additive objective with $\poly(1/p)$ degree?} \sdcomment{not that exciting}
    \item Our results in Section~\ref{sec:improved_matching} improve the state of the art sparsifier for weighted (non-bipartite) matching in the polynomial degree regime. Moreover, since our approximation guarantee is a function of the correlation gap, progress on the correlation gap of the matching polytope will lead to further improved sparsifiers. Finding the best possible sparsification ratio in the polynomial degree regime remains open, however, with $1-\epsilon$ still on the table. Beyond polynomial degree, a $1-\epsilon$ approximate sparsifier with degree $\exp(\exp(\exp(1/p)))$ was already shown by \cite{behnezhad2020weighted}.

\end{itemize}

%%% Local Variables:
%%% mode: latex
%%% TeX-master: "main"
%%% End:

% \nocite{*}

\bibliographystyle{alpha} 
\bibliography{ref,stochastic}

\appendix

\newpage
\section{Impossibility for  Sparsification of SPP with Additive Objectives}\label{sec:ssp_additive_impos}
\begin{theorem}
    There exists a SPP $\spack$ with an additive objective and downward closed packing constraints such that no sparsifier with degree $\frac 1 {p^c}$ for any constant $c>0$ is $\Omega(1)$ approximate.
\end{theorem}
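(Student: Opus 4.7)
The plan is to build a family of ``bundle'' instances where extracting the full stochastic optimum requires covering an entire bundle of size $k$, and show that no polynomial-in-$1/p$ query budget suffices for this as $p \to 0$. Concretely, for each $p \in (0, 1/2)$, I would set $k = k(p) = \lfloor \log(1/p) \rfloor$ and $n = n(p) = \lceil p^{-2k}\rceil$, partition a ground set $E$ into $n$ disjoint bundles $B_1, \ldots, B_n$ of size $k$, and declare $\I = \{ S \subseteq E : S \subseteq B_i \text{ for some } i \}$, which is downward closed. The objective $f$ is additive with unit weights, and the family $\F$ consists of these SPPs $\spack$ indexed by $p$.

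The lower bound on $\opt$ is immediate: each bundle is entirely active with probability $p^k$, independently across bundles, so
\[
  \opt \;\ge\; k \cdot \bigl(1 - (1 - p^k)^n\bigr) \;\ge\; k \cdot \bigl(1 - e^{-n p^k}\bigr),
\]
and $n p^k \ge p^{-k} \to \infty$ forces $\opt \ge k(1 - o(1))$.

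The main work is to upper-bound any sparsifier's value. Fix a constant $c > 0$ and a sparsifier $\A$ of degree $p^{-c}$, whose output $Q$ satisfies $\Ex[|Q|] \le k/p^c$ (the rank of $\I$ is $k$). Writing $N_i := |B_i \cap Q|$, the conditional distribution of $|B_i \cap Q \cap R|$ given $Q$ is $\operatorname{Bin}(N_i, p)$ by independence of activations. Using $\binom{q}{t} \le (eq/t)^t$ together with $N_i^t \le k^{t-1} N_i$ (valid since $N_i \le k$), a union bound over $i$ gives, for every integer $t \ge 1$,
\[
  \Pr\!\left[\max_i |B_i \cap Q \cap R| \ge t \right] \;\le\; (ep/t)^t \Ex\!\left[\sum_i N_i^t \right] \;\le\; (ep/t)^t \cdot k^{t-1} \cdot \Ex[|Q|] \;\le\; \frac{(ekp/t)^t}{p^c}.
\]
With $t = \lceil c \rceil + 1$ this is at most $C_c \cdot k^{c+1} \cdot p$ for a constant $C_c$ depending only on $c$; with $k = \lfloor \log(1/p)\rfloor$ it is $o(1)$. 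Bounding $\Ex[\max_i |B_i \cap Q \cap R|] = \sum_{t=1}^{k} \Pr[\max \ge t]$ by $1$ for $t \le \lceil c \rceil$ and by the decaying tail estimate for $t \ge \lceil c \rceil + 1$ yields $\Ex[\max_i |B_i \cap Q \cap R|] = O(1)$. Consequently the approximation ratio at probability $p$ is at most $O(1)/k = O(1/\log(1/p)) \to 0$, so $\A$ cannot be $\Omega(1)$-approximate on $\F$; since $c$ was arbitrary, this proves the theorem.

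The hard part is handling the sparsifier's freedom to distribute queries across bundles: concentrating $|Q|$ on a few bundles inflates individual $N_i$'s, while spreading queries touches many more bundles. The coupling $N_i^t \le k^{t-1} N_i$ is what reduces the higher-moment sum $\sum_i N_i^t$ back to the first-moment budget $\sum_i N_i = |Q|$ supplied by the degree bound, after which the remaining binomial-tail estimation is routine.
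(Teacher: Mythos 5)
Your proof is correct and uses the same high-level construction as the paper's (a ground set partitioned into disjoint ``bundles,'' with $\I$ consisting of subsets of a single bundle; the lower bound comes from one bundle being fully active with high probability). The difference is in how the sparsifier's value is bounded from above. The paper fixes $p = 1/k$ so that $\Ex[|E_i \cap R|] = 1$, bounds the number of bundles touched by $Q$ by the (expected) cardinality of $Q$, and applies a Chernoff bound with $\delta = \Theta(\log k)$ plus a union bound over those touched bundles; this yields $\Ex[\max_i|B_i\cap Q\cap R|] = O(\log k)$ against $\opt \approx k$. You instead work directly with $|B_i\cap Q\cap R|$ via a binomial-tail estimate $\Pr[\operatorname{Bin}(N_i,p)\ge t]\le(eN_ip/t)^t$, and you flatten the resulting $t$-th moment sum $\sum_i N_i^t$ back to the linear budget $\sum_i N_i = |Q|$ using $N_i^t \le k^{t-1}N_i$. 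This yields the tighter bound $\Ex[\max_i|B_i\cap Q\cap R|]=O(1)$ (against $\opt\approx k=\lfloor\log(1/p)\rfloor$), and has the virtue of handling the randomness of $Q$ cleanly: you take the expectation over $Q$ explicitly rather than treating $\Ex[|Q|]\le k/p^c$ as a deterministic bound on $|Q|$, as the paper's phrasing implicitly does. One small numerical slip: with $t=\lceil c\rceil+1$, the estimate $(ekp/t)^t/p^c$ is $O(k^{\lceil c\rceil+1}p)$, so for non-integer $c$ you get $k^{c+2}p$ rather than $k^{c+1}p$; this is immaterial since either way the quantity tends to $0$ with $k=\lfloor\log(1/p)\rfloor$, and the geometric decay of the remaining tail terms (ratio $ekp<1$ for $p$ small) still makes the total contribution $o(1)$.
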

\begin{proof}
We consider the following SPP with an unweighted additive objective function. Let $E = \bigcup_{i=1}^m E_i$ be the set of elements where $E_i = \{e_1^i,\dots e_k^i\}$ and $m = k^k \cdot \log k$. Therefore, $k = \omega(1)$. We define a downward closed constraint as follows: 
\begin{equation}\label{eq:constraints}
    \I = \left\{ I\subseteq E: \exists i\in [m] \text{ s.t. } I\subseteq E_i \right\},
\end{equation}
where the rank of the set system $\I$ is $k$ and the set of maximal feasible sets (also feasible sets with maximum size) is $\{E_1,\dots ,E_m\}$. We let $p = \frac 1 k$.

We now upper bound the approximation ratio of any $1/p^c$-degree sparsifier. First, 
let $Q$ be a sparsifier of SSP $\spack$ (possibly random) with degree is $1/p^c$. Let $F:=\{E_i : E_i \cap Q \neq \emptyset\}$ be the collection of subsets of $E_1,\dots, E_m$ such that $Q\cap E_i$ is non-empty. Note that $|F| \leq \frac{k}{p^c}$ as the degree of $Q$ is at most $\frac 1 {p^c}$ Observe that size of the optimal solution in $Q \cap R$ is upper-bounded by the maximum number of active elements in any $E_i \in F$. First we observe that $\E[|R\cap E_i|]\leq 1$. Therefore, using Chernoff bound and union bound, we claim that number of active elements in all $E_i \in F$ will be small with high probability. For any sparsifier $Q$ and the set $F$, Chernoff bound followed by union bound, we obtain
\begin{align*}
    \Pr\left[ \bigwedge_{E_i \in F} \left\{|E_i \cap R| \leq 1+\delta \right\} \right] & \leq 1-\left(\frac{k}{p^c}\right) \cdot e^{-\frac{\delta^2}{2+\delta}} \\
        &\leq 1- k^{c+1} \cdot e^{-\delta/2} && (p=1/k)\\
        &\leq 1-1/k && (\delta = (2c + 4)\log k).
\end{align*}
Therefore, $\Ex[|\optsol(Q \cap R)|] \leq \Ex[\max_{F_i \in F} |F_i \cap R|] \leq 1+(2c+4) \log k$. Since, $\log m \geq k$, we know that there exists a set $E_i$ such that $|E_i \cap R|=k$ with probability $(1-1/k)$. This implies that any $1/p^c$-degree sparsifier for some constant $c>0$ has an approximation ratio less than $\frac{O(\log (k))}{k}$.
\end{proof}

The results of Section~\ref{sec:CRS_sparse} imply that the correlation gap of this set system is also $O(\log k / k)$.% Incidentally, we can show this bound on the correlation gap of this set system is in fact tight up to constant factors (proof omitted), further cementing our intuition of an intimate connection between contention resolution and sparsification of stochastic packing problems.

\section{Missing Proofs from the Main Paper}
\label{sec:missing_proofs}
\subsection{Proof of Theorem~\ref{thm:sparse_CRS}}
\crsestimation*
\begin{proof}
	Given a $\beta$-approximate stochastic optimum oracle $\asopt$, let $\tilde q_e = \Pr_{O\sim \asopt}[e\in O]$.  When $f$ is  additive with  weights $\bvec{w}$ we have $\sum_{e\in E}w_e\cdot \tilde q_e   \geq \beta \cdot \opt$. More generally, when $f$ is monotone submodular we have $f^+(\vec{\tilde q})\geq \E_{O \sim \asopt} \left[f(O)\right] \geq \beta \cdot \opt$.
	
	Using $\poly(n,\frac{1}{\epsilon})$ samples as well as Chernoff and union bounds, we obtain probabilistic understimates $\bvec{\hat{q}}$ of $\bvec{\tilde{q}}$ within an additive  $\frac \delta n$, where $\delta = \beta \epsilon p/2$. Formally, we guarantee that 
	$$ \Pr\left[ \forall e \in E : \tilde q_e - \frac{\delta}{n} \leq \hat q_e \leq \tilde q_e  \right] \geq 1-\epsilon/2.$$
	Going forward, we will analyze our sparsifier in the event that $\tilde q_e - \frac{\delta}{n} \leq \hat q_e \leq \tilde q_e $ for all $e \in E$.  This guarantees that our sparse set $Q$, constructed using $\vec{\hat{q}}$ in lieu of $\vec{q}$, is well-defined and has expected size at most $\frac{r}{p}$, as needed for our degree bound. It also guarantees that $\bvec{\hat{q}} \in \P_I$, so we can employ the same contention resolution arguments as in Theorem~\ref{thm:sparse_CRS_exact}. Otherwise, we allow our sparsifier to fail by outputting an arbitrary set of the expected sparsity.

	%Notice that $\hat q_e \in \P_{\I}$. 
	
	For an additive $f$, we have
	$$\sum_{e\in E} w_e\cdot \hat q_e\geq  \sum_{e\in E}w_e\cdot \tilde q_e - \delta \cdot \max_{e\in E} w_e \geq \beta \cdot \left(1-\epsilon/2\right)\opt,$$ where the last inequality follows from $p \cdot \max_{e \in E} w_e \leq \opt$. The proof then proceeds as in Theorem~\ref{thm:sparse_CRS_exact} to obtain
	$$ \E\left[ \max_{\substack{T\subseteq Q\cap R\\ T\in \I}} f(T) \right] \geq c \cdot \sum_{e \in E} w_e \cdot \hat q_e \geq c \cdot \beta \cdot (1-\epsilon/2) \cdot \opt. $$
	When $f$ is monotone submodular we have
	$$f^+(\vec{\hat q})\geq f^+(\vec{\tilde q}) - \delta \cdot \max_{e \in E} f(e) \geq \beta \cdot \left(1 - \epsilon/2 \right)\cdot \opt,$$
	where the last inequality follows from $p \cdot \max_{e \in E} f(e) \leq \opt.$
	As in Theorem~\ref{thm:sparse_CRS_exact}, we then obtain 
	$$ \E\left[ \max_{\substack{T\subseteq Q\cap R\\ T\in \I}} f(T) \right] \geq c \cdot \left(1-\frac 1 e\right) f^+(\vec{\hat{q}}) \geq c \cdot \beta \cdot (1-\epsilon/2) \cdot \left(1-\frac 1 e\right) \cdot \opt$$.

	We lose an additional factor of $1-\epsilon/2$ in our approximation ratios due to our probabilistic guarantee on the accuracy of $\vec{\hat{q}}$. This yields the claimed bounds.
\end{proof}

\subsection{Proof of Lemma~\ref{lem:small_prob}}

\begin{proof}
    Let $I = \match$ be an arbitrary instance of a stochastic weighted matching problem with graph $G:=(V,E)$ and a random set of active edges $R$ such that $\Pr[e \in R]=p$ for all $e \in E$. We construct an instance $J=\langle \tilde G, \tilde p \rangle$ of the same problem as follows: let $\tilde G = (V, \tilde E)$ be a graph with edge set $\tilde E = \bigcup_{e\in E} \{e^1,\dots,e^c\}$ where $c := \frac{\log \frac{1}{1-p}}{\log \frac{1}{1-\epsilon^4\cdot p}}$ and $\tilde f(\{e^j\}) = f(\{e\})$ for all $j\in [c]$. The set of active edges $\tilde R$ contains each edge $e^j \in \tilde E$ independently with probability $\tilde p:=\epsilon^4 \cdot p$, i.e. $\Pr[e^j \in \tilde R] = \tilde p$ for all $e \in E$ and $j \in [c]$. Observe that the graph $\tilde G$ contains $c$ many copies of each edge in the graph $G$ where $c$ is chosen in a way that for any edge $e\in E$, at least one copy of $e$ is active in $\tilde G$ with probability $p-o(\epsilon)$. More formally, for any $e\in E$,
    \begin{equation*}
        \Pr[\exists j\in [c]: e^j \in \tilde R] = 1- (1-\epsilon^4 \cdot p)^c = 1 - (1-p+o(\eps)) = p-o(\eps).
    \end{equation*}

    \begin{comment}
    	Let $I = \spack$ be an arbitrary instance of stochastic weighted matching problem with graph $G:=(V,E)$ and $p>0$. Given $\spack$, we construct an instance $J$ of the stochastic weighted matching problem with $\tilde p = \epsilon^4p$ as follows: let $\tilde G=(V,\tilde E)$ be a graph with edge set $\tilde E = \bigcup_{e\in E} \{e^1,\dots,e^c\}$ where $c := \left \lfloor\frac{\log \frac{1}{1-p}}{\log \frac{1}{1-\epsilon^4\cdot p}}\right \rfloor$ and $w(e^j) = w(e)$ for all $j\in [c]$. In other words, graph $\tilde G$ contains $c$ many copies of each edge in the graph $G$. Moreover, $c$ is chosen in a way that for any edge $e\in E$, probability of at least one copy of $e$ is active in $\tilde G$ with probability $p$. More formally, for any $e\in E$,
    	\begin{equation*}
    		\Pr[\exists j\in [c]: \text{  $e^j$ is active }] = 1- (1-\epsilon^4\cdot p)^c = 1 - (1-p) = p
    	\end{equation*}
    \end{comment}

    First of all, observe that the problem instance $J$ is a well-defined stochastic weighted matching problem where each edge is active with probability $\tilde p$ independently. In addition, WLOG, we can assume that $c$ is an integer since there is a way to pick a smaller $\epsilon$ and guarantee that $c$ is an integer. In the rest of the proof, we show by a coupling that any $\alpha$-approximate degree $d/\tilde p$ sparsifier for $J$ implies $\alpha$-approximate degree $d/\tilde p = d/(\epsilon^4 \cdot p)$ sparsifier for $I$ as well.
 
	% We first show that the stochastic optimum value of instance $I$ and $J$ are equal. 
    Consider the coupling $\gamma$ between distributions of the set of active edges $(R, \tilde R)$ in $G$ and $\tilde G$ as follows.
\begin{itemize}
	\item Sample set of edges $\tilde R \subseteq \tilde E$ in the graph $\tilde G$ such that each edge $\tilde e\in \tilde E$ is sampled in the set $\tilde R$ independently with probability  $\epsilon^4 \cdot p$.
	\item For all $e\in E$, if any copy of edge $e$ is in $\tilde R$, i.e. $\exists j \in [c]: e^j \in \tilde R$, then add $e\in R$.
\end{itemize}
Since the set of copies is disjoint for two distinct edges $e, f \in E$, $R$ contains each edge $e \in E$ independently with probability $p$. Next, we show that the weight of the optimal weighted matching among the set of edges $R$ in a graph $G$ and the set of edges $\tilde R$ in graph $\tilde G$ are the same. Define $\optsol_I(R)$ as the maximum weight matching among the set of edges $R$ for the stochastic matching instance $I$, and similarly define $\optsol_J(\tilde R)$. Notice that for any $e\in \optsol_I(R)$, coupling $\gamma$ ensures that at least one of the corresponding copies of an edge $e$, say $e^j$ appears in $\tilde R$. Consider the set of edges $\tilde M$ in $\tilde G$ that selects exactly one copy of an edge $e \in \optsol_I(R)$ from $\tilde R$. Trivially, $\tilde M$ is a  subset of $\tilde R$ and forms a matching in $G$. Therefore, $\tilde f(\optsol_J(\tilde R))\geq \tilde f(\tilde M)=f(\optsol_I(R))$. Similarly, we show that $f(\optsol_I(R))\geq \tilde f(\optsol_J(\tilde R))$ which concludes that,
    \begin{equation*}
       \E_{(R,\tilde R)\sim \gamma}[ f(\optsol(R))] = \E_{(R,\tilde R)\sim \gamma}[\tilde f(\optsol(\tilde R))]
    \end{equation*}

Let $\tilde Q\subseteq E$ (possibly random) set of edges selected $\alpha$-approximate degree $d / \tilde{p}$ sparsifier of the problem $J$ which exists by the assumption of the lemma. Now, we construct a sparsifier $Q\subset E$ as follows: an edge $e\in E$ is added to the set $Q$ if at least one copy of the edge $e$ is selected in $\tilde Q$. Note that the degree of the sparsifier $Q$ is at most $d/\tilde p$. 
% Let $(R,\tilde R)\sim \gamma$, we can express $\E_{\tilde R}[f_J(\optsol(\tilde Q\cap \tilde R))]=\E_{(R,\tilde R)\sim \gamma}[f_J(\optsol(\tilde C\cap \tilde Q))]$. 
Let $\tilde M$ be the optimal solution of $\tilde R \cap  \tilde Q$, i.e. $\tilde M := \optsol_J(\tilde{R} \cap \tilde{Q})$, we construct a matching $M$ on the set of edges $Q\cap R$ as follows: add an edge $e \in Q\cap R$ to the matching $M$ if any copy of $e$ appears in $\tilde M$. Therefore,
$$ \E_R[f(\optsol_I(R \cap Q))] \geq \E[f(M)] = \E[\tilde f(\tilde M)] = \E_{\tilde R}[\tilde f(\optsol_J(\tilde R \cap \tilde Q))] \geq \alpha \cdot \E_{\tilde R}[\tilde f(\optsol_J(\tilde R))]$$
concludes the proof.
concludes the proof.
\end{proof}

\subsection{Proof  of Lemma~\ref{lem:non-crucial}}

Before we prove the lemma, we recall the procedure for constructing fractional matchig on $Q_\greedy \cap \nc \cap R$ \cite{behnezhad2019stochastic}. At last, we construct an intergal matching $M_\nc$ on  $Q_\greedy \cap \nc \cap R$  via Caratheodory's decomposition. 
    \begin{algorithm}[H]
        \caption{Construction of a random matching $M$ on non-crucial active edges of $Q_{\greedy}$ \label{alg:non-crucial-matching}}
        \begin{algorithmic}[1]
        \State Set $x_e \gets 0$ for all $e \in Q_{\greedy} \cap R$.
        \For {all edges $e \in Q_{\greedy} \cap \nc \cap R$}
            \State Compute $f_e := \frac 1 T \sum_{i=1}^T \mathbbm{1}[e \in Q_i]$ where $Q_i$ is the sample obtained in the $i^{\text{th}}$ round of Algorithm~\ref{alg:matching_sparsify}.
            \State Set $\tilde{x}_e \gets \min \left\{ f_e/p, 2\tau/p \right\}$
        \EndFor
        \State For each edge $e = (u,v)$ define the \emph{scaling factor} $s_e$ as 
        $$ s_e := \min \left\{ 1, \frac{\max\{q^\nc_v, \epsilon\}}{\tilde{x}_v},\frac{\max\{q^\nc_u, \epsilon\}}{\tilde{x}_u}  \right\}$$
        where $q^\nc_v:=\sum_{e: e\in \nc, v \in e} q_e$ and $\tilde x_v:= \sum_{e: e\in \nc, v \in e} \tilde  x_e$.
        \State Scale down the fractional matching $x_e \gets \tilde{x_e} \cdot s_e$.
        \State Sample a random matching $M_\nc$ such that $\Pr[e \in M_\nc] = x_e$ for all $e \in Q_{\greedy} \cap \nc \cap R$ by Caratheodory's decompositi on.
        \State \textbf{Output}: $M_\nc$.
        \end{algorithmic}
    \end{algorithm}
We first recall some  important properties of the fractional matching constructed by the above procedure.  First claim shows that for any non-crucial edge $e$, $\tilde x_e$ is approximately $q_e$ in expectation. 
\begin{claim}{\cite{behnezhad2019stochastic}[Claim A.1., Claim A.2]}\label{claim:a2}
For any non-crucial edge $e$, $\Pr\left[\tilde x_e = \frac{2\tau}{p}   \right]\leq \eps q_e$. Moreover, 
\begin{equation*}
	\E[\tilde x_e] = \E\left[\frac 1 p\min\{f_e ,2\tau\} \right] \geq (1-\eps) \cdot \frac{q_e}{p} \quad \text{  and  } \quad 	\E[\tilde x_e\mid \tilde x_e  < 2\tau/p] \geq (1 - 3\cdot \eps )\cdot \frac{q_e}{p} 
\end{equation*}
\end{claim}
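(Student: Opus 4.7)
The plan is to prove the three assertions in sequence, with the first being the one that does real probabilistic work; parts (2) and (3) will then follow by elementary algebra on $\tilde x_e = \min(f_e,\,2\tau)/p$.

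First I would observe that, since $Q_1,\dots,Q_T$ are i.i.d.\ draws from $\sopt$ and $q_e = \Pr_{Q\sim\sopt}[e\in Q]$, the random variable $T f_e = \sum_{i=1}^T \mathbbm{1}[e\in Q_i]$ is a sum of $T$ independent Bernoulli($q_e$) variables with mean $q_e T$, and the event $\{\tilde x_e = 2\tau/p\}$ is exactly $\{f_e \geq 2\tau\}$. For a non-crucial edge we have $q_e < \tau$, so reaching the cap requires the empirical count to exceed its mean by the multiplicative factor $r := 2\tau/q_e \geq 2$. A standard multiplicative Chernoff bound then gives
\begin{equation*}
  \Pr[f_e \geq 2\tau] \;\leq\; \left(\frac{e^{r-1}}{r^r}\right)^{q_e T} \;=\; e^{-q_e T}\cdot \left(\frac{e\,q_e}{2\tau}\right)^{2\tau T}.
\end{equation*}
Plugging in the chosen $T = \Theta(\log^2(1/\eps)/(\eps^4 p))$ and the non-crucial threshold $\tau = \Theta(\eps^3 p/\log(1/\eps))$ yields $\tau T = \Theta(\log(1/\eps)/\eps)$, which is large enough that this tail is at most $\eps q_e$. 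The step that requires the most care is extracting a factor linear in $q_e$: when $q_e$ is moderate (e.g.\ within a polynomial factor of $\tau$) the Chernoff bound above is already exponentially small in $1/\eps$ and trivially dominates $\eps q_e$; when $q_e$ is much smaller, I would instead apply the direct counting bound $\Pr[\mathrm{Bin}(T,q_e)\geq k] \leq \binom{T}{k} q_e^k \leq (eq_e T/k)^k$ with $k=2\tau T$, whose $q_e^{k}$ factor gives linear dependence on $q_e$ to spare.

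For part (2), I would write $\min(f_e,\,2\tau) = f_e - (f_e - 2\tau)^+$ and take expectations. Since $f_e \leq 1$, we have $(f_e-2\tau)^+ \leq \mathbbm{1}[f_e \geq 2\tau]$, so
\begin{equation*}
  \E[\tilde x_e] \;=\; \frac{1}{p}\Bigl(\E[f_e] - \E[(f_e - 2\tau)^+]\Bigr) \;\geq\; \frac{1}{p}\bigl(q_e - \Pr[f_e \geq 2\tau]\bigr) \;\geq\; (1-\eps)\,\frac{q_e}{p},
\end{equation*}
using $\E[f_e]=q_e$ and part (1). For part (3), the law of total expectation gives
\begin{equation*}
  \E[\tilde x_e] \;=\; \E[\tilde x_e \mid \tilde x_e < 2\tau/p]\cdot\Pr[\tilde x_e < 2\tau/p] \;+\; \frac{2\tau}{p}\,\Pr[\tilde x_e = 2\tau/p].
\end{equation*}
Solving for the conditional expectation, dropping the factor $\Pr[\tilde x_e < 2\tau/p] \leq 1$, and substituting the bounds from (1) and (2) yields $\E[\tilde x_e \mid \tilde x_e < 2\tau/p] \geq (1-\eps)q_e/p - (2\tau/p)\cdot \eps q_e = (1 - \eps - 2\tau\eps)\,q_e/p \geq (1-3\eps)\,q_e/p$, where the last step uses $\tau \leq 1$.

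The main obstacle is the sharpness required in part (1): a crude Chernoff bound only gives an exponentially small constant, which is not enough, since the conclusion must scale linearly with $q_e$ (and $q_e$ may itself be tiny). The two-regime approach above—Chernoff when $q_e$ is comparable to $\tau$, and the counting bound when $q_e \ll \tau$—is the natural way to extract the needed factor of $q_e$, and everything else is bookkeeping.
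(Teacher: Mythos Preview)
The paper does not prove this claim --- it is quoted from \cite{behnezhad2019stochastic} and stated without argument --- so there is no in-paper proof to benchmark against. Your derivations of parts (2) and (3) from part (1) are clean, but your argument for part (1) has a real gap, and in fact the first inequality as written cannot hold uniformly in $p$.

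The flaw is in the ``moderate $q_e$'' branch. You say the Chernoff tail is ``exponentially small in $1/\eps$ and trivially dominates $\eps q_e$,'' but $\eps q_e$ is not a function of $\eps$ alone: since $q_e < \tau = \eps^3 p/(20\log(1/\eps))$, the target $\eps q_e$ scales with $p$, which carries no lower bound here. Concretely, fix $\eps$ and take $q_e = \tau/2$. As $p \to 0$ we have $T \to \infty$ and $q_e \to 0$ while $q_e T = \tau T/2 = 50\log(1/\eps)/\eps$ stays fixed, so $Tf_e$ converges in distribution to a Poisson with that mean and $\Pr[f_e \geq 2\tau]$ tends to a fixed positive number depending only on $\eps$. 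Meanwhile $\eps q_e = \eps\tau/2 \to 0$. Hence $\Pr[\tilde x_e = 2\tau/p] \leq \eps q_e$ is violated for $p$ small enough; neither your Chernoff branch nor your counting branch manufactures the missing factor of $p$.

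What does work is to bypass the first assertion and prove (2) and (3) directly via size-biasing: the identity
\[
\E\bigl[f_e\,\mathbbm{1}[f_e \geq 2\tau]\bigr] \;=\; q_e \cdot \Pr\bigl[\mathrm{Bin}(T-1,q_e) \geq 2\tau T - 1\bigr],
\]
combined with a single Chernoff bound on the right-hand probability (which \emph{is} at most $\eps$, since $\tau T = 100\log(1/\eps)/\eps$), yields $\E[\tilde x_e] \geq (1-\eps)q_e/p$ and $\E[\tilde x_e \mid \tilde x_e < 2\tau/p] \geq (1-\eps)q_e/p$ immediately. The first assertion is most plausibly a transcription slip for the weaker bound $\Pr[\tilde x_e = 2\tau/p] \leq \eps$; tracing the downstream uses in Claim~\ref{claim:monotonicitys} and the proof of Lemma~\ref{lem:non-crucial} shows this weaker bound suffices with only cosmetic changes to the constants.
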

For any non-crucial edge $e $ and vertex $v\in e$, let $e_1,\dots, e_k$ be the non-crucial edges in $Q_\greedy\cap R$ incident to $v$ except $e$. Let $f_{v\setminus e}^\nc = \sum_{e_i} f_{e_i}$. The following claim shows that $f_{v\setminus e}^\nc $ approximates $\sum_{e_i} q_{e_i}$. 
\begin{claim}{\cite{behnezhad2019stochastic}[Claim A.5].} \label{claim:a5}
	For any non-crucial edge $e$, 
	\begin{equation*}
		 \Pr[\max\{f_{v\setminus e}^\nc ,\eps\} \leq (1+\eps) \max \{ q_v^\nc ,\eps \}]\geq 1-\eps
	\end{equation*}
\end{claim}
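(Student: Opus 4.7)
The plan is to recognize the claim as a standard concentration statement for a Binomial random variable, after a reduction that exploits the fact that each sample $Q_t \sim \sopt$ is a matching. First I would rewrite the sum defining $f_{v\setminus e}^\nc$ to drop its dependence on the active set $R$. Since $f_{e'} = 0$ whenever $e'\notin Q_\greedy$, requiring $e' \in Q_\greedy$ in the sum is redundant; and removing the ``$\in R$'' constraint only enlarges the sum. Hence we obtain the upper bound
\[
f_{v \setminus e}^\nc \ \le\ \frac{1}{T}\sum_{t=1}^{T} X_t, \qquad X_t := \sum_{\substack{e' \in \nc \cap N(v) \\ e' \neq e}} \mathbbm{1}[e' \in Q_t].
\]
Because $Q_t$ is (a sample from the distribution of) a matching, at most one edge incident to $v$ can lie in $Q_t$, so each $X_t \in \{0,1\}$, and the $X_t$'s are i.i.d.\ across $t$ with $\E[X_t] \leq q_v^\nc$. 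Thus $Y := \sum_t X_t$ is a sum of $T$ i.i.d.\ Bernoullis with mean $\mu \leq T q_v^\nc$, and $T \cdot f_{v\setminus e}^\nc \leq Y$.

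The next step is to split on the magnitude of $q_v^\nc$. If $q_v^\nc \geq \eps$, I would apply a multiplicative Chernoff bound to obtain $\Pr[Y \geq (1+\eps)T q_v^\nc] \leq \exp(-\eps^2 T q_v^\nc / 3) \leq \exp(-\eps^3 T / 3)$. If $q_v^\nc < \eps$, since $\mu < T\eps$ and $T\eps(1+\eps) \geq \mu + T\eps^2$, an additive Hoeffding bound yields $\Pr[Y \geq T\eps(1+\eps)] \leq \exp(-2T\eps^4)$. For the choice $T = \Theta(\log^2(1/\eps)/(\eps^4 p))$ made in Algorithm~\ref{alg:matching_sparsify}, both failure probabilities are comfortably below $\eps$; indeed the $1/p$ factor in $T$ (which is really needed for Claim~\ref{claim:a2}) makes the bounds very loose here.

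Combining the two cases, with probability at least $1-\eps$ we have $f_{v\setminus e}^\nc \leq (1+\eps)\max\{q_v^\nc, \eps\}$, and since $\eps \leq (1+\eps)\max\{q_v^\nc, \eps\}$ holds trivially, this implies the claimed bound $\max\{f_{v\setminus e}^\nc, \eps\} \leq (1+\eps)\max\{q_v^\nc, \eps\}$. The main (and minor) difficulty is matching the right concentration tool to each regime: multiplicative Chernoff is too weak when $\mu \ll 1$, while additive Hoeffding is suboptimal when $\mu$ is large; bookkeeping to ensure both cases clear the $\eps$ threshold with the stated $T$ is the only thing to check carefully.
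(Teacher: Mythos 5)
The paper does not prove this claim; it is imported verbatim from \cite{behnezhad2019stochastic} (their Claim~A.5) and used as a black box, so there is no in-paper proof to compare against. That said, your self-contained argument is correct. The key reduction --- upper-bounding $f_{v\setminus e}^\nc$ by $\frac{1}{T}\sum_t X_t$ where $X_t = \sum_{e'\in\nc\cap N(v),\,e'\neq e}\mathbbm{1}[e'\in Q_t]$, then observing that each $Q_t$ is a matching so $X_t\in\{0,1\}$, giving a Binomial with parameter at most $q_v^\nc$ --- is exactly the right structural observation, and it is what makes a clean concentration bound possible. The case split at $q_v^\nc \gtrless \eps$ with multiplicative Chernoff in the large-mean regime and additive Hoeffding in the small-mean regime is also the natural way to handle the $\max\{q_v^\nc,\eps\}$ truncation; both failure probabilities come out far below $\eps$ for the stated $T=\Theta\bigl(\log^2(1/\eps)/(\eps^4 p)\bigr)$, as you note. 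Two small points worth being explicit about: (i) when applying Chernoff with threshold $(1+\eps)Tq_v^\nc$ while the true mean satisfies $\mu\leq Tq_v^\nc$, one should state that the upper-tail bound remains valid when the nominal mean is replaced by any upper bound on the true mean (standard, via the MGF $\E[e^{\lambda Y}]\leq e^{M(e^\lambda-1)}$ for $M\geq\mu$); and (ii) the final step $\max\{f_{v\setminus e}^\nc,\eps\}\leq(1+\eps)\max\{q_v^\nc,\eps\}$ additionally needs $\eps\leq(1+\eps)\max\{q_v^\nc,\eps\}$, which you correctly flag as trivial. Neither is a gap.
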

For any non-crucial edge $e=(u,v) \in \nc$, we let $\tilde x_{v\setminus e} = \sum_{e': e'\in \nc\setminus e, v \in e'} \tilde  x_{'e}$ and $\tilde x_{u\setminus e} = \sum_{e': e'\in \nc\setminus e, u \in e'} \tilde  x_{'e}$.  For each $e\in \nc$, we define $s_e'$ as follows: 
\begin{equation}\label{eq:worst-scaling-factor}
	s'_e := \min \left\{ 1, \frac{\max\{q^\nc_v , \epsilon\}}{\tilde{x}_{v\setminus e} +\frac{2\tau}{p} },\frac{\max\{q^\nc_u, \epsilon\}}{\tilde{x}_{u\setminus e}+\frac{2\tau}{p} }  \right\}
\end{equation}
First, we observe $\tilde x_e \leq \frac{2\tau}{p}$, therefore, $s'_e \leq s_e$ with probability $1$. In next claim, we show that $s_e'$ close to $1$ with high probability.
\begin{claim}\label{claim:s-eis'1}
For any non-crucial edge $e\in \nc$,
\begin{align*}
	\Pr[s_e' \geq 1-2\eps] \geq 1-2\eps.
\end{align*}
\end{claim}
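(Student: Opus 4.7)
The plan is to reduce the claim to two separate vertex-level bounds and invoke Claim~\ref{claim:a5} for each. Observe first that $s_e' \geq 1-2\epsilon$ holds if and only if both vertex-side ratios are at least $1-2\epsilon$, i.e.
\[
\tilde{x}_{v\setminus e} + \tfrac{2\tau}{p} \;\leq\; \tfrac{\max\{q_v^\nc,\epsilon\}}{1-2\epsilon}
\qquad\text{and}\qquad
\tilde{x}_{u\setminus e} + \tfrac{2\tau}{p} \;\leq\; \tfrac{\max\{q_u^\nc,\epsilon\}}{1-2\epsilon}.
\]
By a union bound, it suffices to establish each of these with probability at least $1-\epsilon$; I will focus on the endpoint $v$, as the argument at $u$ is identical.

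The next step is to control $\tilde{x}_{v\setminus e}$ by $f_{v\setminus e}^\nc$, and then $f_{v\setminus e}^\nc$ by $q_v^\nc$. For the first reduction, note that the definition $\tilde{x}_{e'} = \min\{f_{e'}/p,\, 2\tau/p\}$ gives the deterministic pointwise bound $\tilde{x}_{e'} \leq f_{e'}/p$ for every non-crucial edge $e'\neq e$ incident to $v$; summing over these edges yields $\tilde{x}_{v\setminus e} \leq f_{v\setminus e}^\nc / p$ with probability one. For the second reduction, Claim~\ref{claim:a5} (specialized to the endpoint $v$ and excluding the edge $e$) provides $\max\{f_{v\setminus e}^\nc, \epsilon\} \leq (1+\epsilon)\max\{q_v^\nc,\epsilon\}$ with probability at least $1-\epsilon$. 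Chaining these two inequalities bounds $\tilde{x}_{v\setminus e}$ by $(1+\epsilon)\max\{q_v^\nc,\epsilon\}/p$ on the good event.

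Finally, I plug in the choice $\tau = \tfrac{\epsilon^3 p}{20\log(1/\epsilon)}$, so that the additive slack $\tfrac{2\tau}{p}=\tfrac{\epsilon^3}{10\log(1/\epsilon)}$ is much smaller than $\epsilon \leq \max\{q_v^\nc,\epsilon\}$. A short algebraic manipulation then converts the bound $(1+\epsilon)\max\{q_v^\nc,\epsilon\}/p + o(\epsilon)$ on $\tilde{x}_{v\setminus e}+\tfrac{2\tau}{p}$ into the desired inequality with $1/(1-2\epsilon)$ on the right side, by absorbing the $(1+\epsilon)$ factor and the lower-order slack into the $2\epsilon$ budget on the denominator. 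A final union bound over the two endpoints $u$ and $v$ gives $\Pr[s_e' \geq 1-2\epsilon]\geq 1-2\epsilon$, completing the argument.

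The main obstacle is the case $q_v^\nc < \epsilon$, in which the ``$\max$'' term is dominated by $\epsilon$ and the additive slack $2\tau/p$ could otherwise swamp the numerator. The specific choice of $\tau$ is tailored so that $2\tau/p$ is substantially smaller than $\epsilon$, which is what makes both the $q_v^\nc \geq \epsilon$ and $q_v^\nc < \epsilon$ branches of the case analysis go through uniformly. The rest of the proof is essentially deterministic bookkeeping once Claim~\ref{claim:a5} is in hand.
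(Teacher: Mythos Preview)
Your overall plan—reduce $s_e' \geq 1-2\epsilon$ to two per-endpoint inequalities, invoke Claim~\ref{claim:a5} at each endpoint, absorb the additive $2\tau/p$ term using $\tau/p \ll \epsilon$, and finish with a union bound over $u$ and $v$—is exactly the route the paper takes. However, there is a genuine arithmetic gap in your chaining step.

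You correctly derive the deterministic bound $\tilde{x}_{v\setminus e} \leq f_{v\setminus e}^{\nc}/p$ from $\tilde{x}_{e'} = \min\{f_{e'}/p,\,2\tau/p\}$, and then combine it with Claim~\ref{claim:a5} to obtain $\tilde{x}_{v\setminus e} \leq (1+\epsilon)\max\{q_v^{\nc},\epsilon\}/p$ on the good event. But this upper bound is larger than the target $\max\{q_v^{\nc},\epsilon\}/(1-2\epsilon)$ by a factor of order $1/p$ (recall $p \leq \epsilon^4$), so no ``short algebraic manipulation'' can close the gap: you would need $(1+\epsilon)(1-2\epsilon) \leq p$, which is plainly false. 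The paper's proof instead applies Claim~\ref{claim:a5} directly to the quantity $\tilde{x}_{v\setminus e} + 2\tau/p$ (implicitly treating $\tilde{x}_{v\setminus e}$ and $f_{v\setminus e}^{\nc}$ as living on the same scale), and then uses $2\tau/p \leq \epsilon^2 \leq \epsilon \cdot \max\{q_v^{\nc},\epsilon\}$ to absorb the slack into a $(1+2\epsilon)$ multiplicative factor, concluding $s_e' \geq 1/(1+2\epsilon) \geq 1-2\epsilon$. Your argument goes through verbatim if Claim~\ref{claim:a5} is read as bounding $f_{v\setminus e}^{\nc}/p$ rather than $f_{v\setminus e}^{\nc}$ (equivalently, if $f_{v\setminus e}^{\nc}$ is defined with a built-in division by $p$); as you have written it, the extra $1/p$ you carry through makes the final step fail.
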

\begin{proof}
Claim~\ref{claim:a5} implies that 
\begin{align*}
 \Pr[\max \{\tilde x_{v\setminus e}  +  2\tau /p, \eps \} \leq (1+\eps)  \max \{ q_v^\nc ,\eps \} ]
	&\geq \Pr[\max \{\tilde x_{v\setminus e}  +  2\tau /p, \eps \} \leq (1+\eps)  \max \{ q_v^\nc ,\eps \} - \eps^2 ]\\
	&\geq \Pr[\max \{\tilde x_{v\setminus e}  +  2\tau /p, \eps \} \leq (1+2\eps)  \max \{ q_v^\nc ,\eps \} ]\\
	&\geq 1-\eps .
\end{align*}
Where the first inequailty holds becasue $\tau/p \leq \eps^2$ and the last inequality holds from Lemma~\ref{claim:a5}. By union-bound, we obtian 
\begin{equation*}
	\Pr\left[\min \left\{ 1, \frac{\max\{q^\nc_v, \epsilon\}}{\tilde{x}_{v\setminus e} + 2\tau/p},\frac{\max\{q^\nc_u, \epsilon\}}{\tilde{x}_{u\setminus e}+2\tau/p} \right\} \geq \frac{1}{1+2\eps} \right] =  \Pr \left[ s'_e \geq  \frac 1 {1+2\eps}\right] \geq 1-2\eps
\end{equation*}
\end{proof}
\begin{claim}\label{claim:monotonicitys-e}
	For any non-crucial edge $e\in \nc$ and $\beta < \alpha < 2\tau/p$, 
    $$\E\left[s_e\mid \tilde x_e = \alpha\right] \leq \E\left[ s_e \mid \tilde x_e = \beta \right]. $$
\end{claim}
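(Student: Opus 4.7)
\begin{proofsketch}
The plan is a coupling argument exploiting two facts: $s_e$ is a non-increasing function of both $\tilde x_v$ and $\tilde x_u$; and raising $\tilde x_e$ via a principled coupling cannot decrease either vertex-level load pointwise, despite the matching constraint correlating $e$ negatively with its neighbors.

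I will first reinterpret the conditioning. Since $\alpha,\beta < 2\tau/p$, conditioning on $\tilde x_e = \alpha$ (resp.\ $\beta$) is equivalent to conditioning on exactly $k_\alpha := \alpha p T$ (resp.\ $k_\beta := \beta p T$) of the samples $Q_1,\ldots,Q_T$ drawn in Algorithm~\ref{alg:matching_sparsify} containing $e$. I would then couple the two conditional joint laws as follows: draw a uniformly random $S^\beta \subseteq [T]$ of size $k_\beta$ and, independently, a uniformly random $T^\ast \subseteq [T]\setminus S^\beta$ of size $k_\alpha - k_\beta$, and set $S^\alpha := S^\beta \cup T^\ast$, which is then uniformly distributed over size-$k_\alpha$ subsets of $[T]$. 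For $i \in S^\beta$ set $Q_i = Q_i' \sim \sopt \mid e \in \sopt$; for $i \notin S^\alpha$ set $Q_i = Q_i' \sim \sopt \mid e \notin \sopt$; and for $i \in T^\ast$ draw independently $Q_i \sim \sopt \mid e \notin \sopt$ and $Q_i' \sim \sopt \mid e \in \sopt$. A direct verification shows that $(Q_i)$ realizes the conditional law under $\tilde x_e = \beta$ and $(Q_i')$ the one under $\tilde x_e = \alpha$.

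The pointwise comparison follows from the matching property of samples from $\sopt$. For any non-crucial edge $e' \neq e$ incident to $v$, $e' \notin Q_i'$ whenever $i \in S^\alpha$ (since $e \in Q_i'$ there and $e,e'$ share $v$), while $Q_i = Q_i'$ for $i \notin S^\alpha$; hence $\{i : e' \in Q_i'\} \subseteq \{i : e' \in Q_i\}$, and the two sets differ only on indices in $T^\ast$. Because at most one edge incident to $v$ other than $e$ can appear in any given matching $Q_i$, summing over non-crucial $e' \ni v$ yields $\hat x_{v\setminus e}^\beta - \hat x_{v\setminus e}^\alpha \leq |T^\ast|/(pT) = \alpha - \beta$ for the untruncated quantities $\hat x_{e'} := f_{e'}/p$. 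A three-case analysis on whether $\hat x_{e'}^\alpha$ and $\hat x_{e'}^\beta$ lie below or above the truncation threshold $2\tau/p$ preserves this inequality for $\tilde x_{v\setminus e}$, giving $\tilde x_v^\alpha = \alpha + \tilde x_{v\setminus e}^\alpha \geq \beta + \tilde x_{v\setminus e}^\beta = \tilde x_v^\beta$ pointwise under the coupling, and symmetrically $\tilde x_u^\alpha \geq \tilde x_u^\beta$. Since $s_e = \min\bigl\{1,\max\{q^\nc_v,\epsilon\}/\tilde x_v,\max\{q^\nc_u,\epsilon\}/\tilde x_u\bigr\}$ is non-increasing in both vertex loads, we obtain $s_e(\vec Q') \leq s_e(\vec Q)$ pointwise under the coupling, and taking expectations gives the claim.

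The one place where I expect the argument to require some care is the interplay between the coupling and the truncation at $2\tau/p$: a naive ``sum the componentwise inequalities'' step is fragile, but the edge-by-edge three-case analysis of $\tilde x_{e'}^\beta - \tilde x_{e'}^\alpha$ versus $\hat x_{e'}^\beta - \hat x_{e'}^\alpha$ is elementary and closes the gap.
\end{proofsketch}
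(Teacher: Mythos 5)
Your proof is correct and follows the same essential idea as the paper's: both exploit the fact that, conditional on $e \notin Q_i$, the samples $Q_i$ are i.i.d., and both reduce the claim to a pointwise comparison showing that increasing the number of samples containing $e$ by $T(\alpha-\beta)p$ increases $\tilde x_v$ (and $\tilde x_u$), because the lost contribution from neighboring non-crucial edges over those indices is at most $(\alpha-\beta)$ — each sample being a matching contributes at most one non-$e$ edge at $v$. The paper proves this via a distributional-equality manipulation of $\Pr[\tilde x_v<\theta_v \wedge \tilde x_u<\theta_u \mid \tilde x_e=\cdot]$, identifying the extra samples with extra i.i.d.\ terms and bounding them by $1/T$ each; you make the same comparison explicit as a monotone coupling of the two conditional laws.

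Where you go beyond the paper is the handling of the per-edge truncation $\tilde x_{e'}=\min\{\hat x_{e'},2\tau/p\}$. The paper's proof writes $\tilde x_v$ as $\tilde x_e + \sum_{i\notin Q^e} y^v_i/T$ (also omitting a $1/p$ in the scaling), which silently replaces the truncated vertex load $\tilde x_{v\setminus e}=\sum_{e'}\min\{\hat x_{e'},2\tau/p\}$ by the untruncated $\hat x_{v\setminus e}$. Stochastic domination of $\hat x_v$ does not automatically transfer to $\tilde x_v$, since the latter is not a monotone function of the former alone; one genuinely needs the per-edge pointwise coupling you describe, followed by the edge-by-edge observation that $\tilde x_{e'}^\beta - \tilde x_{e'}^\alpha \leq \hat x_{e'}^\beta - \hat x_{e'}^\alpha$ (your three cases), which then sums to $\leq \alpha-\beta$ and gives $\tilde x_v^\alpha \geq \tilde x_v^\beta$ pointwise. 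So your version both makes the paper's implicit coupling explicit and patches a small but real gap around truncation; the two are otherwise the same argument.
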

\begin{proof}
    Consider for an arbitrary non-crucial edge $e=(u,v) \in \nc$. Let $Q^e$ be the set of samples containing $e$, i.e. $Q^{e}:=\{i \in [T] : e \in Q_i\}$ and $\gamma_i$ be a probability distribution on $\{0,1\}^2$ such that for a sample $(y^v, y^u) \sim \gamma_i$ we have
    $$y^v = \mathbbm{1}[(v,u') \in Q_i \text{ for some } u'\neq u] \qquad y^u = \mathbbm{1}[(v',u) \in Q_i \text{ for some } v'\neq v]$$
      when we condition on $e \notin Q_i$. Observe that $\gamma_i$'s are independent and identical distributions defined on a probability space when $e\notin Q_i$. Let $(y^v_i, y^u_i) \sim \gamma_i$ for all $i \in [T]$. For any $\theta_v > 0$ and $\theta_u > 0$
   
  % {
    \begin{align*}
        \Pr\left[\tilde x_v < \theta_v \bigwedge \tilde x_u < \theta_u \mid \tilde x_e = \beta\right] 
            &= \Pr\left[ \tilde x_e + \sum_{i \in [T]\setminus {Q^e}} \frac{y^v_i}{T} < \theta_v \bigwedge \tilde x_e + \sum_{i \in [T]\setminus {Q^e}}\frac{y^u_i}{T} < \theta_u \mid \tilde x_e=\beta\right]\\
            &= \Pr\left[ \beta + \sum_{i=1}^{T(1-\beta)} \frac{y^v_i}{T} < \theta_v \bigwedge \beta + \sum_{i=1}^{T(1-\beta)} \frac{y^u_i}{T} < \theta_u\right]\\
            &\geq \Pr\left[ \alpha + \sum_{i=1}^{T(1-\alpha)} \frac{y^v_i}{T} < \theta_v \bigwedge  \alpha + \sum_{i=1}^{T(1-\alpha)} \frac{y^u_i}{T} < \theta_u\right]\\
            &= \Pr\left[ \alpha + \sum_{i \in [T]\setminus {Q^e}} \frac{y^v_i}{T} < \theta_v \bigwedge \alpha + \sum_{i \in [T]\setminus {Q^e}} \frac{y^u_i}{T} < \theta_u \mid |{Q^e}| = T \cdot \alpha\right]\\
            &= \Pr\left[ \tilde x_e + \sum_{i \in [T]\setminus {Q^e}} \frac{y^v_i}{T} < \theta_v \bigwedge \tilde x_e + \sum_{i \in [T]\setminus {Q^e}} \frac{y^u_i}{T} < \theta_u \mid \tilde x_e=\alpha\right]\\
            &=\Pr\left[\tilde x_v < \theta_v \bigwedge \tilde x_u < \theta_u \mid \tilde x_e = \alpha\right].
    \end{align*}
%}
%\npedit{
%    \begin{align*}
%	\Pr\left[\tilde x_v < \theta_v \bigwedge \tilde x_u < \theta_u \mid \tilde x_e = \beta\right] 
%	&= \Pr\left[ \tilde x_e + \sum_{i \in [T]\setminus {Q^e}} \frac{y^v_i}{T} < \theta_v \bigwedge \tilde x_e + \sum_{i \in [T]\setminus {Q^e}}\frac{y^u_i}{T} < \theta_u \mid \tilde x_e=\beta\right]\\
%	&= \sum_{Q^e:|Q^e| = \beta}\Pr\left[ \beta + \sum_{i=1}^{T(1-\beta)} \frac{y^v_i}{T} < \theta_v \bigwedge \beta + \sum_{i=1}^{T(1-\beta)} \frac{y^u_i}{T} < \theta_u\right]\\
%	&= \sum_{Q^e |Q^e| = \alpha T} \sum_{\substack{S\subseteq Q^e\\|S|=T(\alpha - \beta )}}  \\
%	&\geq\sum_{Q^e:|Q^e| = \beta} \Pr\left[ \alpha + \sum_{i=1}^{T(1-\alpha)} \frac{y^v_i}{T} < \theta_v \bigwedge  \alpha + \sum_{i=1}^{T(1-\alpha)} \frac{y^u_i}{T} < \theta_u\right]\\
%	&= \Pr\left[ \alpha + \sum_{i \in [T]\setminus {Q^e}} \frac{y^v_i}{T} < \theta_v \bigwedge \alpha + \sum_{i \in [T]\setminus {Q^e}} \frac{y^u_i}{T} < \theta_u \mid |{Q^e}| = T \cdot \alpha\right]\\
%	&= \Pr\left[ \tilde x_e + \sum_{i \in [T]\setminus {Q^e}} \frac{y^v_i}{T} < \theta_v \bigwedge \tilde x_e + \sum_{i \in [T]\setminus {Q^e}} \frac{y^u_i}{T} < \theta_u \mid \tilde x_e=\alpha\right]\\
%	&=\Pr\left[\tilde x_v < \theta_v \bigwedge \tilde x_u < \theta_u \mid \tilde x_e = \alpha\right].
%\end{align*}
%}
The second equality and the second last equality follows from the fact that $y_i^v$ for all $i \in[T]\setminus Q^e$ are independent and identically distributed. The inequality follows becasue $\alpha - \beta  - \sum_{i=1}^{T(\alpha - \beta )}\frac{y_i^v}{T}\geq 0$ and $\alpha - \beta  - \sum_{i=1}^{T(\alpha - \beta )}\frac{y_i^u}{T}\geq 0$. Combining the above inequality, for any $\theta >0$, we obtain,
\begin{align*}
 \Pr[s_e \geq \theta \mid \tilde x_e = \beta] \geq  \Pr[s_e \geq \theta \mid \tilde x_e = \alpha] \quad \text{and} \quad
\E\left[s_e \mid \tilde x_e=\beta\right] \geq \E\left[s_e \mid \tilde x_e = \alpha\right].
\end{align*} 
    
\end{proof}

\begin{claim}\label{claim:monotonicitys}
	For any non-crucial edge $e=(u,v)\in \nc$,
    $$ \E\left[s_e \mid \tilde x_e = \frac{2\tau}{p}-\frac{1}{T}\right] \geq \E[s_e'] - \eps \cdot q_e . $$ 
\end{claim}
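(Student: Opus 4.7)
The plan is to rewrite $\E[s_e']$ by conditioning on $\tilde x_e$, bound the contribution of the capped case $\tilde x_e = 2\tau/p$ using Claim~\ref{claim:a2}, and argue that the non-capped portion is maximized (up to allowed error) when $\tilde x_e$ is as large as possible, namely at $\tilde x_e = 2\tau/p - 1/T$. A pointwise inequality $s_e \geq s_e'$ then converts the resulting bound on $\E[s_e' \mid \tilde x_e = 2\tau/p - 1/T]$ into the desired bound on $\E[s_e \mid \tilde x_e = 2\tau/p - 1/T]$.

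Two ingredients drive the argument. First, $s_e \geq s_e'$ holds pointwise: since $\tilde x_e \leq 2\tau/p$ by construction, the denominators $\tilde x_v = \tilde x_{v\setminus e} + \tilde x_e$ and $\tilde x_u = \tilde x_{u\setminus e} + \tilde x_e$ appearing in $s_e$ are respectively bounded above by $\tilde x_{v\setminus e} + 2\tau/p$ and $\tilde x_{u\setminus e} + 2\tau/p$, which are the corresponding denominators in $s_e'$. In particular $\E[s_e \mid \tilde x_e = \alpha] \geq \E[s_e' \mid \tilde x_e = \alpha]$ for every $\alpha$. Second, I would establish a monotonicity analogue of Claim~\ref{claim:monotonicitys-e} for $s_e'$: for $\beta < \alpha < 2\tau/p$,
\[\E[s_e' \mid \tilde x_e = \beta] \leq \E[s_e' \mid \tilde x_e = \alpha].\]
Note that the direction here flips relative to Claim~\ref{claim:monotonicitys-e}, because $s_e'$ depends on $\tilde x_{v\setminus e}, \tilde x_{u\setminus e}$ alone (and not, as $s_e$ does, on $\tilde x_e$ additively through $\tilde x_v, \tilde x_u$). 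The proof reuses the coupling from Claim~\ref{claim:monotonicitys-e}: conditional on $\tilde x_e = \alpha$, the random variable $\tilde x_{v\setminus e}$ is distributed as $\tfrac{1}{T}\sum_{i=1}^{T(1-\alpha)} y_i^v$ with $\{y_i^v\}$ i.i.d.\ Bernoullis, and coupling the sequences $\{y_i^v\}, \{y_i^u\}$ across values of $\alpha$ makes $(\tilde x_{v\setminus e}, \tilde x_{u\setminus e})$ coordinate-wise non-increasing in $\alpha$; since $s_e'$ is coordinate-wise non-increasing in these two variables, the monotonicity follows.

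Given both ingredients, the main computation is a law-of-total-expectation manipulation:
\begin{align*}
    \E[s_e'] &= \Pr\!\left[\tilde x_e = \tfrac{2\tau}{p}\right] \E\!\left[s_e' \mid \tilde x_e = \tfrac{2\tau}{p}\right] + \sum_{\alpha < 2\tau/p} \Pr[\tilde x_e = \alpha]\, \E[s_e' \mid \tilde x_e = \alpha] \\
    &\leq \eps\, q_e \cdot 1 + \E\!\left[s_e' \mid \tilde x_e = \tfrac{2\tau}{p} - \tfrac{1}{T}\right] \\
    &\leq \eps\, q_e + \E\!\left[s_e \mid \tilde x_e = \tfrac{2\tau}{p} - \tfrac{1}{T}\right],
\end{align*}
where the first inequality uses Claim~\ref{claim:a2} to bound $\Pr[\tilde x_e = 2\tau/p] \leq \eps q_e$, uses $s_e' \leq 1$ for the capped summand, and uses the monotonicity ingredient to replace each $\E[s_e' \mid \tilde x_e = \alpha]$ with $\alpha < 2\tau/p$ by its upper bound $\E[s_e' \mid \tilde x_e = 2\tau/p - 1/T]$; the second inequality uses the pointwise bound $s_e \geq s_e'$. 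Rearranging yields the claim.

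The main technical obstacle will be verifying the monotonicity ingredient cleanly: although the coupling is essentially the same as in Claim~\ref{claim:monotonicitys-e}, one has to track $\tilde x_{v\setminus e}$ and $\tilde x_{u\setminus e}$ jointly and exploit that $s_e'$ is monotone in each coordinate so that a componentwise coupling suffices for the $\min$ that defines $s_e'$. The rest of the argument is a short law-of-total-expectation computation paired with the tail bound from Claim~\ref{claim:a2}.
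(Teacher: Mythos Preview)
Your proof is correct and essentially matches the paper's argument. The only difference is organizational: the paper does a single coupling step comparing $s_e$ at $\tilde x_e = \alpha$ directly to $s_e'$ averaged over $\tilde x_e < 2\tau/p$, whereas you factor this into the pointwise bound $s_e \geq s_e'$ plus a monotonicity statement for $\E[s_e' \mid \tilde x_e = \beta]$ in $\beta$; both rely on the same coupling from Claim~\ref{claim:monotonicitys-e} and the same tail bound from Claim~\ref{claim:a2}.
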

\begin{proof}
    \begin{align*}
    \E[s_e'] &= \E\left[s_e' \mid \tilde x_e = \frac{2\tau}{p} \right] \cdot \Pr\left[\tilde x_e = \frac{2\tau}{p}\right] + \E\left[s_e' \mid \tilde x_e < \frac{2\tau}{p} \right] \cdot \Pr\left[\tilde x_e < \frac{2\tau}{p}\right]\\
  &\leq  \Pr\left[\tilde x_e = \frac{2\tau}{p}\right] + \E\left[s_e' \mid \tilde x_e < \frac{2\tau}{p} \right]\\
      &\leq \eps \cdot q_e + \E\left[s_e' \mid \tilde x_e < \frac{2\tau}{p} \right],
    \end{align*}

    where the last inequality follows from Claim~\ref{claim:a2}. Hence, we obtain
    	\begin{align*}
        \E\left[s_e' \mid \tilde x_e < \frac{2\tau}{p} \right] \geq \E[s_e'] - \eps \cdot q_e.
        \end{align*}
    Set $\alpha=\frac{2\tau}{p}-\frac{1}{T}$, then for any $\theta_u, \theta_v >0$, we have
    \begin{align*}
        \Pr\left[\tilde x_v < \theta_v \bigwedge \tilde x_u < \theta_u \mid \tilde x_e = \alpha\right]
        &= \Pr\left[ \alpha + \sum_{i=1}^{T(1-\alpha)} \frac{y^v_i}{T} < \theta_v \bigwedge  \alpha + \sum_{i=1}^{T(1-\alpha)} \frac{y^u_i}{T} < \theta_u\right]\\
        &\geq \Pr\left[ \alpha + \sum_{i=1}^{T(1-\alpha)} \frac{y^v_i}{T} < \theta_v \bigwedge  \alpha + \sum_{i=1}^{T(1-\alpha)} \frac{y^u_i}{T} < \theta_u\right] \cdot \sum_{0\leq \beta\leq \alpha} \Pr[\tilde x_e = \beta ]\\
        &\geq \sum_{0\leq \beta \leq \alpha} \Pr\left[ \frac{2\tau}{p} + \sum_{i=1}^{T(1-\beta)} \frac{y^v_i}{T} < \theta_v \bigwedge \frac{2\tau}{p} + \sum_{i=1}^{T(1-\beta)} \frac{y^u_i}{T} < \theta_u \right] \cdot \Pr[\tilde x_e = \beta]\\
        &= \sum_{0\leq \beta \leq  \alpha} \Pr\left[ \frac{2\tau}{p} + \tilde x_{v \setminus e} < \theta_v \bigwedge \frac{2\tau}{p} + \tilde x_{u \setminus e }  < \theta_u \mid \tilde x_e = \beta\right] \cdot \Pr[\tilde x_e = \beta]\\
        &= \Pr\left[\frac{2\tau}{p} + \tilde x_{v \setminus e} < \theta_v \bigwedge \frac{2\tau}{p} + \tilde x_{u \setminus e} < \theta_u \mid \tilde x_e < \frac{2\tau}{p}\right].
      \end{align*}
    The first inequality holds becasue $\sum_{0\leq \beta\leq \alpha}\Pr[\tilde x_e = \beta ] \leq 1$. The second ineqialty holds because $\frac{2\tau}{p}>\alpha $. Moreover, the first equality and the second last equality follows from the fact that $y_i^v$ for all $i \in[T]\setminus Q^e$ are independent and identically distributed.  Therefore, by defination of $s_e$ and $s_e'$, we obatin, 
    \[
    	\Pr[s_e \geq \theta \mid \tilde x_e = \alpha] \geq  \Pr\left[s'_e \geq \theta \mid \tilde x_e < \frac{2\tau}{p} \right]
    \]
    and so
    \[
    	\E\left[s_e \mid \tilde x_e=\alpha\right] \geq \E\left[s'_e \mid \tilde x_e < \frac{2\tau}{p}\right]\geq \E[s_e'] - \eps \cdot q_e. 
    \]

\end{proof}
We are now ready to prove the main theorem. 

\begin{proof}[Proof of Lemma~\ref{lem:non-crucial}]
For the sake of exposition, we let $\alpha =\frac{2\tau}{p}-\frac{1}{T}  $. 
\begin{align*}
	\Pr[e\in M_\nc] &= \E_{\tilde x_e}[ \tilde x_e \E[s_e \mid \tilde x_e]] \cdot \Pr[e\in R]\\
	&\geq \sum_{\tilde x_e \leq \alpha}\tilde x_e \cdot \E[s_e \mid \tilde x_e ] \Pr[\tilde x_e] \cdot \Pr[e\in R]\\
	&\geq \sum_{\tilde x_e \leq \alpha}\tilde x_e \cdot \E\left[s_e \mid \tilde x_e =\alpha  \right] \Pr[\tilde x_e] \cdot \Pr[e\in R]&& (\text{Claim~\ref{claim:monotonicitys-e}})\\
	&\geq \E_{\tilde x_e} \left[\tilde x_e \mid \tilde x_e \leq \alpha \right] \cdot \E\left[s_e \mid \tilde x_e =\alpha  \right] \cdot \Pr[e\in R]\\
    % &= \E\left[s_e \mid \tilde x_e =\frac{2\u}{p} \right]\cdot \E_{\tilde x_e} [\tilde x_e ]\\
	&\geq \E\left[s_e \mid \tilde x_e =\alpha  \right]\cdot (1-3\eps) \cdot q_e && (\text{Claim~\ref{claim:a2}})\\
	&\geq (\E\left[s'_e\right] - \eps \cdot q_e) \cdot (1-3\eps) \cdot q_e &&  (\text{Claim~\ref{claim:monotonicitys}})\\
	&\geq (1-2\eps )^2\cdot (1-3\eps) \cdot q_e - \eps \cdot q_e &&(\text{Claim~\ref{claim:s-eis'1}})\\
	&\geq (1-12\eps)q_e.
\end{align*}

%Combining the above inequality with Claim~\ref{claim:monotonicitys-e}, we obtain:
%\begin{equation}\label{eq:se'}
%	\Pr[s_e ' \geq 1 - 2\eps ] \geq 1-2\eps.
%\end{equation}
%Furthermore, notice that $s_e'$ does not depend on $x_e$. Hence, 

    %Lemma~4.7 of \cite{behnezhad2019stochastic} states that $\vec{x}$ is a fractional matching of $Q_{\greedy} \cap R \cap N$. Therefore, Caratheodory's decomposition yields a random matching with desired marginals. Let $M_\nc$ be the output of Algorithm~\ref{alg:non-crucial-matching}, it remains to show that for all $e \in \nc$
   % $$\Pr[e \in M_\nc \mid e \in R] \geq (1-4\epsilon) \cdot \frac{q_e}{p}.$$

    %Let $e \in \nc$ be an arbitrary non-crucial edge. We say that samples $\mathcal{S} := \{Q_1, \dots Q_T\}$ are good for $e$ if $f_e \leq 2\tau$ and $s_e \geq (1-\epsilon)$. %Claim~A.2 from \cite{behnezhad2019stochastic} implies that $\E[\min\{f_e ,2\tau\}] \geq (1-\eps) \cdot q_e$. 

   % We further prove the following claim: 
    
    %\begin{align*}
    %    \Pr[e \in M_\nc \mid e\in R] &\geq \Ex\left[x_e \mid \mathcal{S} \text{ is good} \right] \cdot \Pr[\mathcal{S} \text{ is good}] \\ 
    %        &= \Ex\left[(1-\epsilon) \cdot \frac{f_e}{p} \right] \cdot \Pr[\mathcal{S}\text{ is good}]\\
    %        &\geq \frac{q_e}{p} \cdot (1-\epsilon) \cdot (1-\Pr[s_e \leq 1-\epsilon] - \Pr[f_e \geq 2\tau])\\
    %        &\geq \frac{q_e}{p} \cdot (1-\epsilon) \cdot (1-2\epsilon-\epsilon \cdot q_e) && (\text{Claim~A.2 and Claim~A.5 of \cite{behnezhad2019stochastic}})\\
   %         &\geq \frac{q_e}{p} \cdot (1-4\epsilon).
   % \end{align*}
    Lastly, we will prove the monotonicity property \eqref{eq:monotone} of the matching constructed by Algorithm~\ref{alg:non-crucial-matching}.
Let $e\in \nc$ be an arbitrary non-crucial edge and $S \subseteq N(e)$ be an arbitrary subset of edges incident to $e$. Note that when we condition on $Q_1 \dots Q_T$ then $f_e := \sum_{i=1}^T \mathbbm{1}[e \in \optsol(Q_i)]/T$ becomes deterministic for all non-crucial edges $e \in \nc$. For the rest of the proof, we condition on the set $Q_1,\dots, Q_T$. %\ykcomment{I think conditioning on $Q_\greedy$ is not enough to compute frequencies $f_e$, we should condition on samples $Q_1, \dots Q_T$.} 

Recall that $\Pr[e \in M_\nc] = \Ex[\tilde x_e \cdot s_e]$, therefore it is enough to show that $s_e$ weakly increases once we condition on $S \cap R=\emptyset$, i.e. all edges in the set $S$ are inactive. Note that the probability and expectation are taken over the random set $R$.  Formally, we let $\mathcal D$ be the distribution of active edges $R\subseteq E$ and $\D'$ be the distribution of active edges $R\subseteq E\setminus S$ conditioned on $S\cap R = \emptyset$. We will show that,
\begin{equation}
	\label{eq:stoc_dominance}
	\E_{R \sim \mathcal D}[s_e] \leq \E_{R' \sim \mathcal D'}[s_e].
\end{equation}
We couple two probability distributions $\mathcal D$ and $\mathcal D'$ as follows: we first sample $R \sim \mathcal D$ and construct $R' = R \setminus S$. Since each edge $e\in E$ appears in the active set of edges $R\sim \D$ independently with probability $p$, it follows that $R'\sim \D'$. Therefore the coupling $(\mathcal D, \mathcal D')$ is valid with the correct marginals. We denote this joint distribution of $(\D,\D')$ as $\gamma$. 

For any sample $(R, R')\sim \gamma$, from the definition of $\tilde x_{v} ( R) \geq \tilde x_v ( R')$, where $\tilde x_v (R)$ denotes the value of $\tilde x_v$ conditioned on the set of active edges being $R$. Note that $\tilde x_e ( R)$ and $\tilde x_v (R')$ are deterministic as we have already conditioned on $Q_1 ,\dots , Q_T$. This further implies that $s_e ( R )\leq s_e (R')$ where $\tilde x_v ( R')$, $s_e ( R )$ and $ s_e (R')$ are defined similar to the $\tilde x_v (R)$.  This completes the proof. %\ykcomment{I suggest to use the notation of $s_e(R)$ instead of $s_e | R$.}
    
   % we let $s_e(R)$ ($s_e(R')$) be the random variable when we condition on $R$ ($R'$). 
    
   %RFor any sample $R \sim D'$ and $R' \sim \mathcal D'$, let $s_e(R)$ ($s_e(R')$) be the random variable when the $R$ ($R'$) is sampled, We similarly define $\tilde x_e(R)$ and $\tilde x_u(R)$. Observe that $\tilde x_e(R)$ stochastically dominates $\tilde x_e(R')$, i.e. $\tilde x_e(R) \geq \tilde x_e(R')$ and so $\tilde x_u(R) \geq \tilde x_e(R')$. This implies that $\tilde s_e(R')$ stochastically dominates $\tilde s_e(R)$, i.e. $s_e(R') \geq s_e(R)$ and proof is complete.

\end{proof}

\subsection{Proof of Proposition~\ref{prop:high_prob_events}}
\begin{claim}\label{claim:bounded_deg}
Given an instance of stochastic weighted matching $\match$ with $p\leq \eps^4$ and $0<\eps<1/2$, for any non-crucial edge $e=(u,v)\in \nc$ with $N(e) = \{e_{1},\dots, e_{k}\}$ be the incident edges on vertices $u$ and $v$ in the graph $G$, $$\Pr\left[ Q_{\crs}\cap N(e) \leq \frac{(1+\eps)\cdot 2}{p}\right] \geq 1-\epsilon$$
\end{claim}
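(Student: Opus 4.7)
The plan is a direct Chernoff concentration argument, exploiting two structural facts: that $Q_{\crs}$ is a product distribution over $E$, and that the marginals $\{q_{e'}\}_{e' \in E}$ come from a \emph{matching} distribution (so they satisfy vertex-wise degree bounds), together with the assumption $p \leq \eps^4$ coming from Lemma~\ref{lem:small_prob}.

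First, I would bound the expectation of $|Q_{\crs} \cap N(e)|$. By construction of Algorithm~\ref{alg:matching_sparsify}, each edge $e'$ is included in $Q_{\crs}$ independently with probability $q_{e'}/p$, so
\[
\mu \;:=\; \Ex[\,|Q_{\crs}\cap N(e)|\,] \;=\; \sum_{e' \in N(e)} \frac{q_{e'}}{p}.
\]
Since the stochastic optimum is a matching, for any vertex $w$ we have $\sum_{e' \ni w} q_{e'} \leq 1$. Applying this at the two endpoints $u,v$ of $e$ and noting $N(e)$ is contained in (edges incident to $u$) $\cup$ (edges incident to $v$), we get $\sum_{e' \in N(e)} q_{e'} \leq 2$, hence $\mu \leq 2/p$. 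Also, since $q_{e'} \leq p$ for every edge, each indicator $\mathbbm{1}[e' \in Q_{\crs}]$ is a genuine Bernoulli variable with parameter in $[0,1]$, and they are independent across $e'$.

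Second, I would apply the multiplicative Chernoff bound. WLOG (by padding with extra independent Bernoullis if $\mu < 2/p$, which only increases the upper-tail probability) we may assume $\mu = 2/p$. Then
\[
\Pr\!\left[\,|Q_{\crs}\cap N(e)| > (1+\eps)\cdot \tfrac{2}{p}\right] \;=\; \Pr[\,|Q_{\crs}\cap N(e)| > (1+\eps)\mu\,] \;\leq\; \exp\!\left(-\tfrac{\eps^2 \mu}{3}\right) \;=\; \exp\!\left(-\tfrac{2\eps^2}{3p}\right).
\]

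Finally, I would substitute the standing assumption $p \leq \eps^4$ from Lemma~\ref{lem:small_prob}: this gives $\tfrac{2\eps^2}{3p} \geq \tfrac{2}{3\eps^2}$, and a straightforward check shows $\exp(-2/(3\eps^2)) \leq \eps$ for every $\eps \in (0,1/2)$. The claim follows.

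I do not expect any real obstacle: the whole argument is three lines of Chernoff once the matching constraint $\sum_{e' \ni w} q_{e'} \leq 1$ pins down the expectation. The only mildly subtle point is making sure that comparing $(1+\eps)\mu$ with the target $2(1+\eps)/p$ goes in the right direction when $\mu < 2/p$; the padding trick above (or equivalently a direct tail bound of the form $\Pr[X \geq t] \leq (e\mu/t)^t$ applied with $t = 2(1+\eps)/p$) handles this cleanly.
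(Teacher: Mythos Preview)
Your proposal is correct and takes essentially the same approach as the paper: bound $\Ex[|Q_{\crs}\cap N(e)|] \leq 2/p$ using the matching constraint $\sum_{e' \ni w} q_{e'} \leq 1$ at each endpoint, then apply a multiplicative Chernoff bound together with $p \leq \eps^4$. The only cosmetic difference is how the case $\mu < 2/p$ is handled --- the paper introduces an adjusted deviation parameter $\eps' = \frac{2(1+\eps) - \sum_i q_{e_i}}{\sum_i q_{e_i}}$ and tracks it through the Chernoff exponent, whereas your padding trick achieves the same end more cleanly.
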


\begin{proof}
Let $e:=(u,v) \in \nc$ be an arbitrary non-crucial edge of the problem $\match$. For each edge $e_i\in N(e)$, let $Z_{i}$ be the indicator random variable which denotes whether the edge $e_i$ belongs to $Q_{\crs}$ or not, i.e., $Z_{i} = \mathbbm{1}[e_i \in Q_{\crs} \cap R]$. Note that $Z_i\sim Ber(\frac{q_e}{p})$ and independent across $i\in [k]$ and $\sum_{i\in [k]}\E[Z_i] = \sum_{i\in [k]} q_e /p \leq 2/p$. Therefore, by setting $\eps' = \frac{2(1+\eps) - \sum_{i\in [k]} q_{e_i}}{\sum_{i\in [k]} q_{e_i}}$ and applying Chernoff bound, we get
\begin{align*}
    \Pr\left[\sum_{i=1}^k Z_i \geq (1+\eps')\cdot \sum_{i=1}^k \frac{q_{e_i}}{p}\right] \leq \exp\left(-(\eps')^2 \cdot  \frac{\sum_{i\in [k]}q_{e_i}}{3p}\right).
\end{align*}
Now, as $\eps ' \geq \frac{2\cdot \eps}{\sum_{i\in [k]} q_{e_i}}$, we obtain,
\begin{align*}
        \Pr\left[\sum_{i=1}^k Z_i \geq (1+\eps')\cdot \sum_{i=1}^k \frac{q_{e_i}}{p}\right]&= \Pr\left[\sum_{i=1}^k Z_i \geq (1+\eps)\cdot  \frac{2}{p}\right] \\
        &\leq   \exp\left(-4\eps^2 \cdot  \frac{1}{3\eps^4\sum_{i\in [k]} q_{e_i}}\right)\\
        &\leq \exp\left( -\frac{2}{3\eps^2}\right) \leq \eps.
\end{align*}
\end{proof}

\begin{proof}[Proof of Proposition~\ref{prop:high_prob_events}]
We can express $\Pr[\mathcal E_e]$ as:
\begin{align*}
    \Pr[\mathcal E_e] &= \Pr[\mathcal E_e\mid e\notin  Q_{\crs}]\cdot \Pr[e\notin  Q_{\crs}] +  \Pr[\mathcal E_e\mid e\in  Q_{\crs}]\cdot \Pr[e\in  Q_{\crs}] \\
    &\leq \Pr[\mathcal E_e\mid e\notin  Q_{\crs}] + \Pr[e\in  Q_{\crs}] \leq \Pr[\mathcal E_e\mid e\notin  Q_{\crs}]  + \epsilon, \\
\end{align*}
where the last inequality follows due to definition of non-crucial edges, i.e. $\Pr[e\in Q_{\crs}] =q_e/p \leq \epsilon$. Combining above inequality with Claim~\ref{claim:bounded_deg}, we obtain
\begin{equation*}
    \Pr[\mathcal E_e\mid e\notin  Q_{\crs}] \geq \Pr[\mathcal E_e] -\epsilon \geq 1-2\epsilon.
\end{equation*}
\end{proof}

 \end{document}